\newcolumntype{Y}{>{\centering\arraybackslash}X}
\renewcommand\thesubfigure{(\alph{subfigure})}
\tikzset{
    ib/.style={
    rectangle split, rectangle split parts=2,
    rectangle split draw splits=false,
    rounded corners=\blockroundedcorners,
    line width=\blocklinewidth,
    minimum width=\IBwidth,
    text width=\IBwidth-\blocklinewidth*2,
    draw=#1,
    rectangle split part fill={#1, none},
    text=white, text centered
    },
    every two node part/.style={align=left, text=black}
}
\newlength\IBheaderheight \setlength\IBheaderheight{1cm}
\newlength\IBwidth \setlength\IBwidth{1.4cm}
\newlength\IBheight \setlength\IBheight{2cm}
\newlength\blockroundedcorners\setlength\blockroundedcorners{0.1cm}
\newlength\blocklinewidth\setlength\blocklinewidth{0.4mm}
\newlength\NodeDistanceW \setlength\NodeDistanceW{0.05cm}
\newlength\NodeDistanceH \setlength\NodeDistanceH{1.2cm}
\definecolor{mycolor}{RGB}{100,100,100}
\definecolor{myteal}{RGB}{0,190,200}
\newcommand\ImageNode[4][]{
  \node[ib={mycolor},#1,rounded corners=0.1cm] (#2)
  {{\footnotesize#3}\nodepart{two}\centering\includegraphics[width=\IBwidth-0.1cm]{#4}};
}
\newcommand{\bi}{\begin{itemize}}
\newcommand{\ei}{\end{itemize}}
\newcommand{\be}{\begin{enumerate}}
\newcommand{\ee}{\end{enumerate}}
\newcommand{\bb}{\begin{block}}
\newcommand{\eb}{\end{block}}
\newcounter{row}
\renewcommand\p@subfigure{\thefigure.}
\newlength{\tempdima}
\newcommand{\rowname}[1]{
    \rotatebox{90}{\makebox[\tempdima][c]{#1}}
}
\def\gameSubfigWidth{0.33\textwidth}
\newcommand{\gameRow}[3]{
    \settoheight{\tempdima}{\includegraphics[width=\gameSubfigWidth]{figs/#1_trajs.png}}%
    \rowname{#2}&
    \subcaptionbox{\label{fig:#3_trajs}}{%
        \centering%
        \includegraphics[width=\gameSubfigWidth]{figs/#1_trajs.png}}&
    \subcaptionbox{\label{fig:#3_crossings}}{%
        \centering%
        \includegraphics[width=\gameSubfigWidth]{figs/#1_crossings.png}%
    }&
    \subcaptionbox{\label{fig:#3_avg_opponent}}{%
        \centering%
        \includegraphics[width=\gameSubfigWidth]{figs/#1_avg_opponent.pdf}%
    }\\
    \stepcounter{row}%
}
\crefname{equation}{}{}
\newtheorem{theorem}{Theorem}[section]
\newtheorem{proposition}[theorem]{Proposition}
\newtheorem{definition}[theorem]{Definition}
\newtheorem{lemma}[theorem]{Lemma}
\newtheorem{remark}[theorem]{Remark}
\renewcommand\p@subfigure{\thefigure}
\title{Evolutionary Dynamics and\\$\Phi$-Regret Minimization in Games}
\author{\name Georgios Piliouras$^{1}$ \email georgios.piliouras@gmail.com \\
       \name Mark Rowland$^{2}$ \email markrowland@deepmind.com \\
       \name Shayegan Omidshafiei$^{2}$ \email somidshafiei@deepmind.com \\
       \name Romuald Elie$^{2}$ \email relie@deepmind.com \\
       \name Daniel Hennes$^{2}$ \email hennes@deepmind.com \\
       \name Jerome Connor$^{2}$ \email jeromeconnor@deepmind.com \\
       \name Karl Tuyls$^{2}$\email karltuyls@deepmind.com \\
       \addr $^{1}$SUTD\\
       \addr $^{2}$DeepMind\\
       }
\begin{document}

\maketitle
\begin{abstract}
Regret has been established as a foundational concept in online learning, and likewise has important applications in the analysis of learning dynamics in games.
Regret quantifies the difference between a learner's performance against a baseline in hindsight.
It is well-known that regret-minimizing algorithms converge to certain classes of equilibria in games;
however, traditional forms of regret used in game theory predominantly consider baselines that permit deviations to deterministic actions or strategies. 
In this paper, we revisit our understanding of regret from the perspective of deviations over partitions of the full \emph{mixed} strategy space (i.e., probability distributions over pure strategies), under the lens of the previously-established $\Phi$-regret framework, which provides a continuum of stronger regret measures.
Importantly, $\Phi$-regret enables learning agents to consider deviations from and to mixed strategies, generalizing several existing notions of regret such as external, internal, and swap regret, and thus broadening the insights gained from regret-based analysis of learning algorithms. 
We prove here that the well-studied evolutionary learning algorithm of replicator dynamics (RD) seamlessly minimizes the strongest possible form of $\Phi$-regret in generic $2 \times 2$ games, without any modification of the underlying algorithm itself. 
We subsequently conduct experiments validating our theoretical results in a suite of 144 $2 \times 2$ games wherein RD exhibits a diverse set of behaviors.
We conclude by providing empirical evidence of $\Phi$-regret minimization by RD in some larger games, hinting at further opportunity for $\Phi$-regret based study of such algorithms from both a theoretical and empirical perspective.
\end{abstract}

\section{Introduction}\label{sec:intro}
Understanding the behavior of learning dynamics in games is a fundamental problem studied in game theory, online learning theory, dynamical systems, and multiagent systems. 
Numerous works have been developed in the area that describe connections between these distinct fields \citep{Tuyls03,Tuyls06,TuylsP07,cesa2006prediction,nisan_agt_book_2007,Chang07,WunderLB10,klos2010evolutionary,Galstyan13,BloembergenTHK15,GattiR16,SrinivasanLZPTM18,CelliMF020,Vlatakis-Gkaragkounis20}.
Given this wealth of prior work, a natural question emerges: 
have we converged to a more-or-less complete mathematical language that allows us to accurately describe the behavior of learning dynamics (at least in simple small games), or are there remaining characteristics of these dynamics that are not yet well understood or fully described by existing concepts?

Arguably, one of the most important concepts in this area is that of \emph{regret} \citep{Auer95,Freund99,Chang05,Kleinberg09multiplicativeupdates,Roughgarden09,monnot2017limits}.
Regret is a basic definition of online learning with numerous variants, each with specific associated properties. 
Regret minimization forms the basis of a key class of algorithms for learning in games, and is concurrently central to the more general field of online sequential prediction \citep{cesa2006prediction,Chang07}.
Informally, regret is defined as the difference in the cumulative performance of an agent against a baseline, which typically allows hindsight deviations (or swaps) of one or more of the agent's actions with alternative actions.
There exist various notions of regret, ranging from basic forms such as external regret~\citep{hannan1957approximation}, which measures a learner's performance against the best fixed action in hindsight, to stronger variants such as internal regret~\citep{foster1998asymptotic} and swap-regret~\citep{blum2007external}, where the deviating action of the agent is allowed to be a function of their originally chosen action.
These notions of regret have typically been defined within the context of single-agent online learning, and subsequently used to analyze behavior of decision-making in multiagent games.
The benefit of regret-minimizing algorithms, when applied to games, is that they typically yield time-average convergence to a corresponding set of equilibria.
For example, in general-sum games, (external) regret minimization by all players yields time-average convergence to coarse correlated equilibria~\citep{nisan_agt_book_2007};
in two-player zero-sum games, this yields convergence to Nash equilibria, and has driven key successes towards solving games such as Poker~\citep{zinkevich2007regret,sandholm2010state,moravvcik2017deepstack,brown2018superhuman,2020_rebel_brown}.

Despite these insights, traditional notions of regret such as those described above are, at times, too general to provide strong guarantees for certain learning dynamics in games.
Even in zero-sum games, for example, standard regret-minimizing dynamics such as multiplicative weights update~\citep{arora2012multiplicative} and its well-known continuous-time limit counterpart, the replicator dynamics (RD), can be non-convergent or even chaotic in their  real-time behaviors~\citep{bailey2018multiplicative,cheung2019vortices,cheung2020chaos,sato2002chaos,piliouras2014optimization}, despite their time-average convergence.
Emergence of chaos can even materialize in simple congestion games~\citep{palaiopanos2017multiplicative,chotibut2020route}, and perhaps surprisingly, such behaviors can occur despite these algorithms' time-average convergence to equilibria~\citep{Freund99}.
In fact, such prototypical learning algorithms can be non-equilibrating even in the smallest of environments, $2\times2$  games with two agents and two strategies, e.g., Matching Pennies~\citep{papadimitriou2016nash,bailey2019fast,chotibut20family}. 
Overall, the behavior of well-known dynamics even in small games is diverse and non-trivial, and clearly cannot be fully inferred from standard regret analysis alone.

The clear gap between traditional regret theory and the observed empirical behaviors of these well-studied algorithms hints at the enticing possibility of using stronger notions of regret to better understand dynamical behaviors at finer levels of granularity.
As it is well-known that regret-minimizing algorithms must, in general, randomize (i.e., output probability distributions over actions, rather than deterministic actions), such a stronger notion could consider a regret definition where agents may condition their deviating behavior not merely on deterministic actions (e.g., as in traditional notions of regret), but using more refined deviation functions.
In investigating this, a line of prior work has analyzed a stronger concept known as $\Phi$-regret~\citep{greenwald2003general,gordon2008no,stoltz2007learning,hongpractical}, which encapsulates several more general classes of deviations in contrast to traditional regret notions.
However, as later detailed, while some of these works have introduced algorithms for $\Phi$-regret minimization, such algorithms have either been investigated only for specific and simple classes of $\Phi$-regret, or apply to more general classes of games albeit being significantly more intricate (e.g., require more book-keeping and are more difficult to implement).

In this paper, our primary contribution is to establish a link between $\Phi$-regret and the simple and well-studied evolutionary dynamics algorithm of RD.
Our key theoretical result is that in general classes of $2 \times 2$ games, RD in self-play seamlessly minimizes the strongest possible notion of $\Phi$-regret, without any modification of the underlying learning dynamics.
 Informally, this result implies that an RD learner, in self-play, attains on time-average at least the value of the game, even if time-averaging is applied to specific recurrent parts of the trajectory as defined by the agent's mixed strategy.
We further ground these theoretical results in empirical analysis focusing on two-player games, introducing a version of $\Phi$-regret we denote `mosaic regret', which is more amenable to empirical implementation.
We illustrate mosaic regret-convergence under RD in a broad range of 144 $2 \times 2$ games \citep{bruns2015names}, with widely varying characteristics (e.g., fully cooperative games, social dilemmas, cyclical games, etc.).
Finally, we conclude by showing empirical evidence that RD minimizes mosaic regret in some larger games, hinting at future avenues of further exploration.

The remainder of the paper is structured as follows. 
In \cref{sec:prelim}, we overview the necessary preliminaries for establishing our theoretical results.
In \cref{sec:trad_to_phi_regret}, we delve into the $\Phi$-regret framework and related concepts targeted in the paper.
Following this, we establish our theoretical results in \cref{sec:analysis,sec:linking_swap_other}, and subsequently validate them empirically in \cref{sec:experiments}.
Finally, we conclude with key discussion points and takeaways in \cref{sec:discussion}.

\section{Preliminaries}\label{sec:prelim}
We first review preliminaries related to game theory and online learning algorithms.

\subsection{Game Theory}\label{sec:prelim_gt}
We study two-player normal-form games, where the first (resp., second) player has access to a finite set of \emph{pure} strategies $\mathcal{A}^1$ (resp., $\mathcal{A}^2$).
In two-player games, players $1$ and $2$ are also referred to as the row and column players, respectively.
The joint \emph{mixed} strategies of the players are denoted by $(x,y)$, where $x \in \Delta(\mathcal{A}^1)$ and $y \in \Delta(\mathcal{A}^2)$. We will denote the probability that the first (resp., second player) assigns to their $i$-th strategy (resp., $j$-th strategy) as $x_i$ (resp., $y_j$).
The player payoffs are, respectively, specified by matrices $A \in \mathbb{R}^{n \times m}$ and $B \in \mathbb{R}^{n \times m}$, where $n$ and $m$ are the number of strategies available to each player.
Let $a_{ij}, b_{ij}$ represent the payoff entries of the respective matrices.
The respective utilities received by the players are $x^{\top}Ay$ and $x^{\top}By$.
In a zero-sum (resp., coordination) game, players receive payoffs $A = -B$ (resp., $A=B$).
Given strategy profile $(x,y)$, the best response for each player is the strategy that maximizes their utility against the other player's current strategy.
The \emph{best response dynamics} arise when players iteratively update their policy to their best response.

In game theory, the Nash equilibrium has been well-established as a solution concept of interest, and is defined as follows.
\begin{definition}
A mixed strategy profile $(x_*,y_*)\in\Delta(\mathcal{A}^1)\times\Delta(\mathcal{A}^2)$ is a Nash equilibrium (NE) if
\begin{equation}\label{eq:nash}
    x_{*}^{T}Ay_* \geq x^{\top}Ay_* \quad \forall x \in \Delta(\mathcal{A}^1) \quad \text{ and } \quad 
    x_{*}^{T}By_* \geq x_{*}^{T}By \quad \forall y \in \Delta(\mathcal{A}^2) \, .
\end{equation}
\end{definition}
In other words, the players are simultaneously in best response with one another when their profiles constitute an NE. 

Several weaker notions of equilibria are also important in game theory, and bear a close relationship to regret-minimizing strategies. In preparation for those definitions, it will be useful to introduce some notation that allows for us to work with arbitrary correlated probability distributions over $\mathcal{A}=\mathcal{A}^1\times \mathcal{A}^2$. Let $z \in \Delta(\mathcal{A})$ be such a probability distribution. 
Let $z_{ij}$ be the probability assigned to the outcome where the first (resp., second) player choose strategy $i$ (resp., $j$).
Let $z(1|\cdot)$ (resp., $z(2|\cdot)$) be the marginal probability of the first (resp., second) player.
We denote by $z(2|i) \in \Delta(\mathcal{A}^2)$  the conditional distribution of the second player's strategy given the first player's strategy is $i$, and similarly define $z(1|j)$. Finally, we denote by $e_{i} \in \Delta(\mathcal{A}^1)$ the distribution putting mass 1 on $i \in \mathcal{A}^1$, and similarly define $e_{j}$. Given these definitions, we next introduce several additional notions of equilibria.
\begin{definition}
A distribution over joint strategies $z \in\Delta(\mathcal{A}^1 \times \mathcal{A}^2)$ is a correlated equilibrium (CE) if
\begin{align}\label{eq:ce}
     (e_{i})^\top A z(2|i) &\geq x^\top A z(2|i) \quad \forall i \in \mathcal{A}^1 \, , x \in \Delta(\mathcal{A}^1) \text{ and } \\
     z(1|j)^\top B e_{j} &\geq z(1|j)^\top B y \quad \forall j \in \mathcal{A}^2 \, , y \in \Delta(\mathcal{A}^2) \, .
\end{align}
\end{definition}

\begin{definition}
A distribution over joint strategies $z \in\Delta(\mathcal{A}^1 \times \mathcal{A}^2)$ is a coarse correlated equilibrium (CCE) if
\begin{align}\label{eq:cce}
     \sum_{i,j} a_{ij}z_{ij} &\geq x^\top A z(2|\cdot) \quad \forall x \in \Delta(\mathcal{A}^1) \text{ and } \\
     \sum_{i,j} b_{ij}z_{ij} &\geq z(1|\cdot)^\top B y \quad \forall  y \in \Delta(\mathcal{A}^2) \, . 
\end{align}
\end{definition}

It is important to note that the set of CCE is a superset of the CE, which itself is a superset of NE (i.e., $\text{NE}\subset \text{CE} \subset \text{CCE}$). 
An intuitive way to establish this set of inclusions is to define each of these solution concepts via allowable sets of joint distributions and classes of allowable deviating policies under which no player can strictly improve their payoff. 
From this perspective, a NE is a \textit{product} of (mixed) strategies such that no player can deviate to another (mixed) strategy and strictly increase their payoff. 
The set of CCE can be defined equivalently merely by removing the restriction that the joint distribution of the two players has to be a product of each of the player's marginal distributions. Hence, CCE is a superset of NE. Finally, if the players' initial joint distribution is not a mixture, one can define strictly more powerful deviating strategies by allowing a player's deviating strategy to depend on their realized strategy.
This is the case for CE, where the set of allowable deviating strategies is strictly more expansive, and yet none of them can can strictly improve the payoffs of the players; 
thus, the set of CE is more restrictive than CCE, while generalizing NE as, once again, it allows for correlated joint distributions.

\subsection{Replicator Dynamics}
In this paper, we seek to investigate the connection of more expansive notions of regret to simple, well-studied learning algorithms.
For these purposes, we focus on the replicator dynamics (RD), a standard and well-studied model  defining the evolution of strategic, interacting individuals under biologically-inspired mechanisms \citep{taylor1978evolutionary,schuster1983replicator}.
In the two-player setting of interest, the time-evolution of player strategies is described by RD as follows,
\begin{align}
    \dot{x}_i = x_i\left((Ay)_i-x^{\top}Ay\right) \quad \dot{y}_j = y_j\left((x^{\top}B)_j-x^{\top}By\right) \quad \forall (i,j) \in \mathcal
    {A}^1 \times \mathcal{A}^2 \,.
\end{align}
RD is the continuous-time variant of the well-known multiplicative weights update (MWU) meta-algorithm~\citep{arora2012multiplicative,Kleinberg09multiplicativeupdates}, and the seminal dynamics in the areas of mathematical evolution, biology, ecology, and evolutionary game theory ~\citep{Weibull,Hofbauer98}.
In recent years, RD has enjoyed a particularly strong surge in applications to learning in multiplayer games~\citep{GallaFarmer_ScientificReport18,papadimitriou2019game,omidshafiei2019alpha,boone2019darwin,lanctot2019openspiel,nagarajan2020chaos,flokas2020no,hennes2020neural,sorin2020replicator,skoulakis2021evolutionary}.
Despite its algorithmic simplicity, RD is well-known to minimize external regret (a concept later detailed in \cref{sec:external_and_swap_regret}), thus yielding time-average convergence to a coarse correlated equilibrium~\citep{sorin2009exponential,mertikopoulos2018cycles}.

\subsection{Online Sequential Prediction}
The central problem of online sequential prediction in continuous time is specified by the interaction of a player with finite action set $\mathcal{A}$ and an environment at each time $t \in [0, \infty)$. 
At time $t$, the player selects a distribution over actions $x^t \in \Delta(\mathcal{A})$, and a utility function $u^t \in \mathbb{R}^{\mathcal{A}}$ is revealed.\footnote{The typical language of online learning uses loss vectors, $\ell^t = -u^t$. Here, we use utilities to match the focus on payoffs, rather than losses, which is common in the game theory literature.} 
The player's expected instantaneous utility is $\langle x^t, u^t \rangle$, and the player is shown the entire utility vector $u^t$, which it may use in deciding how to act in subsequent interactions. After interacting up to time $T>0$, the cumulative utility experienced by the player is $\int_{0}^T \langle x^t, u^t \rangle \mathrm{d}t$. It is difficult to judge how well the player has done in selecting its actions on the basis of this cumulative utility alone; the quality of the performance depends on whether there were other actions available that would have yielded significantly higher utility. 
This is formalized by judging the player's performance based on its \emph{regret}, a common means of studying the performance of such algorithms in both discrete and continuous time~\citep{zinkevich2003online,cesa2006prediction,blum2007learning,shoham2008multiagent,kwon2017continuous,harris1998rate,mertikopoulos2018cycles,sorin2009exponential,banerjee2004performance}.

\section{From Traditional Regret Minimization to $\Phi$-Regret and Mosaic Regret}\label{sec:trad_to_phi_regret}
We next lay the foundations for our theoretical results, by overviewing a spectrum of no-regret algorithms: from the more traditional forms of regret to the more general $\Phi$-regret framework, and our introduced notion of mosaic regret.

\subsection{External and Swap Regret}\label{sec:external_and_swap_regret}
\begin{figure}[t]
    \centering
    \begin{subfigure}{0.4\textwidth}
        \centering
        \includegraphics[width=0.9\textwidth,page=1]{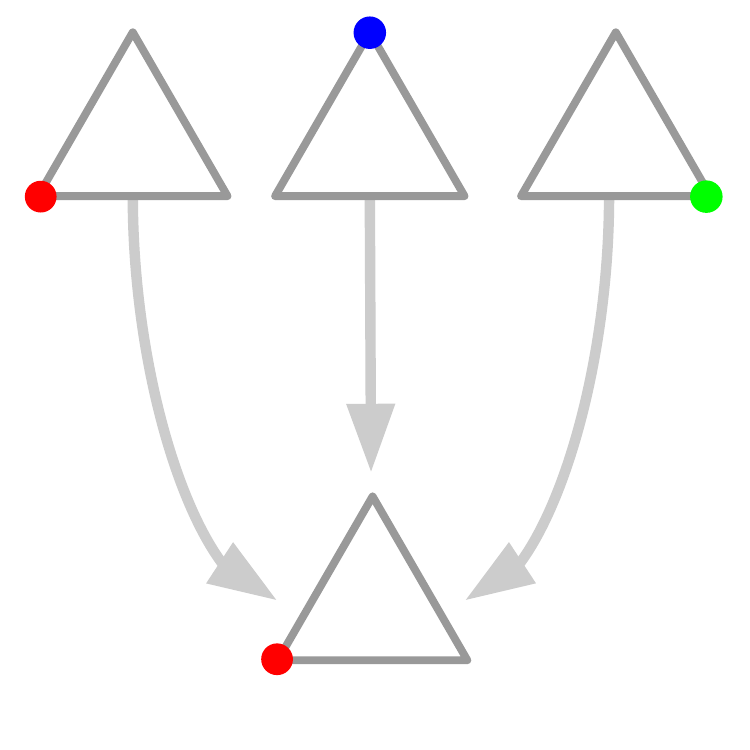}%
        \caption{Example external regret deviations.}
        \label{fig:regret_comparisons_external}
    \end{subfigure}%
    \hspace{25pt}
    \begin{subfigure}{0.4\textwidth}
        \centering
        \includegraphics[width=0.9\textwidth,page=2]{figs/regret_comparisons.pdf}%
        \caption{Example swap regret deviations.}
        \label{fig:regret_comparisons_swap}
    \end{subfigure}
    \hfill
    \caption{Strategy deviations associated with different forms of regret.
    This example considers a simple online learning setting with three actions (i.e., pure strategies) available to the player, illustrated via the simplices in each row.
    \subref{fig:regret_comparisons_external} External regret considers deviations of all three possible actions to a single, fixed, deterministic action in hindsight. \subref{fig:regret_comparisons_swap} Swap regret considers a function independently swapping \emph{each} of the player's actions with an alternative fixed action.
    }
    \label{fig:regret_comparisons}
\end{figure}

Informally, regret quantifies whether a player could have done better by using an alternative method for picking actions throughout the interaction.
One of the most common variants studied is \emph{external regret}, the expected improvement in performance that could have been achieved by sticking with a single action throughout all interactions during the time interval $[0,T]$.
\Cref{fig:regret_comparisons_external} provides an illustrative example of the action deviations considered when computing external regret.
This example considers a simple online learning setting with three actions available to the player (illustrated via each of the 2-simplices), with external regret considering deviations of all three actions to a single, fixed action in hindsight.
External regret is mathematically defined by
\begin{align}\label{eq:external-regret}
    \max_{a \in \mathcal{A}} \int_0^T u^t_a \mathrm{d}t - 
    \int_{0}^T \langle x^t, u^t \rangle \mathrm{d}t\, .
\end{align}
If a player is able to attain $o(T)$ external regret, over bounded utility sequences ($u^t \in [0,1]^\mathcal{A}$, for example), then the player's strategy is said to minimize external regret, or is simply \emph{regret-minimizing}. Intuitively, the sub-optimality of the player's decision at each timestep, relative to the best constant action in hindsight, becomes vanishingly small, no matter what sequence of utilities $(u^t)_{t \in [0,T]}$ are yielded by the environment.

\paragraph{Regret in games.}
Regret minimization is central to many algorithms for computing equilibria in game theory, due to the close relationship between the notion of an alternative action, and the game-theoretic notions of strategy deviations that feature in the definitions of equilibria in \cref{sec:prelim_gt}.
As such, one can consider \emph{actions} in the sense of online learning as being synonymous with (pure) \emph{strategies} in game theory; we henceforth use the latter terminology for simplicity.
For instance, consider a two-player game as specified in \cref{sec:prelim_gt}.
Let us cast the problem the players face in playing the game as an online sequential prediction problem, focusing on the first player in the following description.
The first player's strategy set is $[n]$, the set of strategies in the game. 
The utility vector $u^t$ for this player at time $t$ is given by $Ay^t$, where $y^t$ is the strategy selected by the second player at time $t$.

A well-known Folk-theorem implies that if both players employ algorithms that minimize external regret to select their strategies, the players' time-average behavior (i.e., the joint strategy $T^{-1}\int_0^T (x^t, y^t) \mathrm{d}t$) is guaranteed to converge to the set of coarse correlated equilibrium~\citep{hart2000simple,young2004strategic,roughgarden2016twenty}.
Further, if the game is zero-sum, then the product of the marginals of their individual time-averaged strategies converges to the set of Nash equilibria at the same rate~\citep{Freund99,nisan_agt_book_2007,young2004strategic}. A similar relationship holds between the set of correlated equilibria, and the stronger notions of regret described below~\citep{hart2000simple}.

\paragraph{Broader deviation classes.}
There exist more general notions of regret that compare a player's behavior against a wider class of baselines than just those that deviate to a single, fixed strategy throughout time.
One such alternative, \emph{swap regret}~\citep{blum2007external}, permits deviations involving the player using strategy $b\in\mathcal{A}$ every time they had selected strategy a $a\in\mathcal{A}$.
\Cref{fig:regret_comparisons_swap} illustrates swap regret in our earlier example, where now each of the three possible pure strategies may be independently deviated to a different one.
The notion of swap regret is formalized through \emph{swap} functions $F : \mathcal{A} \rightarrow \mathcal{A}$ that can be lifted to a function $\bar{F} : \Delta(\mathcal{A}) \rightarrow \Delta(\mathcal{A})$
\begin{align*}
    \bar{F}(x)_{b} = \sum_{\substack{a \in \mathcal{A} \\ F(a) = b}} x_a \, , 
\end{align*}
for all $x \in \Delta(\mathcal{A})$.
The swap regret is then defined to be
\begin{align*}
    \max_{F: \mathcal{A} \rightarrow \mathcal{A}} \int_0^T \langle \bar{F}(x^t), u^t \rangle \mathrm{d}t - \int_{0}^T \langle x^t, u^t \rangle \mathrm{d}t  \, .
\end{align*}

Players interacting in a two-player game using algorithms that minimize swap regret are guaranteed to converge to the set of correlated equilibria in time-average, a stronger notion that the coarse correlated equilibrium guaranteed by algorithms that minimize external regret. A slightly weaker notion than swap regret is that of \emph{internal regret}, which restricts the swap functions $F: \mathcal{A} \rightarrow \mathcal{A}$ that are lifted to deviations on the simplex to take the form $F(a) = a$ for all but one $a \in \mathcal{A}$.  
However, note that in the special case of zero-sum games, internal and swap regret minimization algorithms do not offer stronger guarantees, as external regret minimization algorithms already guarantee convergence to a Nash equilibrium in the time-average.

\subsection{$\Phi$-regret Framework}

\begin{figure}[t]
    \centering
    \begin{subfigure}{0.3\textwidth}
        \begin{tikzpicture}[align=center,node distance = \NodeDistanceH and \NodeDistanceW, auto,
            myarrow/.style={-{Latex},line width=\blocklinewidth,mycolor},
            myarrowhighlight/.style={-{Latex},line width=\blocklinewidth,myteal},
            myarrowr/.style={-{Latex},line width=\blocklinewidth,mycolor,rounded corners=5pt},
            myarrowhighlightr/.style={-{Latex},line width=\blocklinewidth,myteal,rounded corners=5pt},
        ]
            \ImageNode{input}{Input}{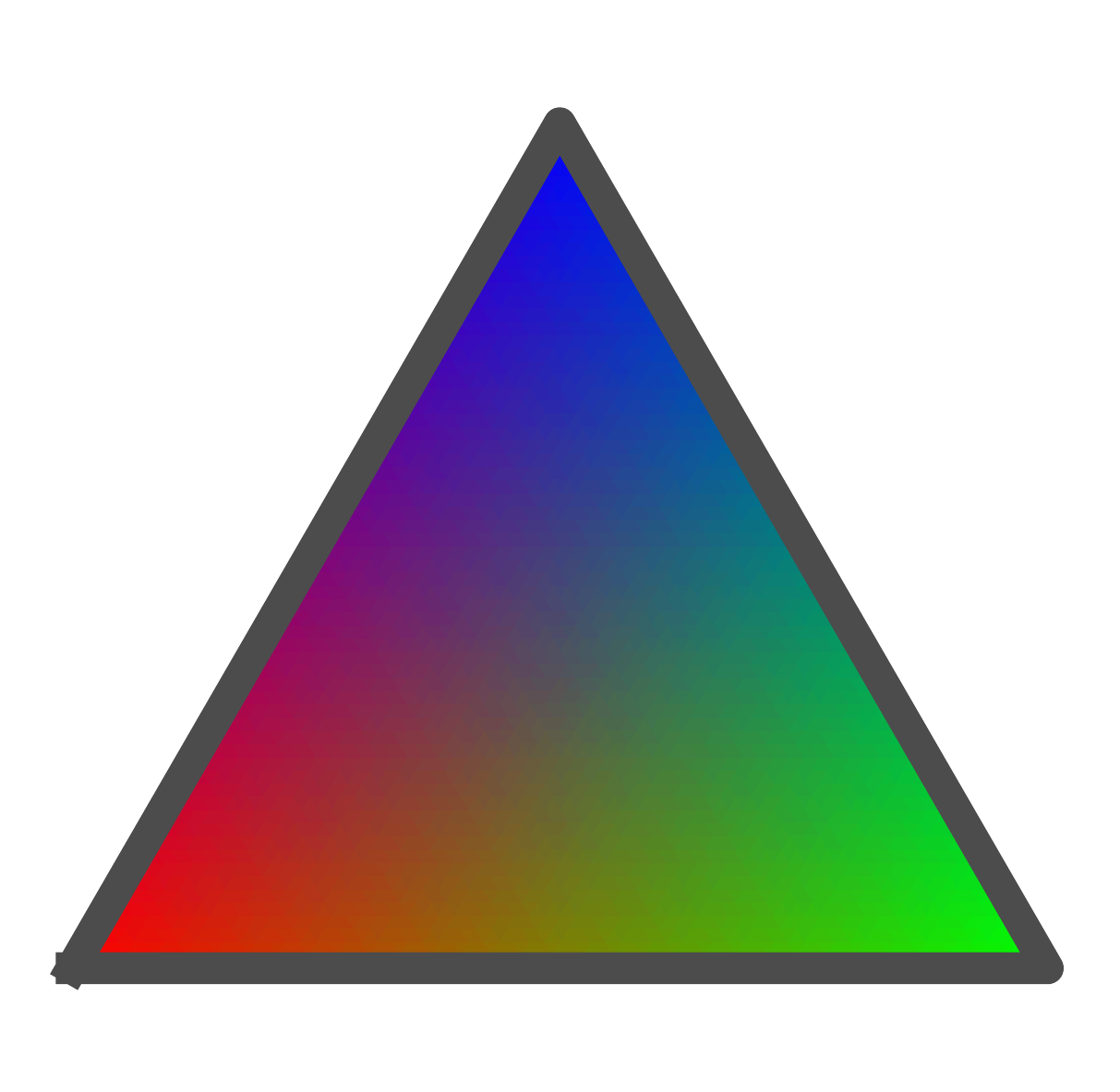}
            \ImageNode[below left=of input]{dev1}{\#1}{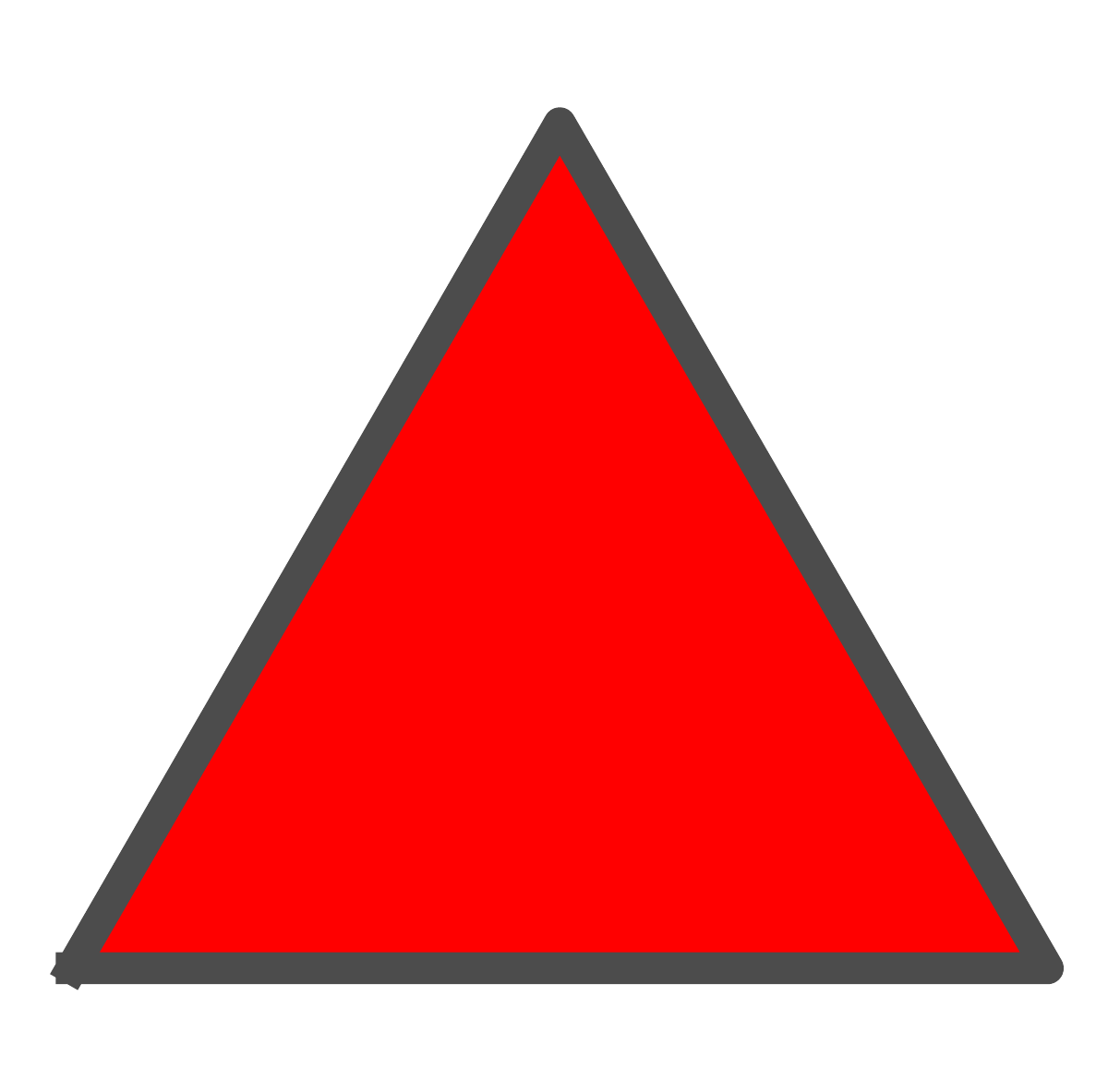}
            \ImageNode[below=of input]{dev2}{\#2}{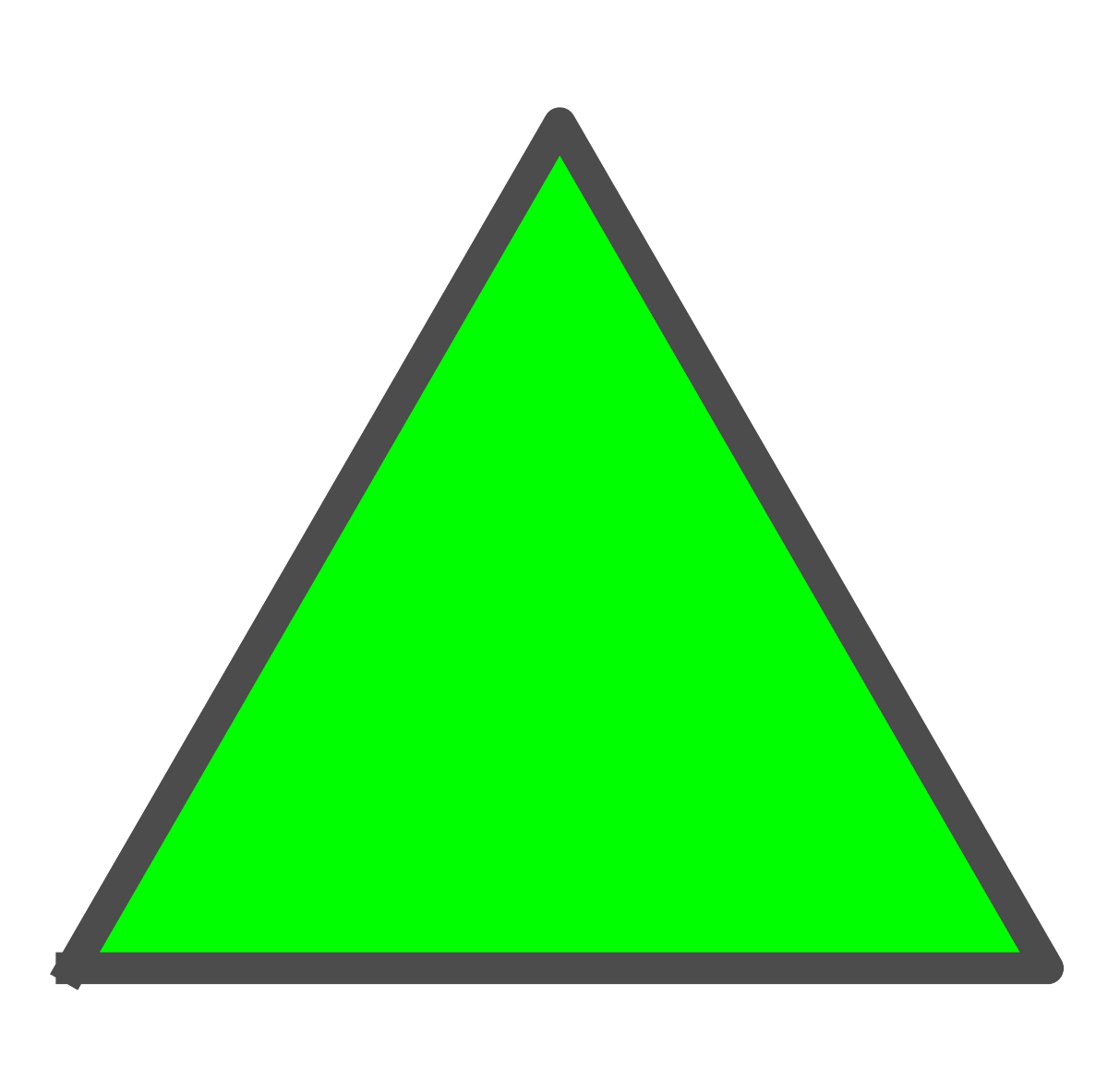}
            \ImageNode[below right=of input]{dev3}{\#3}{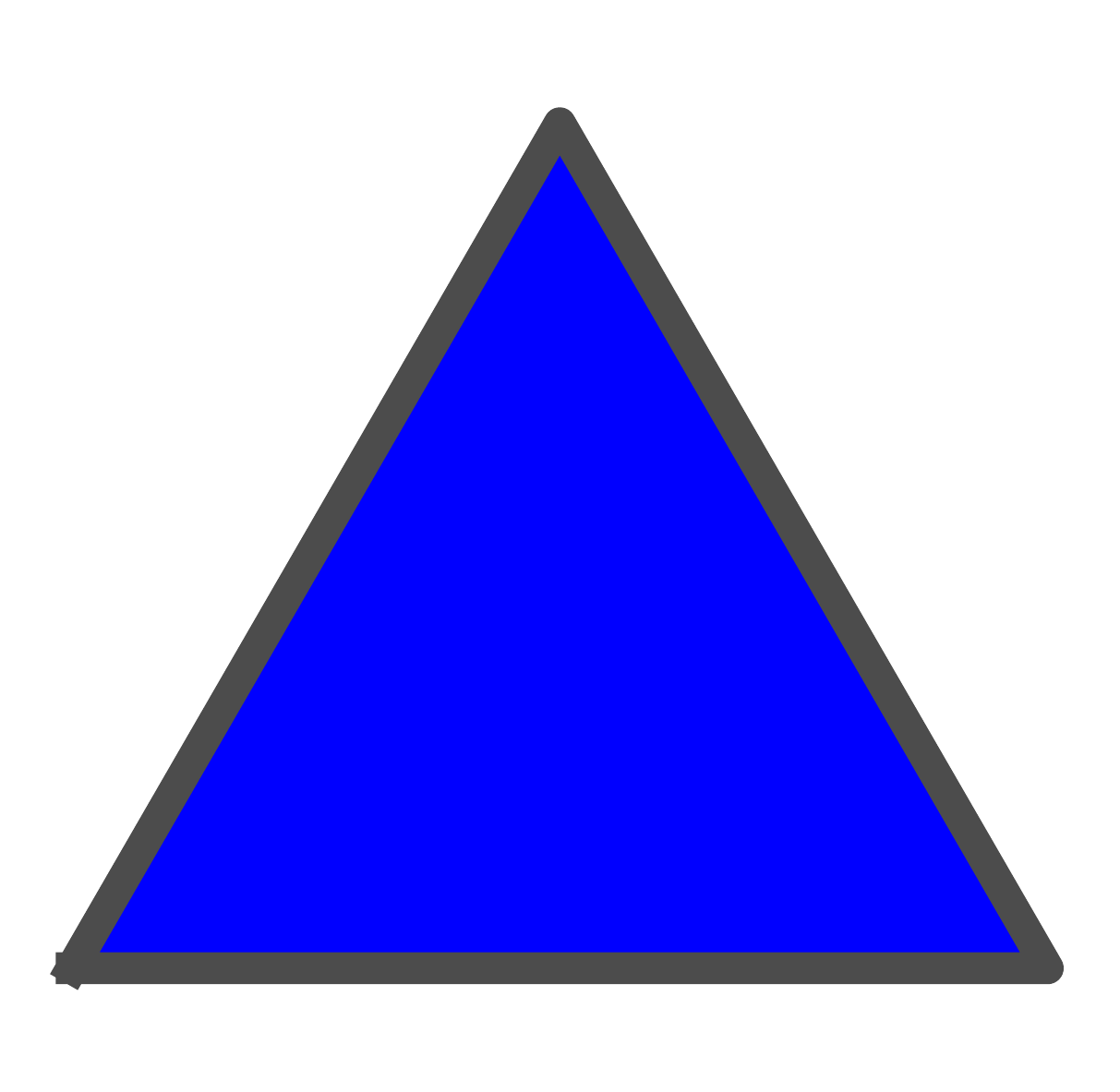}
            \draw [myarrowr] (input.south) |- (dev2.north);
            \draw [myarrowr] (input.south) -- + (0,-0.5) -| (dev1.north);
            \draw [myarrowr] (input.south) -- + (0,-0.5) -| (dev3.north);
        \end{tikzpicture}
        \caption{External Regret}
        \label{fig:alt_external_reg}
    \end{subfigure}%
    \hfill
    \begin{subfigure}{0.3\textwidth}
        \centering
        \begin{tikzpicture}[align=center,node distance = \NodeDistanceH and \NodeDistanceW, auto,
            myarrow/.style={-{Latex},line width=\blocklinewidth,mycolor},
            myarrowhighlight/.style={-{Latex},line width=\blocklinewidth,myteal},
            myarrowr/.style={-{Latex},line width=\blocklinewidth,mycolor,rounded corners=5pt},
            myarrowhighlightr/.style={-{Latex},line width=\blocklinewidth,myteal,rounded corners=5pt},
        ]
            \ImageNode{input}{Input}{figs/dev_input.pdf}
            \ImageNode[below left=of input]{dev1}{\#1}{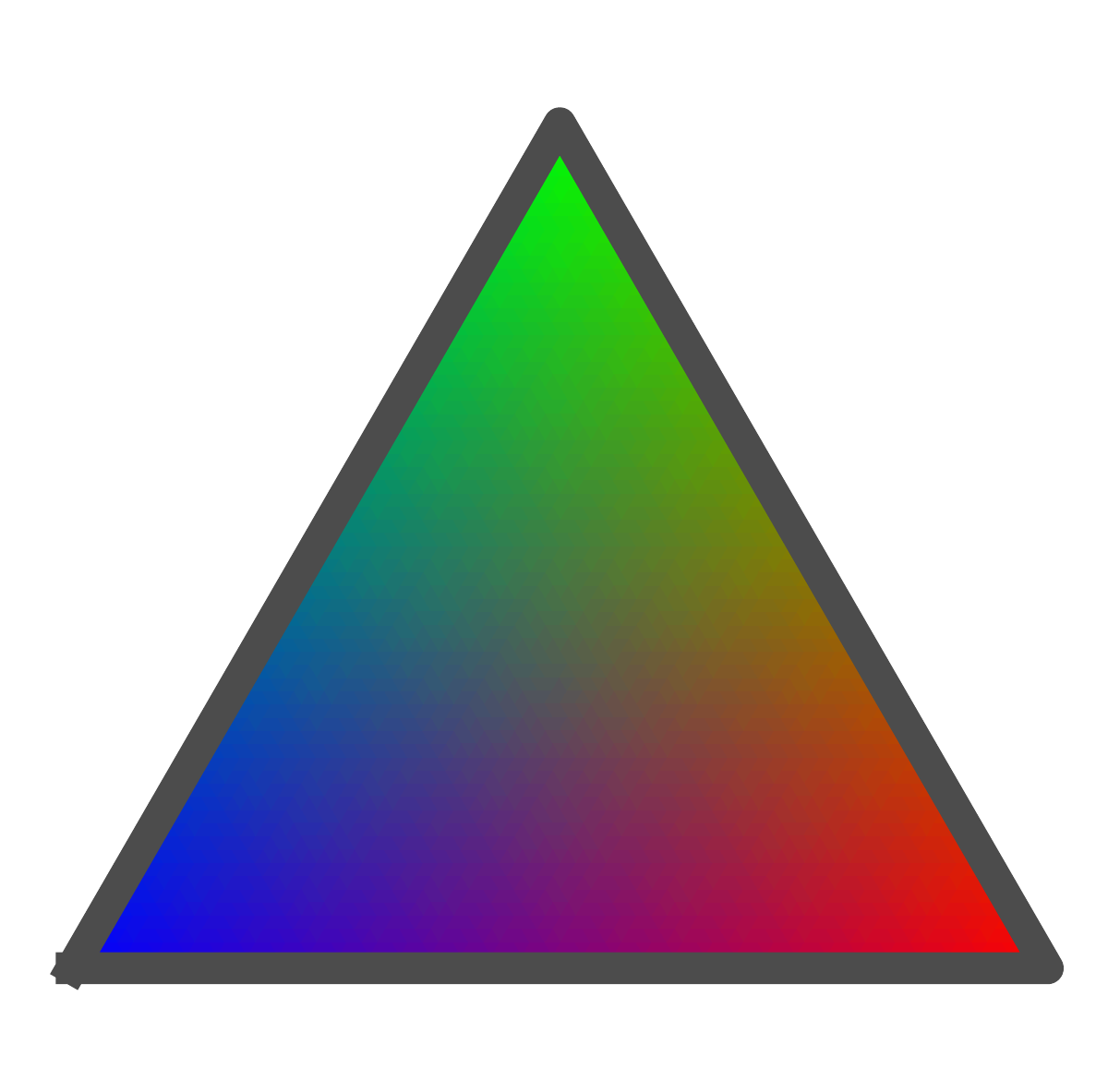}
            \ImageNode[below=of input]{dev2}{\#2}{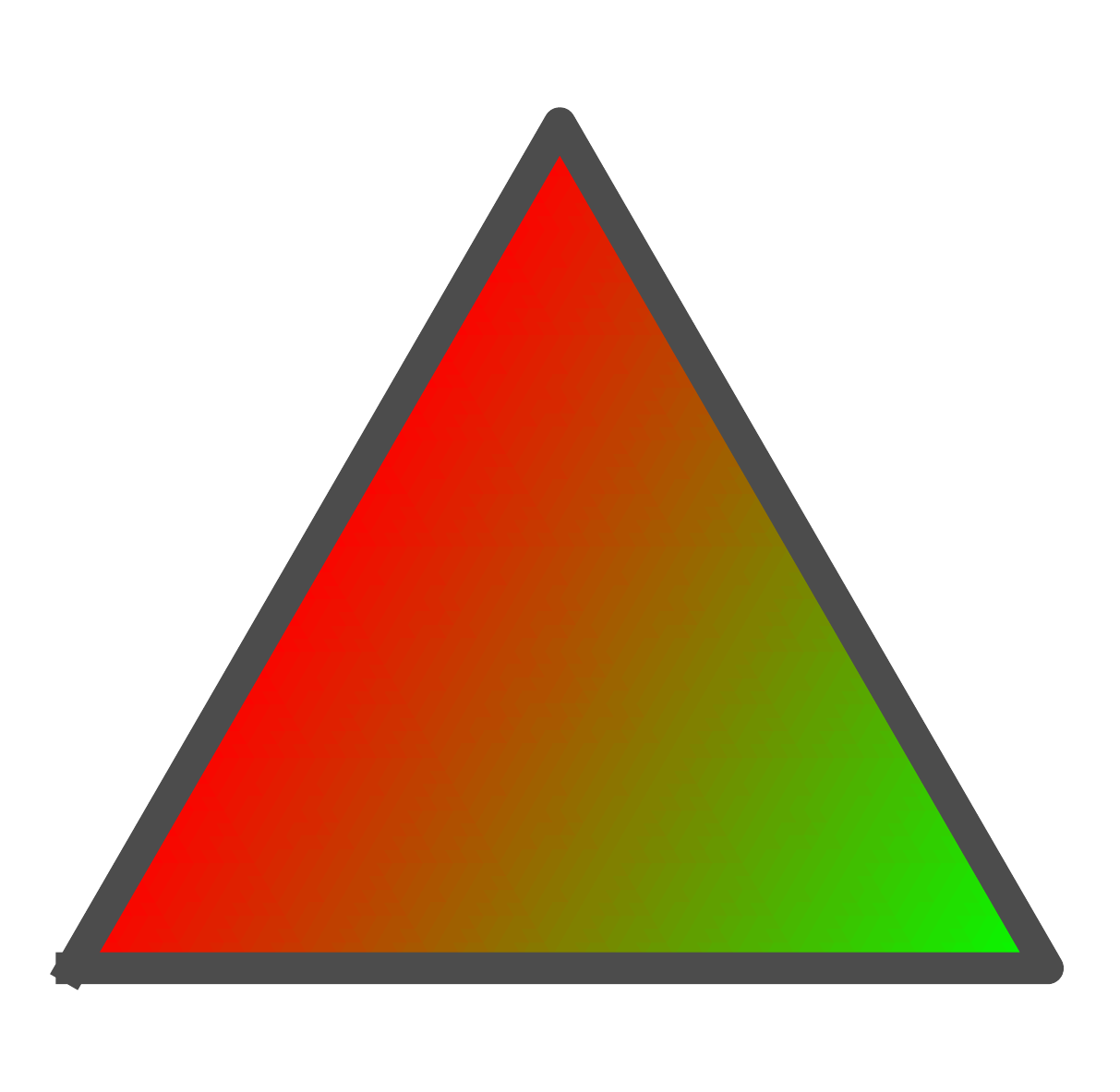}
            \ImageNode[below right=of input]{dev3}{\#3}{figs/dev_external_2.pdf} 
            \draw [myarrowr] (input.south) |- (dev2.north);
            \draw [myarrowr] (input.south) -- + (0,-0.5) -| (dev1.north);
            \draw [myarrowr] (input.south) -- + (0,-0.5) -| (dev3.north);
        \end{tikzpicture}
        \caption{Swap Regret}
        \label{fig:alt_swap_reg}
    \end{subfigure}%
    \hfill
    \begin{subfigure}{0.3\textwidth}
        \centering
        \begin{tikzpicture}[align=center,node distance = \NodeDistanceH and \NodeDistanceW, auto,
            myarrow/.style={-{Latex},line width=\blocklinewidth,mycolor},
            myarrowhighlight/.style={-{Latex},line width=\blocklinewidth,myteal},
            myarrowr/.style={-{Latex},line width=\blocklinewidth,mycolor,rounded corners=5pt},
            myarrowhighlightr/.style={-{Latex},line width=\blocklinewidth,myteal,rounded corners=5pt},
        ]
            \ImageNode{input}{Input}{figs/dev_input.pdf}
            \ImageNode[below left=of input]{dev1}{\#1}{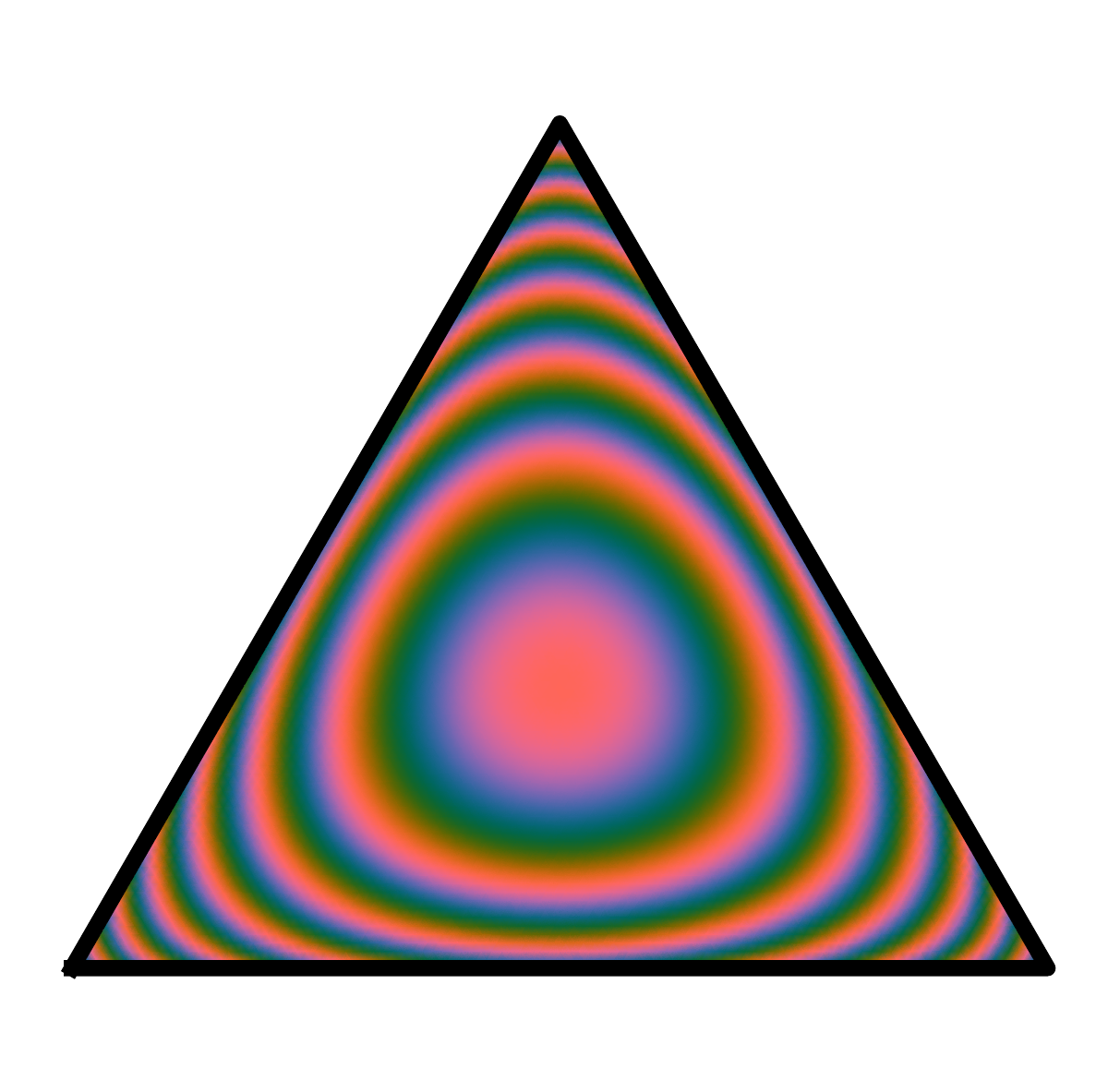}
            \ImageNode[below=of input]{dev2}{\#2}{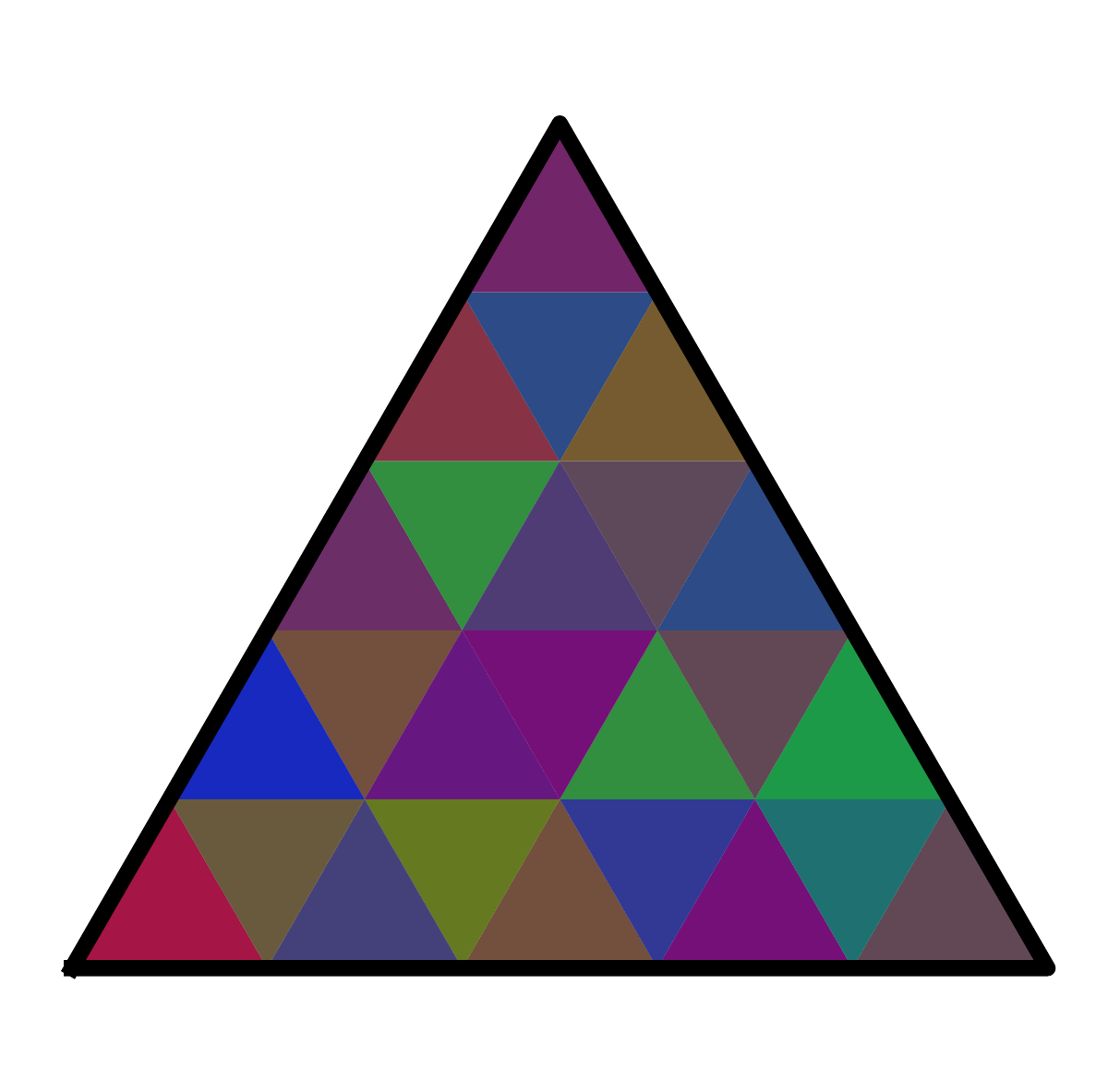}
            \ImageNode[below right=of input]{dev3}{\#3}{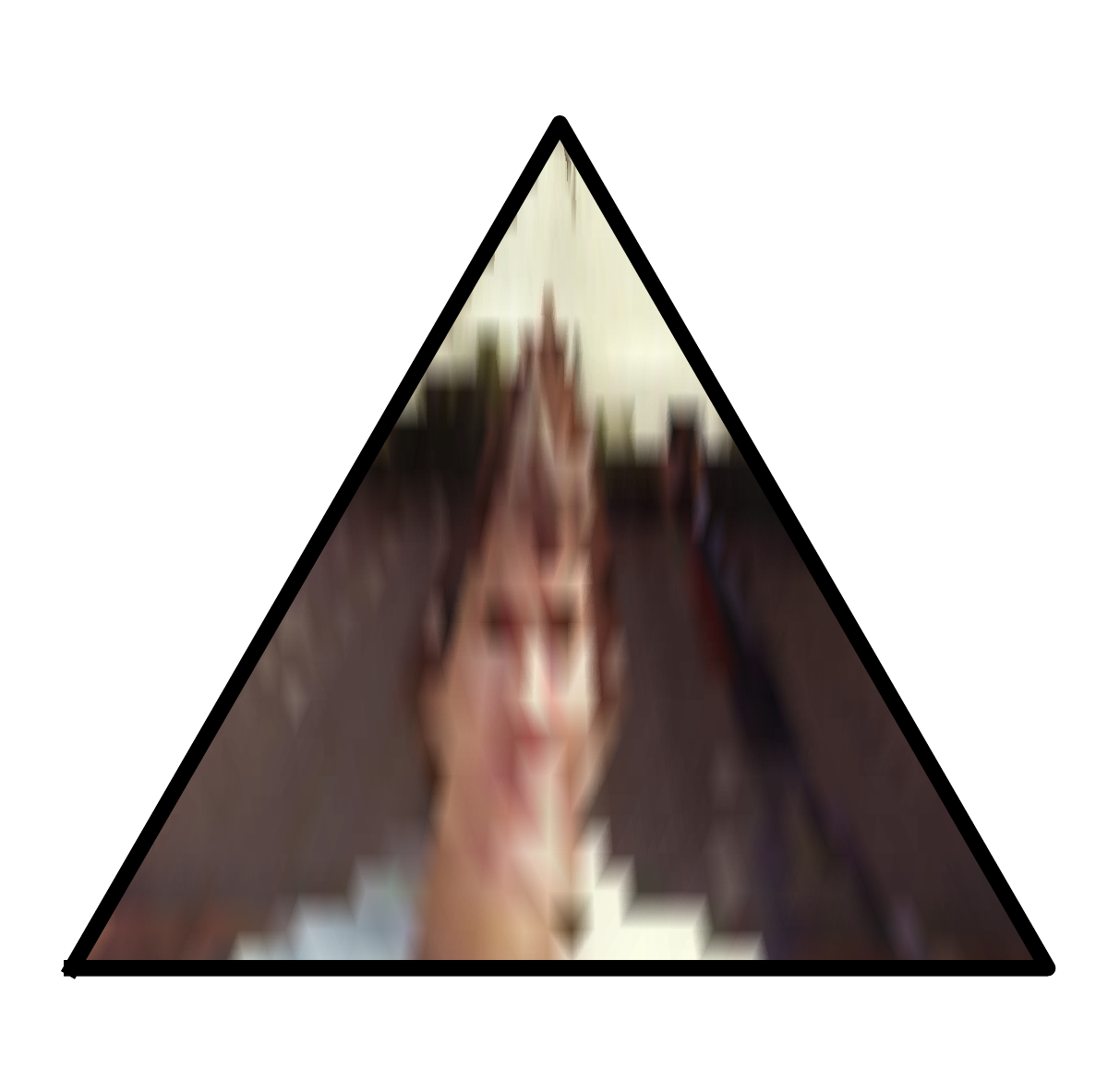}
            \draw [myarrowr] (input.south) |- (dev2.north);
            \draw [myarrowr] (input.south) -- + (0,-0.5) -| (dev1.north);
            \draw [myarrowr] (input.south) -- + (0,-0.5) -| (dev3.north);
        \end{tikzpicture}
        \caption{$\Phi$-regret}
        \label{fig:alt_our_reg}
    \end{subfigure}
    \caption{
    Mixed-strategy view of permissible deviations under various forms of regret. 
    We visualize here a simple 3-strategy game, where the color of each mixed strategy in the simplex interior corresponds to the combination of pure strategies (red, green, and blue) that compose it.
    \subref{fig:alt_external_reg} Under external regret, permitted deviations map all mixed strategies to either the red, green, or blue deterministic strategy. 
    \subref{fig:alt_swap_reg} Swap regret deviations can correspond to affine transformations of the input strategies, generalizing external regret.
    \subref{fig:alt_our_reg} Under $\Phi$-regret, the input space may be partitioned in a multitude of ways.
    For instance, in example \#1, the simplex is discretized into partitions, each capturing mixed strategies with a certain range of KL-divergences from the center of the simplex. 
    Likewise, in example \#2, a discrete set of deviation partitions is defined in a granular manner enabling the output mappings visualized.
    This definition allows for arbitrarily complex mappings, as illustrated by example \#3, which reproduces an image from the three primary strategies (colors).
    }
    \label{fig:alternative_view}
\end{figure}

In swap regret, each swap function $F: \mathcal{A} \rightarrow \mathcal{A}$ is lifted to a function $\bar{F} : \Delta(\mathcal{A}) \rightarrow \Delta(\mathcal{A})$, representing an admissible deviation in the algorithm's past strategies. Notably, each such $\bar{F}$ is \emph{affine} in the input strategy. This is also true of the deviation functions $\bar{F} : \Delta(\mathcal{A}) \rightarrow \Delta(\mathcal{A})$ permitted under external and internal regret. A natural question to pose is whether there exist learning algorithms that minimize regret against even broader classes of deviation functions.

To better intuit the possibilities, let us consider an alternative view of the deviations permitted by various regret notions in \cref{fig:alternative_view}.
As in our earlier example, we consider a simple 3-strategy game in this instance.
However, we now also visualize the full spectrum of mixed strategies, with the color of each mixed strategy in the interior of the simplex corresponding to its composition of the three underlying pure strategies (red, green, and blue).
For instance, \cref{fig:alt_external_reg} visualizes the three possibilities of deviations permitted by external regret, each respectively mapping all mixed strategies to either the red, green, or blue pure strategy. 
Likewise, in the swap regret case shown in \cref{fig:alt_swap_reg}, deviations can correspond to affine transformations of the input strategies.
Here, the deterministic swap function $F$ yields a mixed swap function $\bar{F}$, mapping each input mixed strategy to a distinct output mixed strategy. 
Naturally, as swap regret generalizes external regret, deviations permitted in the latter are also possible under the swap regret case (e.g., as shown in example deviation \#3 in \cref{fig:alt_swap_reg}).

\paragraph{$\Phi$-regret.} A previous line of work \citep{greenwald2003general,gordon2008no,stoltz2007learning} introduces the concept of $\Phi$-regret, expanding the space of deviations permitted compared to traditional regret notions.\footnote{$\Phi$-regret as introduced by \citet{greenwald2003general} applied only to linear deviation functions over $\Delta(\mathcal{A})$. Later work \citep{gordon2008no,stoltz2007learning} applied the concept much more generally to convex games on compact action spaces. The definition we use in this paper is a specific case of this notion, applied to the mixed extension of normal-form games.} 
For our purposes, $\Phi$-regret is defined by first specifying a set of deviation functions $\Phi \subseteq \{ \bar{F} : \Delta(\mathcal{A}) \rightarrow \Delta(\mathcal{A}) \}$, and then defining the corresponding $\Phi$-regret as
\begin{align*}
    \max_{\phi \in \Phi} \int_0^T \langle \phi(x^t), u^t \rangle \mathrm{d}t  - \int_0^T \langle x^t, u^t \rangle \mathrm{d}t \, .
\end{align*}
An algorithm is said to minimize $\Phi$-regret if the above expression is $o(T)$; that is, it grows sub-linearly with time.

\Cref{fig:alt_our_reg} visualizes example deviations permitted under $\Phi$-regret, illustrating that the input space may be partitioned in a multitude of ways under the above definition.
For instance, in example \#1 of \cref{fig:alt_our_reg}, partitions are defined as a function of KL-divergence from the center of the simplex. 
Likewise, in example \#2 of \cref{fig:alt_our_reg}, deviations are defined in a manner enabling the highly granular output mappings visualized.
Theoretically, this definition would allow for any arbitrarily complex mapping, as illustrated by example \#3 of \cref{fig:alt_our_reg}, which reproduces an image from the three primary strategies (colors).

Mathematically, the $\Phi$-regret framework generalizes the concepts of external regret, internal regret, and swap regret. 
Additional specific instances of $\Phi$-regret, however, can be considerably stronger than these notions of regret, as any continuous function on $\Delta(\mathcal{A})$ can be uniformly approximated by piecewise affine functions.

\paragraph{Strong Swap Regret.}
The strongest version of $\Phi$-regret takes $\Phi$ to be as large as possible; the set of all measurable functions from $\Delta(\mathcal{A})$ to itself.
We refer to this notion as \emph{strong swap regret}, as it is defined explicitly as a mapping over the mixed-strategy simplex and may, thus, be \emph{non-affine} (in contrast to the traditional notion of swap regret, which is first defined over pure strategies before being lifted to the mixed space).
Past work has led to the development of algorithms that are able to minimize this notion of regret, which is an extremely strong property, although the algorithms necessarily to do so are rather intricate and impractical to implement~\citep{stoltz2007learning}.
One of the central findings of our work is that in certain circumstances, the straightforward algorithm of replicator dynamics actually minimizes strong swap regret.

\subsection{Mosaic Regret: A Practical Strong Swap Regret Variant}\label{sec:mosaic_regret}
In moving from theory to experiments, strong swap regret can no longer practically be minimized due to time discretization effectively preventing the learning dynamics from revisiting specific mixed strategies. 
It is, therefore, useful to consider a slight weakening of this notion to enable practical evaluation, which we define as follows.\footnote{For related reasons, \citet{gordon2008no} introduce a similarly-weakened notion known as finite-element $\Phi$-regret, which restricts deviations to be continuous, and affine on each piece of a finite partition. Our introduced notion of mosaic regret is a slight strengthening of finite-element $\Phi$-regret, allowing for non-continuous deviations.}
\begin{definition}[Mosaic regret]
    A learning algorithm that produces a strategy process $(x^t)_{t \geq 0}$ in response to a utility process $(u^t)_{t \geq 0}$ is said to minimize mosaic regret if, for any finite partition $\Omega = \{\Omega_1,\ldots,\Omega_K\}$ of $\Delta(\mathcal{A})$, the algorithm has vanishing regret relative to all deviation strategies that are affine on the elements of the partition $\Omega$. Mathematically, let $D(\Omega) = \{ G : \Delta(\mathcal{A}) \rightarrow \Delta(\mathcal{A}) \mid G \text{ affine on each } \Omega_k \}$. Then the algorithm minimizes mosaic regret if
    \begin{align*}
        \max_{G \in D(\Omega)} \int_0^T \langle G(x^t), u^t \rangle \mathrm{d}t  - \int_0^T \langle x^t, u^t \rangle \mathrm{d}t = o(T)
    \end{align*}
    over measurable, bounded utilities $u \in ([0,1]^\mathcal{A})^{[0, \infty)}$, for each finite partition $\Omega$ of $\Delta(\mathcal{A})$.
\end{definition}
Mosaic regret also encompasses many standard existing notions of regret. 
Minimization of mosaic regret, for example, encompasses minimization of external regret through the singleton partition $\Omega = \{\Delta(\mathcal{A})\}$ and constant deviation maps. Internal and swap regret are also encompassed through the singleton partition, and various classes of affine maps. 
We shall return to this introduced notion of mosaic regret in our later experiments in \cref{sec:experiments}.

\section{Analysis of Replicator Dynamics under $\Phi$-Regret}\label{sec:analysis}

We next present theoretical results pertaining to the dynamics of games wherein players use RD.
We focus our attention on two-player $2 \times 2$ normal-form games, which, despite their size, form an important class capturing numerous canonical games traditionally used for closed-form analysis of regret-minimizing algorithms \citep{guyer19722,klos2010evolutionary,bruns2015names,robinson2005topology,rapoport1966taxonomy}. 
We first begin by deriving preliminary results, subsequently establishing that RD minimizes strong swap regret in all classes of $2 \times 2$ games.
Finally, we conclude by revisiting strong swap regret in the context of online learning, and in comparison to the standard notion of external regret. 

We seek to establish the key result that strong swap regret is minimized in all generic $2\times2$ games when both players are using RD. 
The notion of genericity that we consider here is that we only allow for games where, for each pure strategy of their opponent, each player has a unique (pure) best response as well as that for both payoff matrices $A,B$  it must hold that  $a_{11}-a_{12}-a_{21}+a_{22}\neq0$ and $b_{11}-b_{12}-b_{21}+b_{22}\neq0$.\footnote{A natural sufficient condition for the uniqueness of best response strategies is that all payoff entries are distinct.
It is easy to see that in $2\times2$ games under uniqueness of best responses, other useful generic properties follow, such as a finite number of possible Nash equilibria. 
Indeed, if, given a pure strategy of the opponent, each strategy has a unique best response then there cannot exist any mixed Nash equilibrium where exactly one of the two agents is randomizing. So the only possibility for having a continuum of Nash equilibria is in its interior. However, given any two interior Nash equilibria,
any linear combination of them is also a Nash equilibrium, where in fact both agents are strictly indifferent between their two strategies. This line of equilibria intersects the boundary, implying the existence of a strategy profile where at least one of the agents plays a pure strategy and their opponent is indifferent between their two strategies, reaching a contradiction.} Clearly, these properties are satisfied by all but a zero-measure set of games from the space of all $2\times 2$ games. 

To prove the minimization of strong swap regret as outlined above, we consider the three possible classes of such games: (I) those with no pure Nash equilibrium (that have a unique interior, i.e., fully mixed, Nash equilibrium), (II) those with a unique pure Nash equilibrium, and (III) those with (at least)\footnote{Due to our genericity assumptions we can only have at most two pure Nash equilibria.}
two pure Nash equilibria on the boundary.
From the perspective of a \emph{best response graph} that connects pure strategy profiles via directed edges from one to another by a single player best-responding, these cases respectively correspond to: (I) the best response graph is a cycle through the four pure strategy outcomes, (II) the best response graph has a single sink that is a pure (strict) Nash equilibrium, and (III) the best response graph has two sinks/pure (strict) Nash equilibria. 
In case (III), due to our genericity assumption, both players choose a different pure strategy in each pure Nash equilibrium, which implies the existence of an isolated fully mixed Nash equilibrium as well.   

\subsection{Equivalent and Rescaled Games}\label{sec:equiv_rescaled_games}

Before proving the result that RD minimizes strong swap regret in generic $2 \times 2$ games,  we will show in this section that in order to prove case (I) it is sufficient to prove this result for the case of rescaled zero-sum games with a unique interior Nash equilibrium, whereas to prove case (III) it is sufficient to consider the case of rescaled coordination games with a unique interior Nash equilibrium.
Let us first define these notions of equivalent and rescaled games below.

\begin{definition}[Equivalent games up to column/row shifts]
    Let $G=(A,B)$ be a two player game with payoff matrices $A$ and $B$. We say that a game $G'=(A', B')$ is equivalent to $G$  up to column/row shifts if there exists $(c_1,c_2,d_1,d_2)$ such that 
    \begin{align*}
        a'_{ij}= a_{ij}+c_j, \quad b'_{ij}= b_{ij}+d_i \quad \text{for all}~ (i,j)\in\{1,2\}\;.
    \end{align*}
    We then write $(A,B)\sim(A',B')$ to denote equivalence up to column/row shifts.
\end{definition}

A game is called \textit{rescaled zero-sum} (resp., \textit{rescaled coordination game}) if there exist a payoff matrix $C$ and a negative (resp., positive) constant $c$ such that $(A,B)\sim(C,cC)$. 
Observe that since the RD update expression for the row player can be written as
\begin{align*}
    \dot{x}_i=x_i \sum_j x_j \left((Ay)_i-(Ay)_j\right)  \, ,
\end{align*}
the row player's RD trajectories only depend on payoff differences, which remain invariant after column payoff shifts. 
Hence, shifting a game to an equivalent version does not affect RD, implying the following remark.
\begin{remark}\label{remark:orbit}
    The orbits in RD of two equivalent games are identical. 
\end{remark}

The following simple lemma (see also \citet{Hofbauer98}) establishes that every generic $2 \times 2$ game is RD trajectory equivalent to either a rescaled zero-sum game or a rescaled coordination game.

\begin{lemma}\label{lemma:equivalence}
\label{lemma:equi}
    Every generic $2 \times 2$ game is either a rescaled zero-sum game or a rescaled coordination game.
\end{lemma}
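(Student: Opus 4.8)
The plan is to reduce the claim to a statement about the single genericity quantity $a_{11}-a_{12}-a_{21}+a_{22}$ (and its counterpart for $B$), and to exploit the fact established in \cref{remark:orbit} that only payoff differences matter for RD. First I would observe that the notion of being a rescaled zero-sum or rescaled coordination game is, by \cref{remark:orbit}, invariant under column/row shifts, so I am free to normalize $(A,B)$ to a convenient representative of its equivalence class. A natural normalization is to subtract $a_{2j}$ from each entry of column $j$ of $A$, and $b_{i2}$ from each entry of row $i$ of $B$; this makes the second row of $A$ and the second column of $B$ identically zero, reducing each payoff matrix to a single nonzero scalar encoding the relevant payoff difference. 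Concretely, after shifting, $A$ is determined by the two numbers $\alpha_1 = a_{11}-a_{21}$ and $\alpha_2 = a_{12}-a_{22}$ in its first row, and $B$ by $\beta_1 = b_{11}-b_{12}$ and $\beta_2 = b_{21}-b_{22}$ in its first column.

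The next step is to package the row player's incentives into a single scalar. Since RD for the row player depends only on $(Ay)_1 - (Ay)_2$, and after the shift this equals $\alpha_1 y_1 + \alpha_2 y_2 = \alpha_2 + (\alpha_1-\alpha_2)y_1$ (using $y_1+y_2=1$), the row dynamics are governed entirely by the slope $\alpha_1 - \alpha_2 = a_{11}-a_{12}-a_{21}+a_{22}$, which the genericity assumption guarantees is nonzero. Symmetrically, the column player's dynamics depend only on $\beta_1 - \beta_2 = b_{11}-b_{12}-b_{21}+b_{22} \neq 0$. I would then show that one can rescale the shifted $A$ and $B$ by positive constants (which also does not affect RD orbits, since positive scaling of payoffs leaves the sign structure of best responses and the direction of the RD vector field unchanged) to bring both matrices into a common normalized form $C$, where the two players either agree or disagree on the sign of their incentive slopes. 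The key dichotomy is the sign of the product $(a_{11}-a_{12}-a_{21}+a_{22})(b_{11}-b_{12}-b_{21}+b_{22})$.

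The heart of the argument is this case split on the sign of that product. When the two slopes have opposite signs, the best-response structure is misaligned: the game is equivalent (up to shifts and positive rescaling) to $(C, cC)$ with $c<0$, i.e.\ a rescaled zero-sum game, matching case (I) with its best-response cycle. When the slopes have the same sign, the incentives are aligned and the game is equivalent to $(C, cC)$ with $c>0$, a rescaled coordination game, matching case (III). I expect the main obstacle to be bookkeeping rather than conceptual: verifying carefully that after the row/column shifts one truly has $B' = cA'$ for a suitable scalar $c$ whose sign equals $\operatorname{sign}\bigl((a_{11}-a_{12}-a_{21}+a_{22})(b_{11}-b_{12}-b_{21}+b_{22})\bigr)$, and confirming that the positive rescalings used to match the magnitudes do not alter RD orbits. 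Once the algebraic identity $B' = cA'$ is pinned down, the classification into rescaled zero-sum versus rescaled coordination is immediate from the sign of $c$, and \cref{remark:orbit} closes the gap back to the original game $(A,B)$.
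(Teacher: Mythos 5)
Your proposal is correct and follows essentially the same route as the paper's proof: both hinge on choosing $c = \frac{b_{11}-b_{12}-b_{21}+b_{22}}{a_{11}-a_{12}-a_{21}+a_{22}}$ (nonzero by genericity) and observing that the residual $cA - B$ has vanishing mixed second difference, hence decomposes as $u_i + v_j$ and is absorbed by column/row shifts --- your preliminary normalization of the second row of $A$ and second column of $B$ is just a convenient coordinate choice for that same computation, and the deferred ``bookkeeping'' does go through (with your $c(\alpha_1-\alpha_2)=\beta_1-\beta_2$ forcing exactly the paper's $c$). The only remark worth making is that the auxiliary positive rescaling you invoke is unnecessary, since the scalar $c$ in the target form $(C,cC)$ already absorbs any magnitude mismatch, so shifts alone suffice; it is also harmless, as the rescaled zero-sum/coordination classification is invariant under positive rescaling of either payoff matrix.
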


\begin{proof}
By assumption we have that
$b_{11}-b_{12}-b_{21}+b_{22}$ and  $a_{11}-a_{12}-a_{21}+a_{22}$ be different from $0$.
Then, there exists $c\neq0$ such that 
$b_{11}-b_{12}-b_{21}+b_{22}= c (a_{11}-a_{12}-a_{21}+a_{22})$.
Let $D:=cA-B$ then $d_{11}-d_{12}-d_{21}+d_{22}=0$. Thus, for all $i,j$, we have $d_{ij}= d_{i2}+(d_{2j}-d_{22})=u_{i}+v_{j}$, with $u_{i}:= d_{i2}$ and $v_{j}:=d_{2j}-d_{22}$.
Since $ca_{ij}-b_{ij}=u_{i}+v_{j}$, we can define  $c_{ij}=\frac{b_{ij}+u_i}{c}=a_{ij}-\frac{1}{c}v_j$.
Hence, $(A,B)\sim (C,cC)$.
\end{proof}

In case (I) it suffices to examine only rescaled zero-sum games, as best-response dynamics cannot cycle in rescaled coordination games given they are simply weighted potential games.\footnote{In a rescaled coordination game, the ratio of the utilities of the agents is a constant that is independent of the action profile. Hence, any payoff-improving best-response for one agent is also payoff-improving for the other implying that any best response path must terminate at a Nash equilibrium.} 
Similarly, since in case (III) the game has three isolated equilibria, it cannot be equivalent to a rescaled zero-sum game since such games have the same equilibrium set as some zero-sum game and thus their set of equilibria must form a convex polytope\footnote{Any rescaled zero-sum game has the same Nash equilibrium set as some zero-sum game. The set of Nash equilibria in each zero-sum game is a convex polytope~\citep{neumann}.}.
Thus, in case (III), it suffices to explore only rescaled coordination games.

\subsection{Case I: No pure Nash equilibrium (Rescaled Zero-Sum Games)}

We will start off our analysis with Case I, involving rescaled zero-sum games with interior Nash equilibria. The analysis of strong swap regret minimization for RD will involve three steps. Firstly, we will show that all interior trajectories in this case are cycles. This topological argument hinges upon recent topological characterizations of RD trajectories in zero-sum games and variants thereof~\citep{piliouras2014optimization,PiliourasAAMAS2014,mertikopoulos2018cycles,boone2019darwin,nagarajan2020chaos}. These works typically argue for weaker notions of divergence (e.g., Poincar\'{e} recurrence) for almost all initial conditions, whereas in our case we will need to prove periodicity of all interior trajectories. The closest result to our own is in~\citet{boone2019darwin}, where periodicity of RD trajectories is argued for a different class of single-agent, evolutionary zero-sum games with three strategies. Our proof strategy will tailor those arguments to address our different class of games. Armed with this topological result, we will next prove a strong time-average property of $2\times2$ games with a unique interior Nash equilibrium and periodic RD trajectories. Specifically, we will show that along any such trajectory if we fix a specific mixed strategy of one of the agents and compute the time-average of their opponent conditioned only on those points where the first player applies the specified mixed strategy then the time-average of second agent will converge to the mixed Nash equilibrium. This is a significant strengthening of a well-known result that specifies that the time-average of external-regret minimization algorithms in zero-sum games \textit{over their whole trajectory} converges to the marginal of a Nash equilibrium. Leveraging this strong property, it will easily follow that no agent can profitably deviate from their current play even if we allow them to use deviations which are conditional to their current mixed strategy, implying strong swap regret minimization in this case.


\subsubsection{Periodicity of Interior Orbits}
\label{sec:periodic}

We will start of with a well-known theorem that constrains the possible limit behavior of smooth dynamical systems on a plane. Particularly, the Poincar\'{e}-Bendixson theorem provides sufficient conditions under which the limit behavior of such systems is periodic. Let $z=(x_1, y_1)$ denote the state of RD in a $2\times2$ rescaled zero-sum game
where we remove the constrained variables $x_2, y_2$ to express that the system has two degrees of freedom. 
Formally, a limit set $\omega(z)$ of an initial condition $z$ is the union of the limits of all its convergent sub-sequences, i.e., all its possible limit behaviors. Specifically, let $\phi : [0,1]^2 \times \mathbb{R} \rightarrow [0,1]^2$ be the flow of 
RD  such that for any point $z \in  [0,1]^2$, 
$\phi(z, -)$ defines a function of time corresponding to the trajectory of $x$. The \emph{$\omega$-limit (set) of $z$} is formally defined as the set of points $w \in  [0,1]^2$ such that
there exists a sequence $(t_n)$ diverging to $+\infty$ such that 
$\phi(z, t_n) \rightarrow w$.

\begin{theorem}[Poincar\'{e}-Bendixson theorem]
A limit set of a $C^1$ dynamical system over the plane, if non-empty and compact, that does not contain a rest point is a periodic orbit.   
\end{theorem}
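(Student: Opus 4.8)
The plan is to follow the classical topological argument, whose engine is the notion of a \emph{transversal} (a short line segment $S$ on which the vector field is everywhere transverse, so that the flow crosses it in a single consistent direction) together with a monotonicity property of successive crossings that is ultimately powered by the Jordan Curve Theorem. Throughout, let $\gamma^+(z) = \{\phi(z,t) : t \geq 0\}$ denote the forward orbit and $\omega(z)$ its $\omega$-limit set. I would begin by recording the standard structural facts: since $[0,1]^2$ is compact and forward-invariant, $\omega(z)$ is non-empty, compact, connected, and flow-invariant, in the sense that any point of $\omega(z)$ has its whole orbit contained in $\omega(z)$. These follow from elementary sequential-compactness and continuity arguments and use no planarity.

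First I would establish the \textbf{monotonicity lemma}: if the trajectory $\phi(z,\cdot)$ meets a transversal $S$ at the successive times $t_1 < t_2 < \cdots$, producing crossing points $p_1, p_2, \ldots$, then these points are monotone along $S$. The argument takes the trajectory arc from $p_k$ to $p_{k+1}$ and closes it up with the sub-segment of $S$ joining $p_{k+1}$ back to $p_k$, forming a Jordan curve $J$. By the Jordan Curve Theorem $J$ separates the plane, and transversality forces the flow to cross $S$ always in the same direction; hence once the trajectory has passed to one side of $J$ it cannot return, which pins down the ordering of the crossings. This is the step I expect to be the main obstacle, both because it is the only place where two-dimensionality is genuinely used and because making the Jordan-curve bookkeeping rigorous (orientation of the flow relative to $J$, and the resulting no-return conclusion) is the delicate part of the whole proof.

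With monotonicity in hand, I would next show that \textbf{$\omega(z)$ meets any transversal $S$ in at most one point}. Indeed, if $w \in \omega(z) \cap S$, then $\phi(z,\cdot)$ crosses $S$ at a sequence converging to $w$; were there a second such point $w'$, the crossings would have to accumulate at both $w$ and $w'$, contradicting the monotone ordering along $S$. Then, picking any $w \in \omega(z)$ and a point $v \in \omega(w) \subseteq \omega(z)$ (using closedness and invariance of $\omega(z)$), I would invoke the no-rest-point hypothesis to build a transversal $S$ through the regular point $v$. Since $v \in \omega(w)$, the orbit of $w$ crosses $S$ infinitely often; but every such crossing lies in $\omega(z)$, and $\omega(z) \cap S$ is a single point, so the orbit of $w$ returns exactly to $v$. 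Hence the orbit of $w$ is a \emph{periodic orbit} $\Gamma \subseteq \omega(z)$.

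Finally I would upgrade this inclusion to the equality $\omega(z) = \Gamma$. Suppose not; then by connectedness of $\omega(z)$ there is a point $q \in \Gamma$ that is a limit of points of $\omega(z) \setminus \Gamma$. Choosing a transversal through the regular point $q$, nearby points of $\omega(z)$ would force $\omega(z)$ to meet this transversal in more than one point (namely $q$ together with nearby accumulating points), contradicting the at-most-one-point property. Therefore $\omega(z) = \Gamma$ is a single periodic orbit, as claimed. The only ingredients used beyond routine compactness arguments are the Jordan Curve Theorem and transversality, so the conclusion is genuinely a planar phenomenon, consistent with the two-degrees-of-freedom reduction to the state $z = (x_1, y_1)$ used here.
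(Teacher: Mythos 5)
The paper does not actually prove this statement: the Poincar\'e--Bendixson theorem is quoted as classical background and immediately applied to the RD flow on $[0,1]^2$, so there is no in-paper proof to compare against. Your sketch is the standard textbook proof of the theorem (transversal segments, monotonicity of successive crossings via the Jordan Curve Theorem, the consequence that $\omega(z)$ meets any transversal in at most one point, periodicity of the orbit of $w \in \omega(z)$ obtained by taking $v \in \omega(w) \subseteq \omega(z)$, and a connectedness argument upgrading $\Gamma \subseteq \omega(z)$ to equality), and the outline is correct; you also correctly identify the monotonicity lemma as the one genuinely planar step.

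If you write it out in full, three points need the flow-box theorem made explicit, all routine but easy to elide. First, in the monotonicity lemma, the Jordan curve $J$ formed from the orbit arc $p_k \to p_{k+1}$ and the sub-segment of $S$ only yields the no-return conclusion once you check, via a flow box along $S$, that the vector field points into the same component of $\mathbb{R}^2 \setminus J$ along the whole sub-segment, making the relevant region positively invariant. Second, the claim that the orbit of $w$ ``crosses $S$ infinitely often'' because $v \in \omega(w)$ is itself a flow-box fact: points near $v$ are carried onto $S$ in short time, so each visit of the orbit of $w$ to a small neighborhood of $v$ produces a genuine transversal crossing. Third, in the final connectedness step, the accumulating points $x_n \in \omega(z)\setminus\Gamma$ near $q \in \Gamma$ do not lie on the transversal themselves; you must flow them onto $S$ through the flow box at $q$, observe the resulting crossing points lie in $\omega(z)$ by invariance, and note they differ from $q$ (otherwise $x_n$ would lie on the orbit of $q$, i.e.\ in $\Gamma$), which then contradicts the one-point property. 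You should also state at the outset that invariance and connectedness of $\omega(z)$ are available here because the limit set is assumed non-empty and compact, exactly matching the hypotheses of the theorem as stated. With these details filled in, your plan is a complete and correct proof.
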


We will apply the Poincar\'{e}-Bendixson theorem to show that all interior RD trajectories are periodic for these games.

\begin{theorem}
Given any rescaled zero-sum game with a unique interior Nash equilibrium, all interior trajectories are periodic.    
\end{theorem}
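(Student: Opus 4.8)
The plan is to exhibit an explicit constant of motion for the replicator flow, show that its level sets foliate the interior of the state space into invariant simple closed curves, and then apply the Poincar\'{e}--Bendixson theorem on each such curve. By Remark~\ref{remark:orbit} it suffices to work with a representative $(C,cC)$ of the rescaled zero-sum class, where $c<0$. Let $(x_*,y_*)$ be the assumed unique interior Nash equilibrium; the interior indifference conditions give that $(Cy_*)_i$ is independent of $i$ and $(x_*^\top C)_j$ is independent of $j$, both equal to $x_*^\top C y_*$. I then propose the candidate conserved quantity
\[ H(x,y) = D_{\mathrm{KL}}(x_*\,\|\,x) - \tfrac{1}{c}\,D_{\mathrm{KL}}(y_*\,\|\,y) = \sum_i x_{*i}\log\frac{x_{*i}}{x_i} - \frac{1}{c}\sum_j y_{*j}\log\frac{y_{*j}}{y_j}. \]

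The first step is to verify $\dot H \equiv 0$ along RD. Differentiating along the flow and substituting the replicator equations (recall the column player's field carries the factor $c$), the row contribution reduces to $x^\top Cy - x_*^\top Cy$ and the $(-1/c)$-weighted column contribution to $x^\top Cy_* - x^\top Cy$; the $x^\top Cy$ terms cancel, and the indifference identities $(Cy_*)_i \equiv x_*^\top C y_* \equiv (x_*^\top C)_j$ make the two remaining terms coincide, so $\dot H \equiv 0$. Since $c<0$ we have $-1/c>0$, so $H$ is a nonnegative combination of relative entropies: $H\geq 0$ with equality only at $(x_*,y_*)$, and $H\to +\infty$ as the state approaches the boundary (because $x_{*i},y_{*j}>0$ for an interior equilibrium). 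The second step records the geometry this forces. Each $D_{\mathrm{KL}}$ term is strictly convex in its argument, so on the two-dimensional reduced state $z=(x_1,y_1)$ the function $H$ is strictly convex with a unique minimizer at the equilibrium and no other critical point; its sublevel sets are compact and convex. Hence for every $c_0>0$ the level set $\{H=c_0\}$ is a simple closed curve lying in the interior, invariant under the flow (by conservation of $H$), and free of rest points, since the unique interior rest point of RD is $(x_*,y_*)$, where $H=0$.

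Finally, I fix any interior $z\neq (x_*,y_*)$, set $c_0=H(z)>0$, and conclude periodicity. The forward orbit of $z$ is confined to the compact invariant curve $\{H=c_0\}$, so $\omega(z)$ is nonempty, compact, and contains no rest point; the Poincar\'{e}--Bendixson theorem then gives that $\omega(z)$ is a periodic orbit contained in $\{H=c_0\}$. Because a periodic orbit is a Jordan curve while any proper closed connected subset of the circle $\{H=c_0\}$ is merely an arc, we must have $\omega(z)=\{H=c_0\}$; as the whole orbit of $z$ lies on this single periodic orbit, uniqueness of solutions forces the orbit of $z$ to coincide with it, so it is periodic. I expect the genuine obstacle to be precisely this last upgrade---passing from ``the $\omega$-limit set is periodic'' to ``the trajectory itself is periodic''---which is exactly what the conservation law and the convex, single-curve structure of the level sets are needed for; Poincar\'{e}--Bendixson alone would leave open the possibility of a trajectory spiralling toward a limit cycle rather than closing up. A secondary technical point to discharge carefully is that interior trajectories remain interior and bounded away from the boundary, which follows from invariance of the interior under RD together with the compactness of $\{H=c_0\}$ established above.
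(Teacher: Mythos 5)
Your proof is correct, and while it shares the paper's overall skeleton---the weighted KL-divergence invariant, confinement of interior orbits away from both the interior equilibrium and the boundary, and Poincar\'{e}--Bendixson to make the $\omega$-limit set a periodic orbit---it handles the crucial final upgrade (from ``the limit set is a cycle'' to ``the trajectory itself is a cycle'') by a genuinely different and cleaner mechanism. The paper first invokes the no-regret property of RD to identify the time-average of $\omega(z)$ with the interior Nash equilibrium $z^\star$, then runs two separate ray arguments: one showing $z^\star$ lies in the interior component of the Jordan curve $\omega(z)$ (otherwise a ray from $z^\star$ would cross $\omega(z)$ twice with strictly different values of the invariant, contradicting its constancy), and a second showing the ray from $z^\star$ through $z$ meets the level set of the invariant exactly once, forcing $z \in \omega(z)$. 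You instead exploit strict convexity of $H$ on the reduced state $(x_1,y_1)$ globally: each level set $\{H=c_0\}$, $c_0>0$, is the boundary of a compact convex sublevel set and hence a single Jordan curve, so the connected Jordan curve $\omega(z) \subseteq \{H=c_0\}$ must be the entire level curve, and $z \in \{H=c_0\} = \omega(z)$ follows immediately by conservation, with uniqueness of solutions closing the argument. Both routes ultimately rest on strict convexity of the invariant; yours uses it once, structurally (level sets are circles, so level curve $=$ periodic orbit), whereas the paper uses it twice, locally (monotonicity along rays), and additionally needs the no-regret/time-average input that your argument dispenses with entirely. Your version also has the minor merit of deriving the conserved quantity $D_{\mathrm{KL}}(x_*\,\|\,x) - \tfrac{1}{c} D_{\mathrm{KL}}(y_*\,\|\,y)$ explicitly from the rescaled structure (with $c<0$ ensuring positive weights) rather than citing it, and as a by-product yields the stronger structural fact that the interior is foliated by periodic orbits, namely the level curves of $H$. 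One point you were right to flag and should keep explicit in a final write-up: that the boundary of the square carries no points of $\{H=c_0\}$ (since $H \to +\infty$ there, using interiority of the equilibrium) is exactly what rules out boundary rest points on the level curve, and interior rest points of RD in $2\times 2$ games coincide with interior Nash equilibria, so uniqueness of the latter guarantees the level curve is rest-point free.
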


\begin{proof}
First, from~\citet{PiliourasAAMAS2014}, we have that the weighted sum of the Kullback-Leibler (KL) divergences of each agent's current mixed strategy from their respective interior Nash equilibrium strategy is a constant of the motion, i.e., it remains invariant over time for all orbits.
The weights are strictly positive numbers such that, after multiplying these numbers with the current agent utilities, the rescaled game is zero-sum. The KL divergence is equal to zero if and only if the respective probability distributions are equal to each other. 
Thus, the fact that for all non-trivial (i.e. non-equilibrium) orbits, the original weighted sum of KL divergences is positive implies that any such trajectory will stay bounded away from the interior equilibrium. The only other fixed points for RD lie on the boundary and correspond to pure strategy profiles, however, since the Nash equilibrium is interior the KL divergence becomes unbounded as we approach the boundary and thus due to the invariance and non-negativity of KL divergences any interior trajectory will stay bounded away from the boundary. Due to the Poincar\'{e}-Bendixson theorem, we have than any non-trivial orbit of RD will stay bounded away from all fixed points and thus the resulting limit set will be a periodic orbit.  
Finally, it remains to show that not only are the limit set of all trajectories cycles, but all trajectories themselves are cycles.  

  We know that for any interior initial condition $z$, $\omega(z)$ is a periodic orbit and thus a 
  a Jordan curve of the plane $[0,1]^2$, so it is separated  into two connected components: an interior $I$ and an exterior $E$. 
  
  By assumption, we know that there exists an interior rest point of RD, the interior Nash equilibrium. 
  Since RD is no-regret, the time-average of the strategy over $\omega(z)$
  is an equilibrium, that will lie in the convex hull, hence interior to 
  $[0,1]^2$. Let us denote it $z^\star$. 
  We claim that this $z^\star$ has to be a point of $I$. 
  Let us assume not, i.e., $z^\star\in E$. 
  Since $\omega(z)$ is non-trivial, its interior $I$ is non-empty. Pick a point $a \in I$. Draw a semi-infinite ray starting from $z^\star \in E$ in the direction of $a\in I$. 
  Since $I$ is bounded, this ray will transit from $E$ to $I$ then $I$ to $E$ at least once. 
  Hence, it crosses $\omega(z)$ at least two times. 
  Let us name those points $\omega_1$ then $\omega_2$. 
  The sum of KL divergences is a strictly convex function, which attains its global minimum at 
  $z^\star$, so will strictly increase as we advance along the ray. 
  Thus, the weighted sum of KL divergence at $\omega_2$ will be strictly larger than its value at $\omega_1$ , which contradicts the fact that it is a continuous invariant of motion. Therefore, we have that $z^\star \in I$.

  Summarizing, $\omega(z)$ is a cycle, and its time-average (the interior Nash equilibrium $z^\star$) lies in the interior $I$ of $\omega(z)$. 
  We now wish to show that the orbit starting from $z$ is a periodic orbit.
  To prove that, we will argue that $z \in \omega(z)$.
  From $z^\star$, draw a semi-infinite line $L$ in the direction of $z$. 
  As $I$ is bounded, this semi-infinite line have to cross $\omega(z)$ in at least one point. 
  Pick one of these points and call it $z_\omega$. 
  We claim that the weighted sum of KL divergences is equal to the sum of weighted KL divergences between $z^\star$ and $z$ on $L$ only at $z_\omega$. 
  Indeed, the sum of KL divergences is a strict convex function with global minimum at $z^\star$ (equal to zero), so it is strictly increasing as one moves along the line $L$. 
  Thus, $z_\omega$ is the unique intersection point between $L$ and the level set of the weighted KL divergences with value  equal to the weighted KL divergence between $z^\star$ and $z$.
  Therefore, we derive that $z = z_\omega \in \omega(z)$.
\end{proof}

Next, we will move forward to prove strong time-average properties for periodic trajectories of RD dynamics in $2\times2$ games with unique interior Nash equilibrium, which naturally apply to our case of rescaled zero-sum games with unique interior Nash equilibrium.

\subsubsection{Average Opponent Strategies}

\begin{figure}[t]
    \centering
    \includegraphics[width=\textwidth]{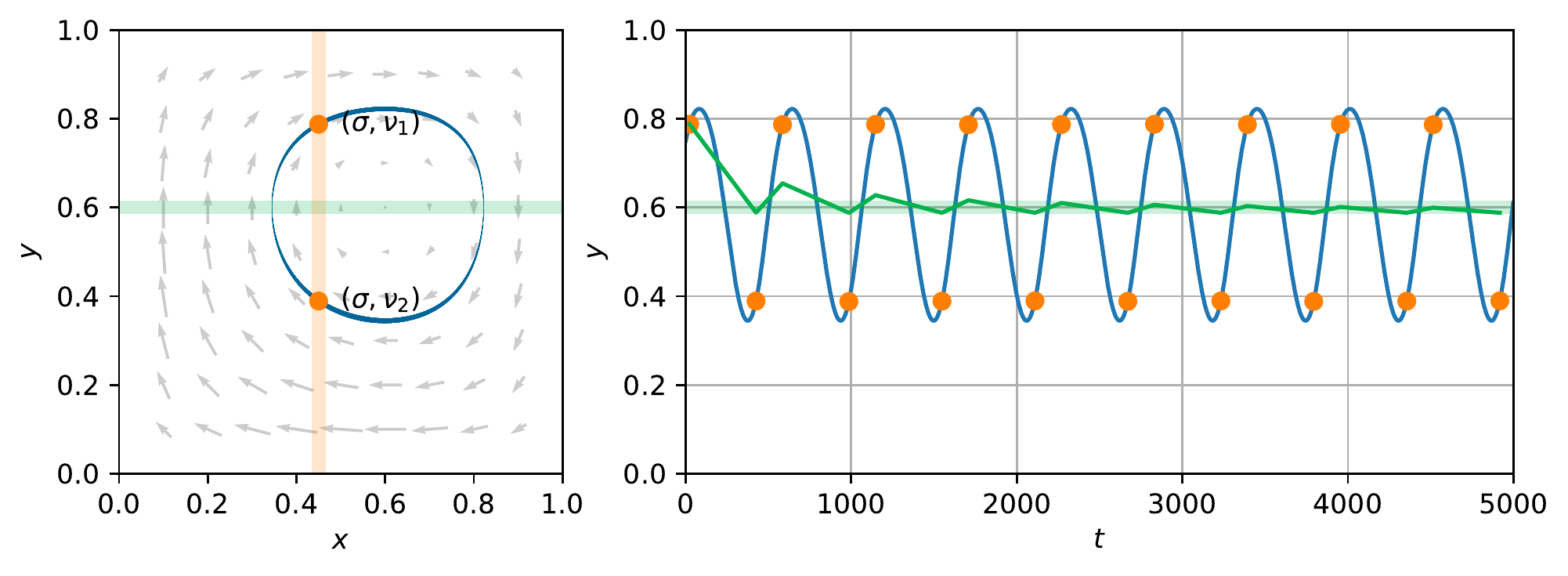}%
    \caption{Illustration of RD for a two-player 2-strategy game.
    The left panel illustrates the player dynamics (with a particular periodic trajectory traced in blue and the corresponding line thickness indicating the row player's update speed, $v_x$).
    For a fixed orbit, whenever the row player passes through mixed strategy $\sigma \in [0,1]$, it observes one of two possible mixed strategies for the column player: $\nu_1, \nu_2 \in [0,1]$.
    The right panel illustrates the recurrence of these orbital crossings, with green series indicating the time-average convergence of the observed opponent strategies to their Nash distribution.
    }
    \label{fig:mp_example}
\end{figure}

We first consider the average opponent strategy that an RD learner observes whenever playing a particular mixed strategy, a concept used to establish our subsequent regret-minimization proofs. 
In order to better understand the notion of `time-average opponent strategy' derived here, consider the example in \cref{fig:mp_example}. 
Here, the left panel illustrates RD trajectories for a two-player 2-strategy game.
The opponent strategy that the row player observes whenever their own strategy, $x$, is fixed to a particular value, $\sigma$, is illustrated via the highlighted vertical band in the figure;
for a fixed orbit, whenever the row player passes through the mixed strategy $x=\sigma$, they observe one of two possible strategies for the (opposing) column player.
The right panel illustrates the time-average of these observed opponent strategies, for which we seek to derive an analytic formula here. 

Below, we derive the time-average opponent strategy result specifically for periodic $2 \times 2$ games (i.e., those wherein the dynamics cycle around an interior Nash equilibrium), as strong swap regret minimization can be handled more straightforwardly for other instances. 

\begin{proposition}\label{prop:rd_avg_strat}
     Consider a periodic $2\times 2$ game with unique interior Nash equilibrium, i.e., a $2\times 2$ game such that all interior orbits of RD are periodic.
    Let $(x^t, y^t)_{t \geq 0}$ be the RD trajectories with some fixed interior initial condition $(x^0, y^0) \in [0,1]^2$. 
    Let $\sigma \in [0,1]$ denote a fixed mixed strategy for the row player, and let $\nu_1, \nu_2 \in [0,1]$ denote the (at most) two possible mixed strategies played by the column player, i.e., such that both $(\sigma, \nu_1)$ and $(\sigma, \nu_2)$ lie on the orbit of RD.\footnote{On top of periodicity of all interior orbits, this proposition assumes that given any such orbit and for any fixed mixed strategy of the first (resp., second) player, the points of the orbit where the first (resp., second) agent applies that mixed strategy correspond to at most two mixed strategies of the second player.
    In \cref{sec:periodic}, we have shown that the necessary conditions for this theorem are satisfied by all rescaled zero-sum games with a unique interior Nash equilibrium.}
    Let $(v^i_x, v^i_y)$ denote the gradient of RD at point $(\sigma, \nu_i)$, for $i=1,2$. 
    Then the time-average opponent strategy observed by the row player up to time $t$ is given by
    \begin{align}\label{eq:rd_avg_strat}
        \frac{\frac{1}{v^1_x}\nu_1 + \frac{1}{v^2_x}\nu_2}{\frac{1}{v^1_x} + \frac{1}{v^2_x}}  + O\left(\frac{1}{t}\right)\, .
    \end{align}
\end{proposition}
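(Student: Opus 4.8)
The plan is to exploit periodicity to reduce the infinite-horizon time-average to a single-period average, and then to evaluate that average by an occupation-measure (local-time) argument, which naturally produces the speed-weighted combination of $\nu_1$ and $\nu_2$. Since the proposition assumes that all interior orbits are periodic, the orbit through $(x^0,y^0)$ is a closed curve with some period $T_0$. For any bounded observable, the time-average over $[0,t]$ agrees with its average over a single period up to an error of $O(T_0/t)=O(1/t)$, arising solely from the incomplete final period. This accounts for the $O(1/t)$ term in \eqref{eq:rd_avg_strat}, so it suffices to compute the one-period average of the column player's strategy conditioned on the row player being at $\sigma$.

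The difficulty is that the exact event $\{x^s=\sigma\}$ is visited only at the isolated crossing times and hence has zero Lebesgue measure in time; the correct formalization is to average $y^s$ over the times at which the trajectory lies in a thin band $I_\epsilon=(\sigma-\epsilon,\sigma+\epsilon)$ and then let $\epsilon\to 0$. Over one period the orbit, being a Jordan curve enclosing the interior Nash equilibrium, crosses the vertical line $\{x=\sigma\}$ exactly in the (at most two) transversal points $(\sigma,\nu_1)$ and $(\sigma,\nu_2)$. Near the $i$-th crossing I would linearize the flow, $x^s=\sigma+v^i_x\,(s-s_i)+o(s-s_i)$, so that the dwell time of the trajectory in $I_\epsilon$ around that crossing is $2\epsilon/|v^i_x|+o(\epsilon)$, while continuity gives $y^s=\nu_i+O(\epsilon)$ throughout this short interval. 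Equivalently, changing the time variable to $x$ on each monotone arc contributes a Jacobian $\mathrm{d}t=\mathrm{d}x/|v_x|$, so that the occupation density at the level $x=\sigma$ is $1/|v^1_x|+1/|v^2_x|$ and the density-weighted mean of $y$ is exactly
\begin{align*}
    \frac{\nu_1/|v^1_x| + \nu_2/|v^2_x|}{1/|v^1_x| + 1/|v^2_x|}\,,
\end{align*}
which matches \eqref{eq:rd_avg_strat}.

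The main obstacle is handling the two limits carefully: the finite-horizon error $O(1/t)$ and the band error $O(\epsilon)$ must be controlled simultaneously, and one must justify interchanging $\epsilon\to 0$ with $t\to\infty$; transversality of the two crossings makes the local-time approximation uniform over the period, rendering this step routine once set up. Two further points deserve attention. First, the sign convention: the two crossings of $\{x=\sigma\}$ have oppositely-signed horizontal velocity, so the relevant weights are the reciprocals of the \emph{speeds} $|v^i_x|$, ensuring the result is a genuine convex combination lying between $\nu_1$ and $\nu_2$, and I read \eqref{eq:rd_avg_strat} accordingly. Second, the degenerate case where $\sigma$ is the extremal $x$-value of the orbit (a tangential crossing with $v_x=0$, i.e.\ only one relevant $\nu$) must be excluded or treated separately, consistent with the proposition's assumption of at most two crossing strategies.
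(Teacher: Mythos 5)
Your proposal is correct and follows essentially the same route as the paper's proof: reduce to a single period (incurring the $O(1/t)$ term), thicken the level set $\{x=\sigma\}$ to a band of width $2\varepsilon$, linearize the flow at the two transversal crossings so each contributes dwell time $2\varepsilon/|v^i_x|+O(\varepsilon)$ with $y^s=\nu_i+O(\varepsilon)$, and let $\varepsilon\to 0$ in the ratio of integrals. Your explicit use of the speeds $|v^i_x|$ (the paper writes $v^i_x$ without absolute values, though its later use of $|zAy(\tau_i)|$ shows this is what is meant) and your flagging of the tangential-crossing degeneracy are careful refinements of, not departures from, the paper's argument.
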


\begin{proof}
    We derive the time-average opponent strategy observed within time spans that are multiples of the period of the dynamics, $T$.
    It is sufficient to consider this case, as other cases can be dealt with by noting they modify the time-average observed strategy by at most a constant.\footnote{Specifically, writing $t = NT + r$, with $N \in \mathbb{N}_0$ and $r \in [0, T)$, the contribution of opponent strategies from the interval $(t-r, t)$ to the time-average up to time $t$ goes as $O(\frac{1}{t})$.}
    
    Let $\varepsilon > 0$.
    The average strategy played by the column player whilst the row player plays a strategy within $\varepsilon$ of $\sigma$ is
    \begin{align}\label{eq:avg_strat_def}
        \frac{\int_0^T y^t \mathbbm{1}[x^t \in \left[\sigma-\varepsilon,\sigma+\varepsilon]\right] \mathrm{d}t}{\int_0^T \mathbbm{1}[x^t \in \left[\sigma-\varepsilon,\sigma+\varepsilon]\right] \mathrm{d}t} \, .
    \end{align}
    We aim to derive a limiting expression as $\varepsilon \rightarrow 0$. Let $t_1, t_2 \in (0, T)$ be such that $(x^{t_1}, y^{t_1}) = (\sigma, \nu_1)$ and $(x^{t_2}, y^{t_2}) = (\sigma, \nu_2)$. 
    Since the path of RD is differentiable, we have
    \begin{align*}
        x^{t_1 + t} = \sigma + t v^1_x + O(t)\, , \quad & y^{t_1 + t} = \nu_1 + t v^1_y + O(t) \\
        x^{t_2 + t} = \sigma + t v^2_x + O(t)\, , \quad & y^{t_2 + t} = \nu_2 + t v^2_y + O(t) \, ,
    \end{align*}
    where $(v^i_x, v^i_y)$ denotes the gradient of RD at $(\sigma, \nu_i)$, for $i=1,2$.
    Then note that the condition $x^t \in [\sigma - \varepsilon, \sigma + \varepsilon]$ is satisfied for $t \in [t_1 - \frac{\varepsilon}{v^1_x} + O(\varepsilon), t_1 + \frac{\varepsilon}{v^1_x} + O(\varepsilon)] \cup [t_2 - \frac{\varepsilon}{v^2_x} + O(\varepsilon), t_2 + \frac{\varepsilon}{v^2_x} + O(\varepsilon)]$. Hence, the denominator integral in \cref{eq:avg_strat_def} simplifies to
    \begin{align*}
        2\frac{\varepsilon}{v^1_x} + 2\frac{\varepsilon}{v^2_x} + O(\varepsilon) \, .
    \end{align*}
    As for the numerator in \cref{eq:avg_strat_def}, observe that
    \begin{align*}
        \int_0^T y^t \mathbbm{1}[x^t \in \left[\sigma-\varepsilon,\sigma+\varepsilon]\right] \mathrm{d}t
        = & \sum_{i=1}^2 \int_{t_i - \frac{\varepsilon}{v^i_x} +O(\varepsilon)}^{t_i + \frac{\varepsilon}{v^i_x}  + O(\varepsilon)} y^t \mathrm{d}t \\
        = & \sum_{i=1}^2 \int_{-\frac{\varepsilon}{v^i_x} + O(\varepsilon)}^{\frac{\varepsilon}{v^i_x}  + O(\varepsilon)} \left(\nu_i + t v^i_y + O(t)\right) \mathrm{d}t \\
        = & \sum_{i=1}^2 \left(\left(2\frac{\varepsilon}{v^i_x} + O(\varepsilon)\right) \nu_i + O(\varepsilon) \right) \, .
    \end{align*}
    Altogether, the average strategy faced by the row player in \cref{eq:avg_strat_def} is rewritten as
    \begin{align*}
        \frac{\sum_{i=1}^2 \left(2\frac{\varepsilon}{v^i_x} + O(\varepsilon)\right) \nu_i + O(\varepsilon)}{2\frac{\varepsilon}{v^1_x} + 2\frac{\varepsilon}{v^2_x} + O(\varepsilon)} \, .
    \end{align*}
    Letting $\varepsilon \rightarrow 0$, we compute
    \begin{align*}
        \lim_{\varepsilon \rightarrow 0} \frac{\sum_{i=1}^2 \left((2\frac{\varepsilon}{v^i_x} + O(\varepsilon)\right) \nu_i + O(\varepsilon)}{2\frac{\varepsilon}{v^1_x} + 2\frac{\varepsilon}{v^2_x} + O(\varepsilon)}
        = & \lim_{\varepsilon \rightarrow 0} \left\lbrack \frac{(2\frac{\varepsilon}{v^1_x} \nu_1 + 2\frac{\varepsilon}{v^2_x} \nu_2)}{2\frac{\varepsilon}{v^1_x} + 2\frac{\varepsilon}{v^2_x} + O(\varepsilon)} + \frac{O(\varepsilon)}{2\frac{\varepsilon}{v^1_x} + 2\frac{\varepsilon}{v^2_x} + O(\varepsilon)} \right\rbrack \\
        = & \frac{\frac{1}{v^1_x} \nu_1 + \frac{1}{v^2_x} \nu_2}{\frac{1}{v^1_x} + \frac{1}{v^2_x}} \, .
    \end{align*}
\end{proof}

\begin{remark}\label{rem:param}
    In the representation \cref{eq:rd_avg_strat} of the time-average opponent strategy induced by \cref{prop:rd_avg_strat}, the weights  do not depend on the underlying parametrization choice.
    Namely, consider any reparameterization of mixed strategy $(x,y)$ via $(\theta^1,\theta^2) = (g(x),g(y))$, where $g$ is a non-decreasing  differentiable  function. Then, a similar line of arguments leads to the following result:
    \begin{align*}
    \frac{\frac{1}{v^1_\theta}\; g(\nu_1) + \frac{1}{v^2_\theta} \; g(\nu_2)}{\frac{1}{v^1_\theta} + \frac{1}{v^2_\theta}} + O\left(\frac{1}{t}\right)\;.
    \end{align*}
\end{remark}
\cref{prop:rd_avg_strat} implies in particular that, whenever the $2\times 2$ game of interest is also zero-sum while exhibiting a unique interior Nash equilibrium, the time-average opponent strategy identifies to the Nash equilibrium itself.
This result is proven in the following proposition. 
\begin{proposition}\label{lemma:Nash_Opponent}
    Consider a periodic $2\times 2$ game with unique interior Nash equilibrium, i.e., a $2\times 2$ game such that all interior orbits of RD are periodic. Furthermore, assume that given any such periodic orbit and for any fixed mixed strategy of the first (resp., second) player, the points of the orbit where the first (resp., second) agent applies that mixed strategy correspond to at most two mixed strategies of the second player such that the equilibrium strategy of the second (resp., first) player lies in their convex combination.\footnote{In \cref{sec:periodic}, we have shown that the necessary conditions for this theorem are satisfied by all rescaled zero-sum games with a unique interior Nash equilibrium.} Then,
    for any fixed first player strategy, the corresponding time-average opponent strategy in RD is the Nash strategy of the opponent. 
\end{proposition}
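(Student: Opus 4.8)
The plan is to feed the explicit weighted-average formula of \cref{prop:rd_avg_strat} into the $2\times 2$ structure of RD and show that the resulting convex combination of $\nu_1$ and $\nu_2$ collapses to the opponent's Nash probability, which I denote $\nu_*$. Writing the row player's mixed strategy as a single coordinate $x \in [0,1]$ (the probability of the first pure strategy) and likewise $y = \nu \in [0,1]$ for the column player, the replicator update for the row player reads $\dot{x} = x(1-x)\big[(Ay)_1 - (Ay)_2\big]$, and the bracket is an affine function of $\nu$, say $\alpha \nu + \beta$ with $\alpha = a_{11}-a_{12}-a_{21}+a_{22}\neq 0$. The first key observation is that this bracket vanishes precisely at $\nu = \nu_*$: an interior Nash equilibrium forces the row player to be indifferent between its two pure strategies against $y_*$, i.e.\ $(Ay_*)_1 = (Ay_*)_2$, so $\alpha\nu_* + \beta = 0$. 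Hence at the two crossing points one has $v^i_x = \dot{x}\big|_{(\sigma,\nu_i)} = \sigma(1-\sigma)\,\alpha\,(\nu_i - \nu_*)$ for $i=1,2$.

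Next I would substitute these velocities into \cref{eq:rd_avg_strat}. The common prefactor $\sigma(1-\sigma)\alpha$ cancels between numerator and denominator, so that the weights of the time-average become proportional to $1/|\nu_i - \nu_*|$. I would be explicit here that it is the \emph{magnitudes} $|v^i_x|$ that enter, since these quantify the dwell time of the orbit near the line $\{x=\sigma\}$ and the time-average is genuinely a convex (hence positively weighted) combination. It then remains to evaluate a purely algebraic expression.

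The crucial structural input is the hypothesis that $\nu_*$ lies in the convex hull of $\nu_1$ and $\nu_2$, i.e.\ strictly between the two crossing values. Geometrically this is forced: a vertical line $\{x=\sigma\}$ meets the closed orbit encircling the interior equilibrium in two points at which $\dot{x}$ has opposite signs, so $\nu_1-\nu_*$ and $\nu_2-\nu_*$ carry opposite signs. I would therefore set $\nu_1 = \nu_* - d_1$ and $\nu_2 = \nu_* + d_2$ with $d_1, d_2 > 0$ and compute directly: the numerator $\nu_1/d_1 + \nu_2/d_2$ equals $\nu_*(1/d_1 + 1/d_2)$ once the $-1$ and $+1$ terms cancel, while the denominator is $1/d_1 + 1/d_2$, so the ratio is exactly $\nu_*$, independent of $d_1, d_2$. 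The $O(1/t)$ error term of \cref{prop:rd_avg_strat} vanishes in the limit, yielding the claim; the symmetric argument with the roles of the two players exchanged gives the statement for the column player fixing a strategy.

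The step I expect to require the most care is the sign bookkeeping: \cref{prop:rd_avg_strat} is written with $1/v^i_x$, yet the two crossing velocities point in opposite directions, and it is precisely the positive dwell-time weights $1/|v^i_x| \propto 1/|\nu_i - \nu_*|$ — together with the straddling $\nu_1 < \nu_* < \nu_2$ — that make the weighted average telescope to $\nu_*$. A minor loose end worth a sentence is the boundary case $\sigma \in \{x_{\min}, x_{\max}\}$, where the orbit is tangent to $\{x=\sigma\}$ and the single observed opponent strategy is already $\nu_*$, so no division issue arises.
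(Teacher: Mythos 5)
Your proposal is correct and follows essentially the same route as the paper's proof: both exploit Nash indifference to write the row player's crossing velocity as a common nonzero factor times $(\nu_i - \nu_*)$, substitute the resulting weights $1/|\nu_i-\nu_*|$ into the weighted-average formula of \cref{prop:rd_avg_strat}, and use the straddling hypothesis (opposite signs of $\nu_1-\nu_*$ and $\nu_2-\nu_*$) to make the average collapse to $\nu_*$. The only difference is cosmetic: the paper works in the logit parameterization (invoking \cref{rem:param} so the prefactor disappears), whereas you work directly in probability coordinates and cancel the common factor $\sigma(1-\sigma)\alpha$ by hand; your explicit attention to the absolute values in the weights and to the tangency case is a welcome tightening of details the paper leaves implicit.
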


\begin{proof}
Consider the logit parameterization of strategies in these 2-strategy games, wherein a player's strategy is described by a single real number $\tau \in \mathbb{R}$, and the corresponding strategy probabilities are given by $e^{\tau}(1+e^\tau)^{-1}$ and $(1+e^\tau)^{-1}$.
Given two joint strategies $(\sigma, \tau_1)$, $(\sigma, \tau_2)$ in logit parameterization on the orbit of RD, the gradient for the row player under RD is given by the difference in payoffs for row strategies 1 and 2 against the column player's joint strategy, namely:
\begin{align*}
    \begin{pmatrix}
        1 & -1 
    \end{pmatrix}
    A
    \begin{pmatrix}
        e^{\tau_2}(1+e^{\tau_2})^{-1} \\
        (1+e^{\tau_2})^{-1} \\
    \end{pmatrix} \, ,
\end{align*}
and similarly for the column player. 
For shorthand, denote  $z := (1,\ -1)$ and $y(\tau) := (e^\tau(1+e^\tau)^{-1} ,\ \ (1+e^\tau)^{-1})^\top$. 
Per \cref{prop:rd_avg_strat} and \cref{rem:param}, the average opponent strategy that the row player experiences when playing $\sigma$ is expressed by
\begin{align*}
    \frac{\frac{e^{\tau_1}(1+e^{\tau_1})^{-1}}{|zAy(\tau_1)|} + \frac{e^{\tau_2}(1+e^{\tau_2})^{-1}}{|zAy(\tau_2)|}}{\frac{1}{|zAy(\tau_1)|} + \frac{1}{|zAy(\tau_2)|}} \, .
\end{align*}

Let $p_i:=e^{\tau_i}(1+e^{\tau_i})^{-1}$ so that $y(\tau_i)=(p_i,1-p_i)^\top$, for $i=1,2$.
Similarly, let $p_*$ denote the probability of the first strategy in the unique Nash equilibrium of the game, so that $y_* := (p_*, 1-p_*)^{\top}$. For the statement of the proposition to be valid, it remains to prove the following relation:
\begin{align*}
    \frac{\frac{1}{|zAy(\tau_1)|} p_1 + \frac{1}{|zAy(\tau_2)|} p_2}{\frac{1}{|zAy(\tau_1)|} + \frac{1}{|zAy(\tau_2)|}} \;=\; p_* \, ,
\end{align*}
which rewrites equivalently as
\begin{align}\label{eq:relation}
    {\frac{1}{|zAy(\tau_1)|} (p_1-p_*) + \frac{1}{|zAy(\tau_2)|} (p_2-p_*)} \;=\; 0 \, .
\end{align}
By definition of the Nash equilibrium $y_*$, recall that $zAy_*=0$, so that 
$$
|zAy(\tau_i)| =  |zAy(\tau_i)-zAy_*| = |zAz^\top| \times |p_i-p^*|, \quad \mbox{for }i=1,2\;.
$$
This allows one to rewrite the targeted relation \cref{eq:relation} as
\begin{align*}
    \frac{1}{|zAz^\top|} \left( \frac{(p_1-p_*)}{|p_1-p_*|} +  \frac{(p_2-p_*)}{|p_2-p_*|} \right) \;=\; 0 \, ,
\end{align*}
which holds true since $p_* \in [p_1, p_2]$ as $(\sigma,\tau_1)$ and $(\sigma,\tau_2)$ belong to the same orbit of RD. 
\end{proof}

We next prove the result establishing minimization of strong swap regret for Case I. 
\begin{proposition}\label{prop:interior}
    In any generic $2\times2$ game with a unique interior Nash equilibrium, RD minimizes strong swap regret. 
\end{proposition}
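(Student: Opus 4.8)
The plan is to exploit the periodicity of interior orbits (established in \cref{sec:periodic}) to collapse the entire strong swap regret computation down to a single period, and then to evaluate the per-period regret of an \emph{arbitrary} measurable deviation $\phi$ by changing variables from time to the row player's own strategy value. Writing the orbit period as $T_0$ and focusing on the row player, who at time $t$ faces utility $u^t = Ay^t$, periodicity lets me decompose any horizon as $t = N T_0 + r$ with $r \in [0,T_0)$: the regret of a fixed $\phi$ equals $N$ times its regret over one period plus a contribution from the partial window $(NT_0,t)$ that is bounded uniformly in $t$, since $u^t$ is bounded on the invariant interior region and $r < T_0$. Hence it suffices to show that the per-period regret
\[
    R(\phi) = \int_0^{T_0} \langle \phi(x^t) - x^t,\, A y^t\rangle\,\mathrm{d}t
\]
is non-positive for every measurable $\phi:\Delta(\mathcal A^1)\to\Delta(\mathcal A^1)$, which forces the total strong swap regret to be $O(1)=o(T)$.

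To evaluate $R(\phi)$, I would change variables from $t$ to the scalar strategy value $\sigma = x^t \in [0,1]$. Generically the orbit is a simple closed curve encircling the interior Nash equilibrium, so $\sigma$ ranges over an interval $[\sigma_{\min},\sigma_{\max}]$ and, away from the two tangency points, each value $\sigma$ is attained exactly twice per period—once on the arc where $x$ increases and once where it decreases—at which the column player plays $\nu_1(\sigma)$ and $\nu_2(\sigma)$. On each arc $\mathrm dt = \mathrm d\sigma/|v_x^i(\sigma)|$, where $v_x^i$ is the speed of the row coordinate under RD at $(\sigma,\nu_i)$, so that, writing $\hat\sigma=(\sigma,1-\sigma)^\top$ and $\hat\nu_i = (\nu_i,1-\nu_i)^\top$,
\[
    R(\phi) = \int_{\sigma_{\min}}^{\sigma_{\max}} \Big\langle \phi(\hat\sigma)-\hat\sigma,\; A\Big(\tfrac{\hat\nu_1(\sigma)}{|v_x^1(\sigma)|}+\tfrac{\hat\nu_2(\sigma)}{|v_x^2(\sigma)|}\Big)\Big\rangle\,\mathrm d\sigma .
\]

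The crux is then to recognize the inner combination of opponent strategies as exactly the inverse-speed weighted average controlled by \cref{prop:rd_avg_strat,lemma:Nash_Opponent}. By the latter, that average equals the column player's Nash strategy $y_*$, i.e. $\tfrac{\hat\nu_1}{|v_x^1|}+\tfrac{\hat\nu_2}{|v_x^2|} = \big(\tfrac{1}{|v_x^1|}+\tfrac{1}{|v_x^2|}\big)\,y_*$. Substituting, the integrand becomes $\big(\tfrac{1}{|v_x^1|}+\tfrac{1}{|v_x^2|}\big)\langle \phi(\hat\sigma)-\hat\sigma,\,Ay_*\rangle$. Finally, because the Nash equilibrium is interior the row player is indifferent across its two strategies against $y_*$, so $Ay_*$ is a constant vector; since $\phi(\hat\sigma)$ and $\hat\sigma$ are both probability vectors, $\langle \phi(\hat\sigma)-\hat\sigma,\, Ay_*\rangle = 0$. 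Thus $R(\phi)=0$ for \emph{every} $\phi$, the per-period regret vanishes identically, and strong swap regret is $O(1)$. The identical argument with the players' roles exchanged (using the column-player analogue of \cref{lemma:Nash_Opponent}) settles the column player, so both players minimize strong swap regret in self-play.

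The main obstacle I anticipate is not the pointwise gain—which is identically zero—but making the change of variables fully rigorous: I must justify that each $\sigma$ in the open interval has exactly two transverse crossings with well-defined nonzero speeds $v_x^i$, control the behavior near the two tangency points $\sigma_{\min},\sigma_{\max}$ (a measure-zero set contributing nothing to the integral), and note that optimizing $\phi$ pointwise in $\sigma$ is legitimate since the integrand is separable in $\sigma$. A secondary point to verify is that the partial-period remainder is genuinely $O(1)$, which follows from boundedness of $u^t$ on the invariant interior region carrying the periodic orbit.
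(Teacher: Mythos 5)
Your proposal follows essentially the same route as the paper's own proof, but carries it out in more detail: the paper's argument is a short sketch asserting that, by \cref{prop:rd_avg_strat} and \cref{lemma:Nash_Opponent}, the opponent's time-average strategy conditional on any fixed own mixed strategy is the opponent's Nash component, hence no deviation conditioned on the current mixed strategy is profitable. You make the implicit bookkeeping explicit — the decomposition into full periods plus an $O(1)$ remainder, the change of variables from $t$ to $\sigma$ with inverse-speed weights, and the pointwise identity $\langle \phi(\hat\sigma)-\hat\sigma, Ay_*\rangle = 0$ from indifference at an interior equilibrium — showing the per-period gain of \emph{every} measurable $\phi$ is exactly zero. (Your weights $1/|v_x^i|$ are in the probability coordinate while \cref{lemma:Nash_Opponent} is proved in logit coordinates, but at fixed $\sigma$ the conversion factor $\sigma(1-\sigma)$ is common to both crossings, so the weighted averages coincide; this is the content of \cref{rem:param}.)

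There is, however, one genuine gap at the very first step: you invoke ``periodicity of interior orbits (established in \cref{sec:periodic})'', but that section proves periodicity only for \emph{rescaled zero-sum} games with a unique interior Nash equilibrium, not for arbitrary generic $2\times2$ games with a unique interior equilibrium. The missing reduction — which is where the paper's proof actually begins — is \cref{lemma:equi} together with \cref{remark:orbit}: every generic $2\times2$ game is either rescaled zero-sum or rescaled coordination, equivalent games have identical RD orbits, and a game whose unique equilibrium is interior has no pure Nash equilibrium, whereas rescaled coordination games are weighted potential games and therefore always possess a pure equilibrium. Hence the game must be (equivalent to) a rescaled zero-sum game, which licenses the periodicity claim. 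Relatedly, the ``exactly two transverse crossings'' property you flag as your main obstacle is handled in the paper as an assumption of \cref{prop:rd_avg_strat} (see its footnote), justified for rescaled zero-sum games by the same machinery: the weighted sum of KL divergences is a strictly convex constant of motion, so the orbit is a level set meeting each line $x=\sigma$ in at most two points, with the degenerate tangencies at $\sigma_{\min},\sigma_{\max}$ contributing nothing, as you note. With the equivalence reduction added, your argument is complete and matches the paper's.
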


\begin{proof}
    By \cref{lemma:equi} along with the fact that the game does not have any pure Nash equilibrium, we have that the game has to be equivalent to a rescaled zero-sum game. 
    In such a case, we have that all interior trajectories of RD are periodic. 
    Following the line of arguments of \cref{prop:rd_avg_strat} and \cref{lemma:Nash_Opponent}, we can conclude that RD minimizes strong swap regret, since for any fixed first player strategy, the average opponent strategy it observes is the Nash equilibrium component of the opponent and thus no profitable deviation is possible even if the agent is allowed to condition their deviating strategy to their current mixed strategy.
\end{proof}

The argument about the time-average convergence to the Nash equilibrium is specific to zero-sum games with a unique interior Nash equilibrium. 
In some sense, this is the most interesting subclass in which to study strong swap regret minimization, since this minimization result is itself strongest in this class of games; we will see below that in other cases, strong swap regret minimization coincides with weaker notions of regret.

\subsection{Case II: Games with a Unique Pure Nash Equilibrium}
In the case with a unique Nash equilibrium on the simplex boundary, as argued above, RD converges to the pure Nash equilibrium, which we will exploit to establish a link between external regret and strong swap regret. 
\begin{proposition}
    In any generic $2\times2$ game with a unique pure Nash equilibrium, RD minimizes strong swap regret. 
\end{proposition}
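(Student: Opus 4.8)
The plan is to combine the convergence of RD to the pure Nash equilibrium (already established for this case) with a simple pointwise domination showing that strong swap regret is controlled by the \emph{instantaneous} best-response regret, which vanishes in the limit.

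First I would record the domination bound. Focusing on the row player (the column player being symmetric), its utility process is $u^t = Ay^t$. For \emph{any} measurable deviation $\phi : \Delta(\mathcal{A}^1) \to \Delta(\mathcal{A}^1)$ and any time $t$, the vector $\phi(x^t)$ is itself a mixed strategy, so $\langle \phi(x^t), u^t \rangle \le \max_{w \in \Delta(\mathcal{A}^1)} \langle w, u^t \rangle = \max_k (Ay^t)_k$. Integrating and taking the supremum over $\phi$ gives
\[
    \max_{\phi} \int_0^T \langle \phi(x^t), u^t\rangle \mathrm{d}t - \int_0^T \langle x^t, u^t \rangle \mathrm{d}t \ \le\ \int_0^T r(t)\, \mathrm{d}t, \qquad r(t) := \max_k (Ay^t)_k - \langle x^t, Ay^t\rangle \ge 0 .
\]
Thus the row player's strong swap regret is sandwiched between its external regret and the integrated instantaneous best-response regret $\int_0^T r(t)\,\mathrm{d}t$ (this is precisely the link between external and strong swap regret alluded to above), and it suffices to prove the latter is $o(T)$.

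Next I would pin down the convergence of RD. In a generic $2\times2$ game whose best-response graph has a single sink, at least one player has a strictly dominant strategy; say the row player's is strategy $i$. Then $(Ay)_i > x^\top A y$ for every interior $x$, so $\dot x_i > 0$ and $x^t_i$ increases monotonically to $1$, i.e. $x^t \to e_i$. Consequently the column player faces utilities converging to the $i$-th row of $B$, whose unique maximizer is the best response $j$, and an analogous monotonicity argument drives $y^t \to e_j$, where $(e_i, e_j) = (x_*, y_*)$ is the pure Nash equilibrium. Finally, with $x^t \to e_i$ and $y^t \to e_j$ I would show $r(t) \to 0$: by continuity of $y \mapsto \max_k (Ay)_k$ we have $\max_k (Ay^t)_k \to \max_k a_{kj} = a_{ij}$, the last equality holding because $i$ is the unique best response to $j$, and $\langle x^t, Ay^t\rangle \to \langle e_i, Ae_j\rangle = a_{ij}$. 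Hence $r(t) \to 0$. Being nonnegative and convergent to $0$, its running average vanishes, so $\int_0^T r(t)\,\mathrm{d}t = o(T)$; together with the domination bound this shows the row player's strong swap regret is $o(T)$, and the identical argument for the column player completes the proof.

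I expect the only genuinely delicate step to be the convergence $y^t \to e_j$ of the non-dominant player: one must verify that $y^t$ actually settles on its best response rather than lingering near the other strategy. This follows because, once $x^t$ is close enough to $e_i$, the payoff gap $((x^t)^\top B)_j - ((x^t)^\top B)_k$ is bounded away from $0$ by genericity, forcing $y^t_j$ to increase to $1$. Every other ingredient — the pointwise domination, the limit $r(t)\to0$, and the Ces\`aro passage to an $o(T)$ bound — is routine.
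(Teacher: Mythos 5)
Your proof is correct, but it takes a different decomposition from the paper's. The paper reduces strong swap regret to \emph{external} regret: citing the convergence of RD to the iteratively undominated profile \citep[Proposition 4.6]{Weibull}, it observes that in a neighborhood of the strict Nash equilibrium the optimal deviation map becomes \emph{constant} (equal to the Nash strategy, independently of the learner's current mixed strategy), so after finite time strong swap regret is within a constant of RD's external regret, which is bounded. You instead bound strong swap regret by an even stronger benchmark, the integrated \emph{instantaneous} best-response regret $\int_0^T r(t)\,\mathrm{d}t$, via the pointwise domination $\langle \phi(x^t), u^t\rangle \le \max_k u^t_k$ valid for every deviation map $\phi$ simultaneously, and then conclude by $r(t)\to 0$ plus a Ces\`aro argument. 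This buys self-containedness: you never need RD's external-regret guarantee, and your argument shows that \emph{any} dynamics converging to a strict pure Nash equilibrium minimizes strong swap regret; you also prove the convergence directly via strict dominance rather than citing Weibull, correctly flagging the genuinely delicate step (the non-dominant player's convergence, handled via the payoff gap being bounded away from zero once $x^t$ is near $e_i$). What the paper's route buys in exchange is a sharper quantitative conclusion, a uniform $O(1)$ bound rather than your stated $o(T)$ — though since the logistic convergence in your argument is exponential, $\int_0^T r(t)\,\mathrm{d}t$ is in fact bounded too, so the gap is only in what you claim, not in what your method can deliver. Two minor points to tighten: the monotone increase $\dot x_i > 0$ alone does not force $x_i^t \to 1$ — you need the uniform gap $\min_{y}\left((Ay)_i - (Ay)_{i'}\right) \ge \delta > 0$ from compactness and strict dominance, giving $\dot x_i \ge \delta x_i(1-x_i)$; and, as in the paper, the statement implicitly concerns interior initial conditions, which your argument uses when asserting $x_i^0, y_j^0 > 0$.
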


\begin{proof}
In the case of generic $2\times 2$ games with a unique pure Nash equilibrium, the game is strictly dominance solvable. Namely, after the iterated elimination of pure strategies that are strictly dominated by pure strategies, we are left with a single strategy for each agent. 
Indeed, the best response dynamic has a unique sink and for at least one of the two agents one strategy strictly dominates the other.
In such games, RD (as well as a wide range of other evolutionary dynamics) is bound to converge to playing the (iteratively) undominated strategies with probability one~\citep[Proposition 4.6]{Weibull}. However, in any small enough neighborhood around this strict Nash equilibrium, the best response of each agent is unique and so is their Nash equilibrium strategy. Hence, given any interior initial condition, after a finite amount of time, the optimal deviation strategies for strong swap regret are going to be independent of the (exact) mixed strategy of the agent and, instead, they will stay fixed and equal to the Nash equilibrium strategy of that agent. Thus, strong swap regret will be within a constant of the external regret of RD, which is known to be bounded in general games. Thus, RD also minimizes strong swap regret. 
\end{proof}

\subsection{Case III: Games with Two Pure Nash Equilibria (Rescaled Coordination Games)} 
This is the last case of $2\times 2$ games, where we examine games with multiple Nash equilibria. Given the fact that some trajectories converge to pure Nash equilibria and some to mixed Nash equilibria, we need specialized arguments.  

\begin{lemma}
    In any $2\times2$ games with several Nash equilibria on the simplex boundary, RD minimizes strong swap regret.
\end{lemma}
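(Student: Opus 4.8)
The plan is to reduce Case III to a rescaled coordination game, exploit its potential‑game structure to obtain convergence of RD, and then argue—exactly as in the unique‑pure‑Nash case—that near the limiting equilibrium the optimal conditional deviations collapse to a single fixed deviation, so that strong swap regret sits within an additive constant of the (bounded) external regret. First, by \cref{lemma:equi} and the discussion in \cref{sec:equiv_rescaled_games}, a generic $2\times2$ game with two pure boundary Nash equilibria is equivalent up to column/row shifts to a rescaled coordination game $(C,cC)$ with $c>0$, and by \cref{remark:orbit} its RD orbits coincide with those of the original game. I would first verify that strong swap regret is itself invariant under these reductions: a column shift of the row player's matrix adds a common constant to every coordinate of the utility vector $u^t = Ay^t$, and since every admissible deviation difference $\phi(x^t)-x^t$ has coordinates summing to zero, the integrand $\langle \phi(x^t)-x^t, u^t\rangle$ is unchanged; multiplying a player's payoffs by a positive constant scales the regret by the same constant and hence preserves the $o(T)$ property. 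It therefore suffices to prove the claim for the rescaled coordination game.

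Second, I would establish convergence. A rescaled coordination game is a weighted potential game, so RD is the Shahshahani gradient flow of the potential and the potential is a strict Lyapunov function off the fixed points; hence every interior trajectory converges to a fixed point. Under the stated genericity assumptions the fixed points are exactly the four corners together with the isolated interior mixed Nash equilibrium, with the two pure Nash equilibria being the asymptotically stable sinks and the interior equilibrium being a saddle. Consequently every interior trajectory converges either to one of the two pure Nash equilibria—for all initial conditions except a measure‑zero set—or, along the one‑dimensional stable manifold of the saddle, to the interior mixed Nash equilibrium.

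Third, I would bound strong swap regret by external regret, splitting into the two convergence behaviors. If the trajectory converges to a strict pure Nash equilibrium, then after some finite time $T_0$ it remains in a neighborhood on which each player's best response is constant and equal to its pure equilibrium strategy; the pointwise‑optimal conditional deviation $\phi(\sigma)$ is then this fixed best response for every visited $\sigma$, so up to the $O(1)$ contribution of the transient before $T_0$ the maximizing deviation is a constant map and strong swap regret is bounded above by external regret plus a constant. For the measure‑zero trajectories converging to the interior mixed Nash, the average opponent strategy observed at the limiting mixed strategy is its Nash component, against which the player is indifferent, so no conditional deviation is asymptotically profitable, while the remaining values of $\sigma$ are visited only over a bounded time interval and contribute $O(1)$. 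In both sub‑cases strong swap regret lies within a constant of external regret, which RD is known to keep bounded in general games; therefore strong swap regret is $o(T)$.

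The main obstacle I anticipate is making this ``collapse of optimal deviations'' rigorous and uniform: one must control the finite transient precisely so that its contribution is genuinely $O(1)$, and verify that in a small enough neighborhood of the limiting equilibrium the pointwise‑optimal deviation really is independent of $\sigma$, which relies on local constancy of best responses at a strict Nash and on the indifference property at the mixed Nash. The measure‑zero stable‑manifold case converging to the interior saddle is the most delicate, since there the trajectory never settles at a corner and one must argue separately that the conditional time‑averages still preclude any profitable deviation.
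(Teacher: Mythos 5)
Your reduction to a rescaled coordination game, the regret-invariance check under column shifts and positive scaling, and the pure-Nash sub-case (constant best response in a neighborhood of a strict equilibrium, hence strong swap regret within $O(1)$ of external regret) all match the paper's proof. The genuine gap is in the sub-case of trajectories converging to the interior mixed Nash equilibrium $(p,q)$, which you correctly flag as delicate but then handle with an argument that does not work. First, conditioning on ``the limiting mixed strategy'' is vacuous: along any non-equilibrium trajectory converging to $(p,q)$, the state never actually attains $x_1=p_1$ (this is a consequence of the invariant discussed below), so there is no time at which the player plays the Nash strategy and observes the Nash opponent. Second, the claim that ``the remaining values of $\sigma$ are visited only over a bounded time interval and contribute $O(1)$'' is false in aggregate: the trajectory spends \emph{all} of its time, which is unbounded as $T\to\infty$, at non-Nash values of $\sigma$, and a strong-swap deviation may choose a different deviation for each $\sigma$ in this continuum of values accumulating at $p_1$. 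A per-value $O(1)$ bound does not sum to $O(1)$ over a continuum, and your sketch provides no control on the instantaneous deviation gain; one could try to rescue it via hyperbolicity of the saddle and exponential convergence of $y^t\to q$ (making the instantaneous gain $O(\lvert y^t-q\rvert)$ integrable, using the row player's indifference at an interior Nash), but none of these rate estimates appear in your proposal.

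The paper closes this case with a qualitatively different, rate-free argument that your proof is missing: for rescaled coordination games with interior Nash equilibrium there is a conserved quantity
\begin{align*}
w\, D_\text{KL}\big(p \parallel x^t\big) \;-\; D_\text{KL}\big(q \parallel y^t\big)
\end{align*}
for some $w>0$ (see \citealt{Hofbauer96,nagarajan2018three}). Its value at $(p,q)$ is zero, so by continuity it must be identically zero along any trajectory converging to the mixed equilibrium; but a zero value forbids the trajectory from ever touching the indifference lines $x_1=p_1$ or $y_1=q_1$ (touching exactly one line forces the invariant nonzero, touching both simultaneously means, by uniqueness of ODE solutions, that the trajectory is the stationary equilibrium). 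Hence every such trajectory is confined to one of the four open rectangles cut out by these lines, where \emph{both} players' best responses are globally constant along the entire orbit, not merely near the limit. The pointwise-optimal conditional deviation therefore collapses to a single fixed pure strategy uniformly over all visited $\sigma$, so strong swap regret coincides with external regret against that fixed strategy, which is bounded for RD. You should replace your indifference-plus-transient sketch for the mixed-limit trajectories with this invariant-based confinement argument (or supply the hyperbolicity and exponential-rate estimates needed to make your integrability route rigorous).
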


\begin{proof}
    In this case, the game is equivalent to a rescaled coordination game per \cref{sec:equiv_rescaled_games}. 
    However, rescaled coordination games are weighted potential games where the potential function of the player is the utility of, e.g., the first player. 
    In this case RD is known to converge to Nash equilibria for all interior initial conditions~\citep{papadimitriou2018nash}.
    In terms of the possible limit points of different trajectories, we have two possibilities: either the limit is a pure Nash equilibrium, or it is an interior (mixed) Nash equilibrium. 
    
    In the case where the trajectory converges to a pure Nash equilibrium, the boundedness of strong swap regret follows from the same argument as in the case of strictly dominance solvable games converging to a pure Nash equilibrium. 
    Thus, the only remaining case is that of trajectories converging to a mixed Nash equilibrium. 
    
    In the case of rescaled coordination games with interior Nash equilibrium, there always exists a positive weight such that the weighted difference of KL divergences between each agent's mixed Nash equilibrium and their evolving state remains invariant over time (see~\citealt{Hofbauer96,nagarajan2018three}). 
    Specifically, given any pair of (interior) mixed strategies $x=(x_1, x_2)$ and $y=(y_1, y_2)$, there exists $w>0$ such that the KL divergence $w D_\text{KL}\big(p \parallel x^t\big) - D_\text{KL}\big(q \parallel y^t\big)  = w \sum_i p_i\ln(\frac{p_i}{x_i^t}) - \sum_j q_i\ln(\frac{q_j}{y_j^t})$ between the Nash equilibrium  $(p, q)$ and the state $(x^t, y^t)$ of the system is invariant over time.
    Clearly, the value of this invariant function evaluated at the mixed Nash equilibrium $(p,q)$ is equal to zero and by continuity the value of this function must be equal to zero for any trajectory that converges to that mixed Nash equilibrium. Finally, the important property implied by this invariant function is that any trajectory
    that converges to the mixed Nash equilibrium $(p,q)$ cannot intersect the lines $x_1=p_1$, $y_1=q_1$. 
    If that was the case and exactly one of the two conditions $x_1=p_1$, $y_1=q_1$ was satisfied, then exactly one of the two KL divergence terms would be equal to zero, implying that the invariant function cannot be equal to zero. 
    On the other hand, if the trajectory intersects the lines $x_1=p_1$, $y_1=q_1$ simultaneously then by uniqueness of the solution of the ODE this is the trivial equilibrium trajectory that does not converge to equilibrium since it already starts at it. 
    Hence, when encoding the state of the $2\times 2$ replicator dynamics in the $[0,1]^2$ plane expressing the variables $x_1,y_1$, then any trajectory converging to the interior Nash equilibrium must be strictly contained in one of the four parallelograms encoded by the lines $x_1=p_1$, $y_1=q_1$. Thus, the best response of both agents along any such trajectory remains fixed over time and, as a result, the maximal strong swap regret is achieved by deviating to that fixed best response throughout the system orbit. 
    Therefore, strong swap regret is upper bounded by standard external regret, which is well-known to be bounded for RD (e.g., see~\citealt{mertikopoulos2018cycles}).
\end{proof}

\section{Linking Strong Swap Regret to Other Optimization Notions}\label{sec:linking_swap_other}

Having established that RD minimizes strong swap regret in generic $2 \times 2$ games, we now present relationships to more standard perspectives in online learning. 

\subsection{Strong Swap Regret versus External Regret Minimization}

A natural question to ask is whether minimization of strong swap regret follows immediately from minimization of standard (external) regret in the special case of $2\times2$ games. 
The following theorem shows that this is not the case.
\begin{theorem}
    There exist algorithms that minimize external regret, which exhibit unbounded strong swap regret in some $2\times2$ games.
\end{theorem}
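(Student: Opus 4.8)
The plan is to prove the separation constructively: exhibit one algorithm that provably minimizes external regret against every bounded utility sequence, yet incurs $\Omega(T)$ strong swap regret on a single instance realizable in a $2\times 2$ game. I would take the coordination game with $A = I$, so the row player's utility is $u^t = Ay^t = y^t$, and drive it with the \emph{balanced} opponent schedule that alternates $y^t = e_1 = (1,0)$ on ``odd'' time intervals and $y^t = e_2 = (0,1)$ on ``even'' intervals, the two phases occupying equal total time. The salient feature is that every \emph{fixed} strategy $x^\ast = (p,1-p)$ earns cumulative utility exactly $T/2 + o(T)$, independent of $p$; thus against this sequence no constant comparator is informative and external regret is governed entirely by second-order fluctuations, whereas a deviation allowed to depend on the current mixed strategy can lock onto the currently-favored vertex.

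Next I would define the algorithm $\mathcal{M}$ as a standard external-regret minimizer $\mathcal{B}$ (e.g.\ Hedge/MWU with vanishing step size $\eta_t \to 0$, which guarantees $o(T)$ external regret uniformly over bounded sequences) plus a small, deterministic, phase-dependent perturbation: on odd intervals play $x^t = \bar x^t + \delta_t(1,-1)$ and on even intervals $x^t = \bar x^t + \delta_t(-1,1)$, where $\bar x^t$ is $\mathcal{B}$'s play and $\delta_t \to 0$. Two elementary facts pin down $\delta_t$: on the balanced schedule the cumulative payoff difference driving $\mathcal{B}$ stays bounded, so $\mathcal{B}$ remains within $O(\eta_t)$ of the uniform strategy, i.e.\ $|\bar x^t_1 - \tfrac12| = O(\eta_t)$; and I need the perturbation to dominate this residual fluctuation while remaining average-vanishing. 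I would therefore pick $\delta_t$ with $\eta_t = o(\delta_t)$ yet $\int_0^T \delta_t\,\mathrm{d}t = o(T)$ (for instance $\eta_t = t^{-1/2}$, $\delta_t = t^{-1/4}$).

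The external-regret bound for $\mathcal{M}$ is then immediate and holds for \emph{all} sequences: writing the external regret of $\mathcal{M}$ as the external regret of $\mathcal{B}$ minus $\int_0^T \langle p^t, u^t\rangle\,\mathrm{d}t$ (with $p^t$ the perturbation), the first term is $o(T)$ by choice of $\mathcal{B}$, and $|\langle p^t,u^t\rangle| \le \|p^t\|_1 \|u^t\|_\infty \le 2\delta_t$, so the second is bounded by $\int_0^T 2\delta_t\,\mathrm{d}t = o(T)$; hence $\mathcal{M}$ minimizes external regret. For the strong-swap lower bound on the balanced instance I would observe that, because $\delta_t$ dominates $|\bar x^t_1 - \tfrac12|$, every odd-phase strategy has $x^t_1 > \tfrac12$ and every even-phase strategy has $x^t_1 < \tfrac12$. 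The single measurable map $\phi$ with $\phi(x) = e_1$ when $x_1 > \tfrac12$ and $\phi(x) = e_2$ when $x_1 < \tfrac12$ therefore sends all odd-phase strategies to $e_1$ and all even-phase strategies to $e_2$; its per-unit-time gain is $\langle e_1 - x^t, u^t\rangle = 1 - (\bar x^t_1 + \delta_t) = \tfrac12 - O(\delta_t)$ on odd phases and symmetrically on even phases, so $\int_0^T \langle \phi(x^t) - x^t, u^t\rangle\,\mathrm{d}t = \tfrac{T}{2} - o(T) = \Omega(T)$.

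The crux — and the step I would be most careful about — is reconciling two requirements that appear to pull in opposite directions: external-regret minimization forces the injected perturbation to vanish (a constant-size perturbation already costs $\Theta(\delta T)$ against an adversarial sequence), yet the strong-swap deviation can exploit the two phases only if the strategies they produce are \emph{separable} by a measurable set. The observation dissolving this tension is that a measurable $\phi$ can exploit \emph{any} separation, however infinitesimal, so a vanishing perturbation suffices for separability, while the \emph{magnitude} of the strong-swap gain comes not from the perturbation but from the base iterate sitting near the uniform strategy — bounded away from both vertices — throughout the balanced run. Securing the quantitative ordering $\eta_t = o(\delta_t)$ with $\delta_t$ still average-vanishing, and confirming that $\mathcal{B}$ genuinely stays near uniform on this sequence, are the technical points that make the construction go through.
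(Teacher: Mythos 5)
Your construction is correct, and it takes a genuinely different route from the paper. The paper stays in the self-play setting: it wraps a tuple of external-regret minimizers (e.g.\ RD) in a trigger-style protocol that reproduces a pre-computed \emph{non-product} CCE of a $2\times2$ coordination game --- alternating in unit epochs between the joint mixed profiles $(3/4,1/4)$ and $(1/4,3/4)$ --- and reverts to the base no-regret algorithm the moment any deviation is observed. External regret stays sublinear because the CCE property already protects against every fixed comparator (with the trigger reversion covering arbitrary sequences), while the measurable deviation sending the mixed strategy $(3/4,1/4)$ to $e_1$ and $(1/4,3/4)$ to $e_2$ gains $1/8$ per epoch, giving linear strong swap regret. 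You instead work with a single learner against an explicit adversarial schedule: a base no-regret algorithm kept near uniform by a vanishing step size, plus a phase-synchronized perturbation $\delta_t$ with $\eta_t = o(\delta_t)$ and $\int_0^T \delta_t\,\mathrm{d}t = o(T)$, so that the learner's own mixed strategy encodes the current phase and a threshold map at $x_1 = 1/2$ extracts gain $\approx 1/2$ per unit time. What each buys: your version gives a uniform-over-sequences external-regret bound by a one-line perturbation argument and a larger swap gap ($T/2$ versus $T/8$); the paper's version makes the \emph{opponent} itself one of the external-regret minimizers, so the separation manifests in self-play, which is the regime the rest of the paper studies (the literal theorem statement does not require this, so your weaker adversary model is admissible).

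Two technical points you flagged do need pinning down, and both are fixable. First, $\bar{x}^t + p^t$ can leave the simplex and must be clipped; this is harmless since on the constructed instance $\bar{x}^t$ is near uniform and $\delta_t$ is small, and clipping only shrinks the perturbation elsewhere. Second, the claim $|\bar{x}^t_1 - 1/2| = O(\eta_t)$ holds for the Hedge variant $\bar{x}^t \propto \exp(\eta_t U^t)$, where $U^t$ is the raw cumulative utility vector (bounded in difference on the balanced schedule), but \emph{fails} for the variant $\bar{x}^t \propto \exp\bigl(\int_0^t \eta_s u^s\,\mathrm{d}s\bigr)$: with decreasing $\eta_s$ the per-cycle drifts $\int_{2k}^{2k+1}\eta_s\,\mathrm{d}s - \int_{2k+1}^{2k+2}\eta_s\,\mathrm{d}s$ are positive and sum to a constant, leaving $\bar{x}_1$ biased above $1/2$ by a constant and destroying the even-phase separation once $\delta_t$ falls below that bias. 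So your argument goes through, but only after committing to the former variant (or resetting the cumulative difference at phase boundaries).
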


\begin{proof}
    Consider an arbitrary tuple of algorithms that minimize external regret (e.g., bounded external regret algorithms such as RD), with one such algorithm used by each of the agents. 
    Given such a tuple of algorithms, we define a slight variant of them that initializes them as playing a pre-computed CCE in a chosen target game that does not have bounded strong swap regret. 
    For example, consider the $2\times2$ game where both agents receive a payoff of $1$ if they choose the same strategy (i.e., both play strategy $a$ or $b$), otherwise receiving a payoff of $0$. 
    Now, consider the following strategy distribution in this game: 
    with probability $\nicefrac{1}{2}$, both agents play the mixed strategy where they choose the first strategy with probability $\nicefrac{3}{4}$; 
    with probability $\nicefrac{1}{2}$, both agents play the mixed strategy where they choose the second strategy with probability $\nicefrac{3}{4}$.
    This overall strategy distribution is a CCE.\footnote{The overall distribution chooses pure strategy profiles $(a,a), (b,b)$ each with probability $\nicefrac{5}{16}$ and the profiles $(a,b), (b,a)$ with probability $\nicefrac{3}{16}$; it is easy to verify this distribution satisfies the CCE definition.} 
    Thus, dividing the time interval into epochs of length one, we can have the agents reproduce this CCE by alternating between playing the two described mixed strategies.
    
    The agents continue producing this predefined play as long as no deviation from it is observed.
    If any deviation is observed from any individual agent, then all agents subsequently deviate from this play, from this point onward always applying their assigned regret-minimizing algorithm (e.g., RD). 
    Clearly, each of these algorithms still minimizes external regret, as even if a deviation is observed leading to this switch away from the original CCE mixed strategy, the total regret experienced on any single period is bounded and the total regret of the individual algorithm used (e.g., RD) is bounded as well.
    On the other hand, by construction the resulting play does not minimize strong swap regret, as any of the two mixed strategies applied by the agents is not a best response during the time epoch that is being played;
    thus, instead the agents would rather deviate from playing their assigned strategy, not with probability $\nicefrac{3}{4}$, but with probability $1$.   
\end{proof}

\subsection{Strong Swap Regret in the Case of Online Learning With Two Actions}

While the definition of strong swap regret was motivated by the need for more granular techniques in the study of learning  in games, a related question is whether this notion also yields insights for more traditional online learning settings.
Below, we show that it is possible to provide positive regret results even if we allow for the deviating strategies to depend on the current mixed strategy, the telltale attribute of strong swap regret, under specific conditions on the learning trajectory. 
Specifically, we explore the conditions under which we can prove positive results for strong swap regret, not in the case of $2\times2$ games, but rather in the case of online learning with two actions (i.e., strategies) whose utilities evolve continuously over time. 

\begin{theorem}
    Given bounded and continuous utility functions $u^t_1, u^t_2$, the aggregated regret of RD remains bounded even if we consider only the part of the history of play where replicator chooses the first action with probability $p \in (p_0-\epsilon, p_0+\epsilon)$ for any $p_0\in (0,1)$, as long as $\epsilon$ is small enough such that the replicator trajectory always exits the interval from the opposite end of the interval from the one it enters it. 
\end{theorem}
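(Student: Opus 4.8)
The plan is to exploit the one-dimensional structure of two-action replicator dynamics, for which $\dot{p}^t = p^t(1-p^t)(u^t_1 - u^t_2)$, where $p^t$ denotes the probability assigned to the first action. The crucial observation is that this lets us eliminate the utilities in favour of the trajectory itself: $u^t_1 - u^t_2 = \dot{p}^t / (p^t(1-p^t))$. Writing $B = \{t \in [0,T] : p^t \in (p_0 - \epsilon, p_0 + \epsilon)\}$ for the portion of play under consideration and fixing a deviation to the constant strategy $q \in [0,1]$, the regret accrued only over $B$ is
\begin{align*}
R(q) = \int_B (q - p^t)(u^t_1 - u^t_2)\,\mathrm{d}t = \int_B \frac{(q - p^t)\,\dot{p}^t}{p^t(1-p^t)}\,\mathrm{d}t .
\end{align*}
I would treat this constant-deviation case first, as it already contains the main idea, and then lift to the full strong-swap deviations $\phi$ below.

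The key step is to recognise the integrand as an exact time-derivative. Setting $F(p) = q\ln p + (1-q)\ln(1-p)$, a direct computation gives $F'(p) = (q-p)/(p(1-p))$, so the integrand equals $\tfrac{\mathrm{d}}{\mathrm{d}t}F(p^t)$. Consequently the contribution of any single \emph{visit} to the band --- a maximal time interval $[a,b]$ during which $p^t$ remains inside $(p_0-\epsilon, p_0+\epsilon)$ --- equals $F(p^b) - F(p^a)$, depending only on the entry and exit values of $p$ and not on the path taken inside the band.

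Next I would chain these contributions over successive visits. The hypothesis that $\epsilon$ is small enough that the trajectory always exits the interval through the end opposite to the one it entered guarantees each visit is a full crossing, so every entry and exit value is one of the two boundary points $p_0 \pm \epsilon$; and by continuity of the one-dimensional trajectory, the entry point of each visit coincides with the exit point of the previous one (the orbit re-enters from whichever side it left). The sum $\sum_k [F(p^{b_k}) - F(p^{a_k})]$ therefore telescopes to $F(p^{b_K}) - F(p^{a_1})$, a single difference of boundary values, together with the bounded contributions of the at most two partial visits abutting $t=0$ and $t=T$. Since $p_0 \pm \epsilon$ are bounded away from $0$ and $1$, $\sup_{p \in [p_0-\epsilon,\, p_0+\epsilon]} |F(p)|$ is finite and uniform over $q \in [0,1]$, so $R(q)$ is bounded by a constant independent of $T$. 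The extension to a genuine strong-swap deviation $\phi$ that depends on the current mixed strategy is obtained by replacing $F$ with an antiderivative $G$ of $(\phi(p)-p)/(p(1-p))$; this $G$ is again bounded on the band because both $\phi$ and $1/(p(1-p))$ are bounded there, and the identical telescoping applies.

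I expect the main obstacle to be the rigorous justification of this exact-differential/telescoping step rather than any estimate. One must ensure the fundamental theorem of calculus applies along each crossing even when $\dot{p}^t$ vanishes or changes sign inside the band --- so that $p^t$ may be non-monotone and a naive change of variables to $p$ fails --- and, for the $\phi$-dependent extension, that $G(p^t)$ remains absolutely continuous with $\tfrac{\mathrm{d}}{\mathrm{d}t}G(p^t) = G'(p^t)\,\dot{p}^t$ almost everywhere despite $\phi$ being merely measurable. Care is likewise needed with the boundary visits at $t=0$ and $t=T$ and with confirming the uniform bound across all deviations, but these are secondary to establishing the telescoping identity cleanly.
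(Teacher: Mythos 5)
Your proposal is correct and follows essentially the same route as the paper's proof: the paper integrates $\dot{p}_i/p_i = u_i - \sum_i p_i u_i$ to express the band-restricted regret against each pure strategy as $\ln p_i(b_j) - \ln p_i(a_j)$ over each visit, and your potential $F(p) = q\ln p + (1-q)\ln(1-p)$ is exactly the $q$-weighted combination of these two log terms, with the identical telescoping over full crossings under the opposite-end-exit hypothesis. Your extension to strategy-dependent deviations $\phi$ via the antiderivative $G$ goes slightly beyond what the paper explicitly argues (the paper reduces to fixed deviations on the band), but the core mechanism is the same.
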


\begin{proof}
Let us create a possibly infinite sequence of time intervals $(a_j,b_j)$ whose union captures the set of time instances $t$ such that $p^t\in (p_0-\epsilon, p_0+\epsilon)$.  To explore the total regret for a player in online learning in the time intervals where the probability of playing their first strategy is between 
$p_0-\epsilon$ and $p_0+\epsilon$, it suffices to track the total regret for deviating against any fixed strategy for the union of $(a_j,b_j)$ intervals. 

In the case of RD in any such interval $(a_j,b_j)$, the regret for not deviating to some fixed strategy $i$ is given as follows:
\begin{align*}
\dot{p}_i = p_i(u_i-\sum_i p_i u_i) \, .
\end{align*}
Rearranging,
\begin{align*}
\frac{\dot{p}_i}{p_i} &= u_i-\sum_i p_i u_i\\
\therefore \ln(p_i(b_j))-\ln(p_i(a_j)) &= \int^{b_j}_{a_j} u_i^{\tau}d\tau-\int^{b_j}_{a_j} \sum_i p_i^{\tau} u_i^{\tau} d\tau \, .
\end{align*}
Without loss of generality, we can assume that the trajectory enters the $(p_0-\epsilon, p_0+\epsilon)$ interval from below. In this case, 
the regret in the union of these times intervals is 
\begin{align*}
    \sum_j \ln(p_i(b_j))-\ln(p_i(a_j)) &= \ln(p_0+\epsilon)-\ln(p_0-\epsilon) + \ln(p_0-\epsilon) - \ln(p_0+\epsilon) + \dots \\
    &\leq  \ln(p_0+\epsilon)-\ln(p_0-\epsilon)
\end{align*}
and thus remains bounded for all time.
\end{proof}

Overall, these insights indicate that there exist interesting positive and negative results linking strong swap regret to traditional regret notion.

\section{Experiments}\label{sec:experiments}

\begin{figure}
\centering
  \begin{subtable}[t]{\textwidth}
    \centering
    \setlength{\tabcolsep}{3pt}
    \resizebox{\textwidth}{!}{%
        \begin{tabular}{cccccccccccc}
        \toprule
        \textbf{Chicken} &
        \textbf{Battle} &
        \textbf{Hero} &
        \textbf{Compromise} &
        \textbf{Deadlock} &
        \textbf{Prisoner's dilemma} &
        \textbf{Stag hunt} &
        \textbf{Assurance} &
        \textbf{Coordination} &
        \textbf{Peace} &
        \textbf{Harmony} &
        \textbf{Concord} \\
        \textbf{(Ch)} &
        \textbf{(Ba)} &
        \textbf{(Hr)} &
        \textbf{(Cm)} &
        \textbf{(Dl)} &
        \textbf{(Pd)} &
        \textbf{(Sh)} &
        \textbf{(As)} &
        \textbf{(Co)} &
        \textbf{(Pc)} &
        \textbf{(Ha)} &
        \textbf{(Nc)} \\
        \midrule
        $\begin{bmatrix} 2 & 3 \\ 1 & 4 \end{bmatrix}$ &
        $\begin{bmatrix} 3 & 2 \\ 1 & 4 \end{bmatrix}$ &
        $\begin{bmatrix} 3 & 1 \\ 2 & 4 \end{bmatrix}$ &
        $\begin{bmatrix} 2 & 1 \\ 3 & 4 \end{bmatrix}$ &
        $\begin{bmatrix} 1 & 2 \\ 3 & 4 \end{bmatrix}$ &
        $\begin{bmatrix} 1 & 3 \\ 2 & 4 \end{bmatrix}$ &
        $\begin{bmatrix} 1 & 4 \\ 2 & 3 \end{bmatrix}$ &
        $\begin{bmatrix} 1 & 4 \\ 3 & 2 \end{bmatrix}$ &
        $\begin{bmatrix} 2 & 4 \\ 3 & 1 \end{bmatrix}$ &
        $\begin{bmatrix} 3 & 4 \\ 2 & 1 \end{bmatrix}$ &
        $\begin{bmatrix} 3 & 4 \\ 1 & 2 \end{bmatrix}$ &
        $\begin{bmatrix} 2 & 4 \\ 1 & 3 \end{bmatrix}$ \\
        \bottomrule
        \end{tabular}
        }
        \vspace{-5pt}
        \caption{}
        \label{fig:2x2_payoffs_bruns}
    \end{subtable}
    \begin{subfigure}{\textwidth}
        \centering
        \includegraphics[width=0.75\textwidth]{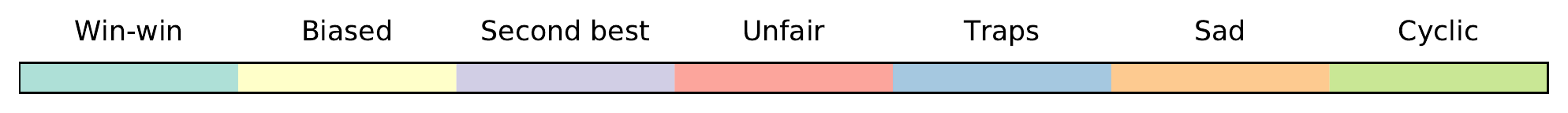}
    \end{subfigure}\\[-5pt]
    \begin{subfigure}{\textwidth}
        \centering
        \includegraphics[width=0.95\textwidth]{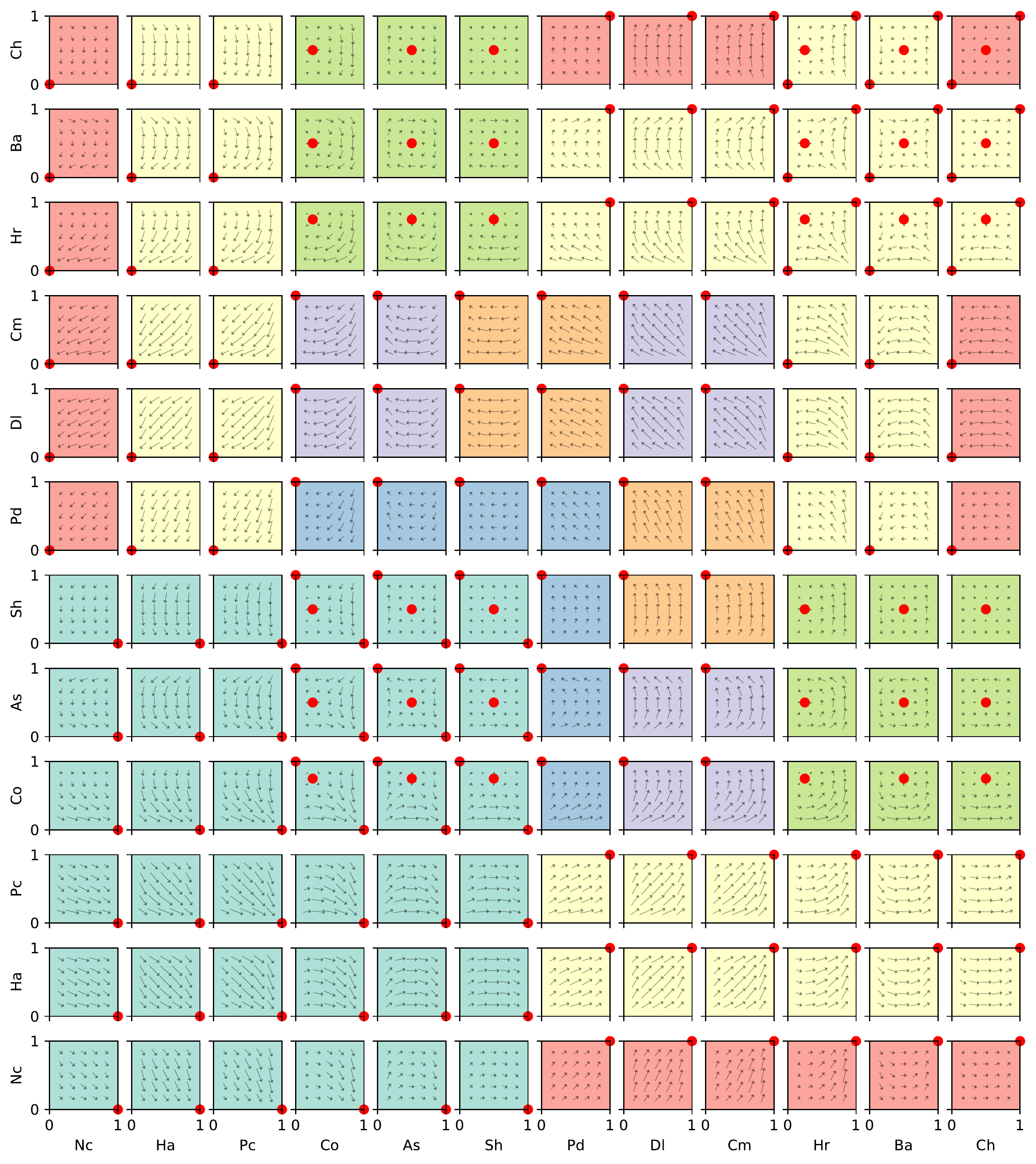}%
        \vspace{-8pt}
        \caption{}
        \label{fig:2x2_rd_bruns}
    \end{subfigure}%
    \vspace{-8pt}
    \caption{
    \subref{fig:2x2_payoffs_bruns} Row player payoffs for $2 \times 2$ games defined by \citet{bruns2015names}. Note: in the convention of \citet{bruns2015names}, corresponding column player payoffs are defined as a transpose along the {anti}-diagonal.
    \subref{fig:2x2_rd_bruns} RD run on 144 games composed of the indicated combinations of row and column player payoffs. 
    Each subplot shows the dynamics for RD on a distinct game, with background colors corresponding to the classes of games originally defined by \citet{bruns2015names}.
    Nash equilibria for each game are indicated by the red points.
    }
\end{figure}

\begin{figure}[t]
    \centering
    \includegraphics[width=\textwidth]{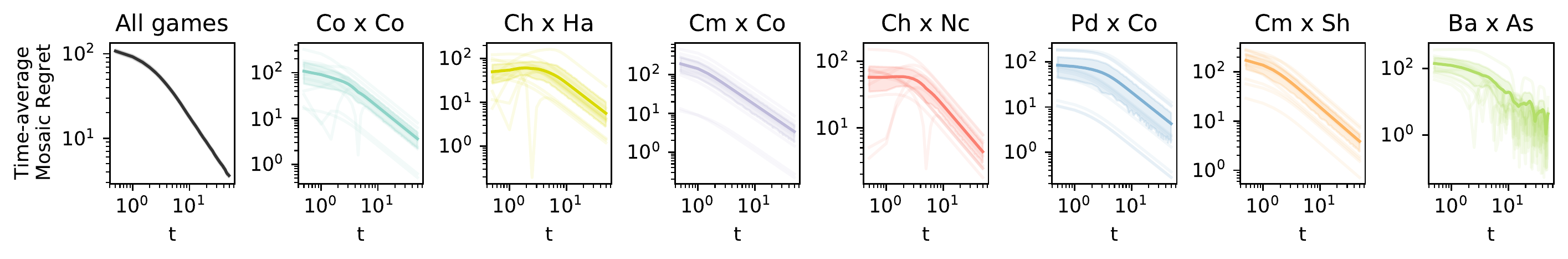}%
    \caption{Mosaic regret convergence for $2 \times 2$ games. For each of the 144 games (shown earlier in \cref{fig:2x2_rd_bruns}), we run 10 trials of RD with independently random initial strategies for both the row and column players, computing mosaic regret through time.
    The lefternmost subplot above summarizes the time-average convergence of mosaic regret, $\nicefrac{1}{T}\sum_{t=0}^{T} MR^t$, across all games and trials. 
    The subsequent subplots show convergence for one example game per Bruns class, with classes corresponding to those in \cref{fig:2x2_rd_bruns}; in each of these latter subplots, each line indicates an individual trial, and the shaded region shows the average mosaic regret and 95\% confidence interval across all trials.}
    \label{fig:2x2_bruns_strongregret_converge}
\end{figure}

In this section, we ground our theoretical findings in experimental analysis of RD in numerous games of varying characteristics.

\subsection{$\Phi$-regret Minimization via RD in $2 \times 2$ games}

As mentioned earlier, practical experiments preclude one from demonstrating that strong swap regret is minimized, as time-discretization implies that any individual mixed strategy observed in a single RD trajectory may not be revisited in future trajectories. 
Thus, we focus our empirical analysis here on mosaic regret, the practically-realizable instance of $\Phi$-regret we introduced in \cref{sec:mosaic_regret}. 

We first conduct a set of experiments illustrating the minimization of mosaic regret under RD in a wide suite of $2 \times 2$ games.
As noted earlier, despite their simple structure, these payoffs form an important class capturing numerous canonical games that have received tremendous attention throughout the game theory literature (e.g., Matching Pennies, Prisoner's Dilemma, Stag Hunt, etc.) \citep{guyer19722,klos2010evolutionary,bruns2015names,robinson2005topology,rapoport1966taxonomy}.
Thus, despite their apparent simplicity, revisiting these under the lens of $\Phi$-regret and mosaic regret reveals that the behavior of learning dynamics in these games is far from well-understood.

The specific games we consider are those defined by \citet{bruns2015names}, which taxonomizes a collection of $2 \times 2$ games into several distinct classes by considering the patterns of payoffs received by each player.
\citet{bruns2015names} identifies 12 sets of basis payoffs corresponding to canonical games of varying characteristics, summarized for the row player in \cref{fig:2x2_payoffs_bruns};
corresponding column player payoffs for these games are defined in \citet{bruns2015names} as the transpose along the \emph{anti}-diagonal, which we also use for consistency.

The combination of these basis payoffs enables the definition of a large collection of 144 $2 \times 2$ games with varying characteristics (e.g., cyclical games, win-win games, pure coordination games, etc.).
For each of these games, we visualize the vector field summarizing the behavior of agents playing RD in \cref{fig:2x2_rd_bruns}.
The $x$- and $y$-axes here, respectively, show the row and column player's probabilities of playing their first strategies.
\Cref{fig:2x2_bruns_strongregret_converge} visualizes an evaluation of the mosaic regret across this diverse set of games.
For simplicity, the partitioning scheme $\Sigma$ we use for computing mosaic regret is to subdivide the first player's mixed strategy space $x$ into 10 discrete, equally-sized bins. 
To generate the mosaic regret results, for each game we run 10 trials of RD (each with an independently-initialized set of seed strategies for the row and column players).
We plot the time-average mosaic regret averaged across all 1440 combinations of games and trials in the lefternmost subplot of \cref{fig:2x2_bruns_strongregret_converge}, verifying our earlier theoretical convergence results for these general classes of games.
Moreover, to understand the the game-specific convergence characteristics, we plot the time-average mosaic regret for one example game per Bruns class (i.e., one per `win-win', `biased', `second best', `unfair', `traps', `sad', and `cyclic' class of games) in the subsequent subplots of \cref{fig:2x2_bruns_strongregret_converge}.
The time-average mosaic regret converges across all these games, with a notable observation being that final game, Ba $\times$ As, suffers from higher-variance across trials, which is explained primarily due to the cyclical nature of this particular game.

\subsection{Mosaic Regret Beyond $2 \times 2$ Games}

\begin{figure}[t!]
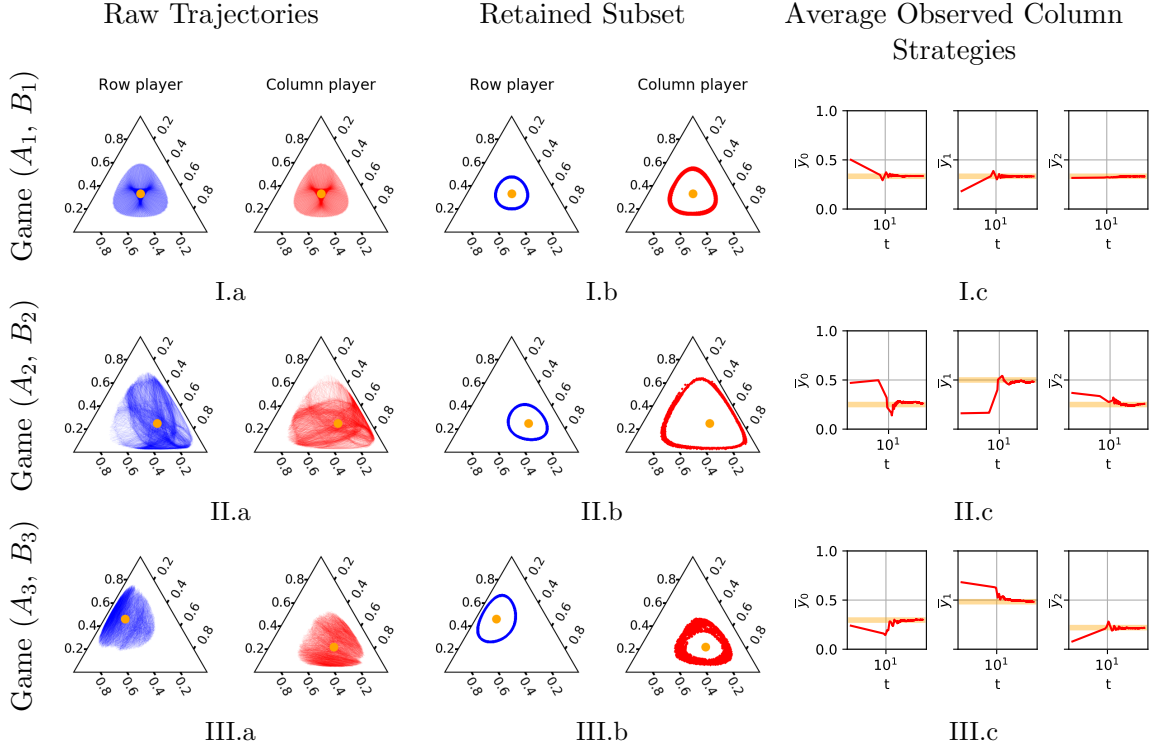

    \stepcounter{figure}%
    \setcounter{row}{1}%
    \setcounter{subfigure}{0}%
    \renewcommand{\thesubfigure}{\Roman{row}.\alph{subfigure}}%
    \centering
    \begin{tabularx}{\textwidth}{p{0.0in}YYY}
        & Raw Trajectories & Retained Subset & Average Observed Column Strategies\\
        \gameRow{3x3_rps_nu_1_multipop_asymm}{Game ($A_1$, $B_1$)}{game1}
        \gameRow{3x3_rps_nu_2_multipop_asymm}{Game ($A_2$, $B_2$)}{game2}
        \gameRow{3x3_asymm2_nu_1_multipop_asymm}{Game ($A_3$, $B_3$)}{game3}
    \end{tabularx}
    \caption{
        Several instances of $3 \times 3$ games where we observe evidence that RD minimizes mosaic regret.
        In all plots, the Nash equilibrium is indicated in orange for each player.
        Each row corresponds to a zero-sum game (with payoffs described in the main text).
        The first column visualizes points corresponding to raw trajectories exhibited under RD.
        Next, we retain only the partition of points wherein the first player's strategy, $x$, is within a small neighborhood of a reference KL-divergence from its Nash equilibrium; the second column plots the joint strategy under this filtering scheme.
        In the final column, we use all such sampled column player strategies up to the given timestep $t$, $y_{<t}$, to compute the time-average opponent strategy, $\bar{y}^t$.
        The notable observation is that the time-average opponent observed strategy converges to the column player's Nash equilibrium (in orange) in each instance, under the described partitioning scheme.
    }
    \label{fig:3x3_results}
\end{figure}

We next consider several increasingly-complex instances of larger games, where we observe evidence that RD minimizes mosaic regret.
Each row of \cref{fig:3x3_results} visualizes results associated with a distinct $3 \times 3$ game, where the respective row player payoffs are defined as
\begin{align}\label{eq:rps_like_payoffs}
    A_1 =  
    \begin{bmatrix}
    0& -1& 1\\
    1& 0& -1\\
    -1& 1& 0\\
    \end{bmatrix} 
    \qquad
    A_2 =  
    \begin{bmatrix}
    0& -1& 2\\
    1& 0& -1\\
    -2& 1& 0\\
    \end{bmatrix} 
    \qquad
    A_3 =  
    \begin{bmatrix}
    1& -1& 1.2\\
    1& 0& -1\\
    -1& 1& -0.5\\
    \end{bmatrix} \, ,
\end{align}
with corresponding column player payoffs $B_i = -A_i$ for each game $i$.
These payoffs, in order, correspond to the canonical game of Rock--Paper--Scissors (RPS), a variant of RPS with symmetrically-biased payoffs (yielding an non-uniform Nash equilibrium for both players), and an asymmetric variant of RPS (yielding distinct Nash equilibria for either player).

The first column in \cref{fig:3x3_results} visualizes points corresponding to raw trajectories exhibited under RD.
For each game, we subsequently consider a mosaic regret deviation partition as follows: we select an arbitrary reference point in the row player's trajectories, $x_r$, and compute the KL-divergence of the point with respect to the row player's Nash equilibrium, $x_*$ (indicated in orange in the corresponding plots), i.e., $d \triangleq D_{KL}(x_r||x_*)$.
Subsequently, throughout all raw trajectories, we retain only the joint strategies $(x,y)$ wherein the first player's distance to the Nash equilibrium is close to this reference KL-divergence (i.e., $D_{KL}(x||x_*)  \in [d-\varepsilon, d+\varepsilon]$, where $\varepsilon$ is a small threshold parameter).
The set of points that constitute the resulting partition are visualized in the `Retained Subset' column of \cref{fig:3x3_results}.
Next, as in \cref{prop:rd_avg_strat}, for each timestep $t$, we use all such sampled column player strategies up to that timestep, $y_{<t}$, to compute the time-average opponent strategy, $\bar{y}^t$.
The final column of \cref{fig:3x3_results} visualizes this time-average `observed' column strategy for each of the games, with the notable observation being that it converges to the column player's Nash equilibrium (in orange) for each game.
In other words, as in our earlier $2 \times 2$ analysis and experiments, RD minimizes mosaic regret in these larger games, under the partitioning scheme described.

However, this property may not necessarily hold in more general settings.
For example, consider again the game of Rock--Paper--Scissors as defined by $A_1$ in \cref{eq:rps_like_payoffs}, but where both players' strategies are initialized identically (i.e., $x^0=y^0$), thus ensuring that $x^t = y^t$ for all $t$ under RD due to symmetry of the game payoffs.
Subsequently, both players consistently attain the value of the game due to the symmetry in chosen strategies.
In such a scenario, at any time $t$, either player can increase their payoff by best-responding, thus ensuring unbounded mosaic regret. 
Such counterexamples, thus, motivate a need to more thoroughly investigate strong swap regret and mosaic regret convergence in larger and more general classes of games. 

\section{Discussion}\label{sec:discussion}

There has been increasing interest in recent years in defining alternative, generalized notions of regret to better enable stronger theoretical guarantees drawn about learning algorithms~\citep{lehrer2003wide}.
For instance, \citet{mohri2014conditional} introduce \emph{conditional swap regret}, which generalizes swap regret to enable deviations to condition on the learner's history of actions (as opposed to the standard unconditioned deviations). 
\citet{arora2012online} investigate strengthening of the utility classes (resp., `adversary classes' in their case, as they focus on losses), enabling adaptation of the utilities (resp., losses defined by the adversary) on the learner's past actions. 
\citet{morrill2020hindsight} explore existing and defines new forms of strategy deviations in the space of extensive-form games.
\emph{Dynamic regret} \citep{zinkevich2003online}, measures the regret with respect to a sequence of changing baselines, and likewise \emph{adaptive regret} \citep{hazan2007adaptive,daniely2015strongly} computes regret by considering utility differences in windows of time intervals of a specific length.
Overall, the primary motivation behind this line of investigation has been similar to ours: to generalize canonical forms of regret by introducing degrees of freedom in terms of how regret is computed (e.g., by permitting more general classes of deviations, varying the time intervals over which regret is computed, allowing the baseline compared against to change in various ways, etc.).

In our work, we illustrated through a re-examination of RD that, under the lens of $\Phi$-regret, one can study stronger properties associated with the real-time behaviors of such learning processes in games, thus establishing strong convergence guarantees even over the space of mixed strategies. 
To our knowledge, our work provides a first demonstration of the strongest form of $\Phi$-regret being minimized by the well-studied and simple-to-implement RD algorithm in both theory and practice.
While our theoretical and empirical results focused primarily on exhaustively exploring in the space of generic $2 \times 2$ two-player zero-sum games, examination of larger games under this paradigm will make for an interesting avenue of future exploration.
Moreover, another line of future work will involve investigating whether discrete-time dynamical systems related to RD, such as multiplicative weights update, also minimize strong swap or mosaic regret in similar or more general circumstances.

\section{Acknowledgements}
The authors are grateful to Thore Graepel and Marc Lanctot for their feedback during the paper writing process.

\bibliographystyle{plainnat}
\bibliography{references,ms}

\begin{thebibliography}{83}
\providecommand{\natexlab}[1]{#1}
\providecommand{\url}[1]{\texttt{#1}}
\expandafter\ifx\csname urlstyle\endcsname\relax
  \providecommand{\doi}[1]{doi: #1}\else
  \providecommand{\doi}{doi: \begingroup \urlstyle{rm}\Url}\fi

\bibitem[Arora et~al.(2012{\natexlab{a}})Arora, Dekel, and
  Tewari]{arora2012online}
Raman Arora, Ofer Dekel, and Ambuj Tewari.
\newblock Online bandit learning against an adaptive adversary: from regret to
  policy regret.
\newblock In \emph{In Proceedings of the Twenty-Ninth International Conference
  on Machine Learning}, January 2012{\natexlab{a}}.

\bibitem[Arora et~al.(2012{\natexlab{b}})Arora, Hazan, and
  Kale]{arora2012multiplicative}
Sanjeev Arora, Elad Hazan, and Satyen Kale.
\newblock The multiplicative weights update method: a meta-algorithm and
  applications.
\newblock \emph{Theory of Computing}, 8\penalty0 (1):\penalty0 121--164,
  2012{\natexlab{b}}.

\bibitem[Auer et~al.(1995)Auer, Cesa-Bianchi, Freund, and Schapire]{Auer95}
Peter Auer, Nicolo Cesa-Bianchi, Yoav Freund, and Robert~E Schapire.
\newblock Gambling in a rigged casino: The adversarial multi-armed bandit
  problem.
\newblock In \emph{Proceedings of IEEE 36th Annual Foundations of Computer
  Science}, pages 322--331. IEEE, 1995.

\bibitem[Bailey and Piliouras(2018)]{bailey2018multiplicative}
James~P Bailey and Georgios Piliouras.
\newblock Multiplicative weights update in zero-sum games.
\newblock In \emph{Proceedings of the 2018 ACM Conference on Economics and
  Computation}, pages 321--338, 2018.

\bibitem[Bailey and Piliouras(2019)]{bailey2019fast}
James~P. Bailey and Georgios Piliouras.
\newblock Fast and furious learning in zero-sum games: Vanishing regret with
  non-vanishing step sizes.
\newblock In \emph{Advances in Neural Information Processing Systems},
  volume~32, pages 12977--12987, 2019.

\bibitem[Banerjee and Peng(2004)]{banerjee2004performance}
Bikramjit Banerjee and Jing Peng.
\newblock Performance bounded reinforcement learning in strategic interactions.
\newblock In \emph{AAAI}, volume~4, pages 2--7, 2004.

\bibitem[Bloembergen et~al.(2015)Bloembergen, Tuyls, Hennes, and
  Kaisers]{BloembergenTHK15}
Daan Bloembergen, Karl Tuyls, Daniel Hennes, and Michael Kaisers.
\newblock Evolutionary dynamics of multi-agent learning: {A} survey.
\newblock \emph{J. Artif. Intell. Res. {(JAIR)}}, 53:\penalty0 659--697, 2015.

\bibitem[Blum and Mansour(2007{\natexlab{a}})]{blum2007external}
Avrim Blum and Yishay Mansour.
\newblock From external to internal regret.
\newblock \emph{Journal of Machine Learning Research}, 8\penalty0
  (Jun):\penalty0 1307--1324, 2007{\natexlab{a}}.

\bibitem[Blum and Mansour(2007{\natexlab{b}})]{blum2007learning}
Avrim Blum and Yishay Mansour.
\newblock Learning, regret minimization, and equilibria.
\newblock In Noam Misan, Tim Roughgarden, Eva Tardos, and Vijay~V. Vazirani,
  editors, \emph{Algorithmic Game Theory}, chapter~4. Cambridge University
  Press, 2007{\natexlab{b}}.

\bibitem[Boone and Piliouras(2019)]{boone2019darwin}
Victor Boone and Georgios Piliouras.
\newblock From {D}arwin to {P}oincar{\'e} and von {N}eumann: Recurrence and
  cycles in evolutionary and algorithmic game theory.
\newblock In \emph{International Conference on Web and Internet Economics},
  pages 85--99. Springer, 2019.

\bibitem[Brown and Sandholm(2018)]{brown2018superhuman}
Noam Brown and Tuomas Sandholm.
\newblock Superhuman ai for heads-up no-limit poker: Libratus beats top
  professionals.
\newblock \emph{Science}, 359\penalty0 (6374):\penalty0 418--424, 2018.

\bibitem[Brown et~al.(2020)Brown, Bakhtin, Lerer, and Gong]{2020_rebel_brown}
Noam Brown, Anton Bakhtin, Adam Lerer, and Qucheng Gong.
\newblock Combining deep reinforcement learning and search for
  imperfect-information games.
\newblock In H.~Larochelle, M.~Ranzato, R.~Hadsell, M.~F. Balcan, and H.~Lin,
  editors, \emph{Advances in Neural Information Processing Systems}, volume~33,
  pages 17057--17069. Curran Associates, Inc., 2020.

\bibitem[Bruns(2015)]{bruns2015names}
Bryan~Randolph Bruns.
\newblock Names for games: locating 2$\times$ 2 games.
\newblock \emph{Games}, 6\penalty0 (4):\penalty0 495--520, 2015.

\bibitem[Celli et~al.(2020)Celli, Marchesi, Farina, and Gatti]{CelliMF020}
Andrea Celli, Alberto Marchesi, Gabriele Farina, and Nicola Gatti.
\newblock No-regret learning dynamics for extensive-form correlated
  equilibrium.
\newblock In \emph{Advances in Neural Information Processing Systems 33: Annual
  Conference on Neural Information Processing Systems 2020, NeurIPS 2020,
  December 6-12, 2020, virtual}, 2020.

\bibitem[Cesa-Bianchi and Lugosi(2006)]{cesa2006prediction}
Nicolo Cesa-Bianchi and G{\'a}bor Lugosi.
\newblock \emph{Prediction, Learning, and Games}.
\newblock Cambridge University Press, 2006.

\bibitem[Chang(2007)]{Chang07}
Yu{-}Han Chang.
\newblock No regrets about no-regret.
\newblock \emph{Artif. Intell.}, 171\penalty0 (7):\penalty0 434--439, 2007.

\bibitem[Chang and Kaelbling(2005)]{Chang05}
Yu-Han Chang and Leslie~Pack Kaelbling.
\newblock Hedged learning: Regret-minimization with learning experts.
\newblock In \emph{Proceedings of the 22nd International Conference on Machine
  Learning}, ICML '05, page 121–128. Association for Computing Machinery,
  2005.
\newblock ISBN 1595931805.

\bibitem[Cheung and Piliouras(2019)]{cheung2019vortices}
Yun~Kuen Cheung and Georgios Piliouras.
\newblock Vortices instead of equilibria in minmax optimization: Chaos and
  butterfly effects of online learning in zero-sum games.
\newblock In \emph{COLT}, 2019.

\bibitem[Cheung and Piliouras(2020)]{cheung2020chaos}
Yun~Kuen Cheung and Georgios Piliouras.
\newblock Chaos, extremism and optimism: Volume analysis of learning in games.
\newblock \emph{Advances in Neural Information Processing Systems}, 33, 2020.

\bibitem[Chotibut et~al.(2020{\natexlab{a}})Chotibut, Falniowski, Misiurewicz,
  and Piliouras]{chotibut2020route}
Thiparat Chotibut, Fryderyk Falniowski, Micha{\l} Misiurewicz, and Georgios
  Piliouras.
\newblock The route to chaos in routing games: When is price of anarchy too
  optimistic?
\newblock \emph{Advances in Neural Information Processing Systems}, 33,
  2020{\natexlab{a}}.

\bibitem[Chotibut et~al.(2020{\natexlab{b}})Chotibut, Falniowski, Misiurewicz,
  and Piliouras]{chotibut20family}
Thiparat Chotibut, Fryderyk Falniowski, Michal Misiurewicz, and Georgios
  Piliouras.
\newblock Family of chaotic maps from game theory.
\newblock \emph{Dynamical Systems}, 2020{\natexlab{b}}.
\newblock \doi{10.1080/14689367.2020.1795624}.

\bibitem[Daniely et~al.(2015)Daniely, Gonen, and
  Shalev-Shwartz]{daniely2015strongly}
Amit Daniely, Alon Gonen, and Shai Shalev-Shwartz.
\newblock Strongly adaptive online learning.
\newblock In \emph{International Conference on Machine Learning}, pages
  1405--1411. PMLR, 2015.

\bibitem[Flokas et~al.(2020)Flokas, Vlatakis-Gkaragkounis, Lianeas,
  Mertikopoulos, and Piliouras]{flokas2020no}
Lampros Flokas, Emmanouil-Vasileios Vlatakis-Gkaragkounis, Thanasis Lianeas,
  Panayotis Mertikopoulos, and Georgios Piliouras.
\newblock No-regret learning and mixed {N}ash equilibria: They do not mix.
\newblock In \emph{NeurIPS}, 2020.

\bibitem[Foster and Vohra(1998)]{foster1998asymptotic}
Dean~P Foster and Rakesh~V Vohra.
\newblock Asymptotic calibration.
\newblock \emph{Biometrika}, 85\penalty0 (2):\penalty0 379--390, 1998.

\bibitem[Freund and Schapire(1999)]{Freund99}
Yoav Freund and Robert~E. Schapire.
\newblock Adaptive game playing using multiplicative weights.
\newblock \emph{Games and Economic Behavior}, 29\penalty0 (1-2):\penalty0
  79--103, 1999.

\bibitem[Galstyan(2013)]{Galstyan13}
Aram Galstyan.
\newblock Continuous strategy replicator dynamics for multi-agent q-learning.
\newblock \emph{Auton. Agents Multi Agent Syst.}, 26\penalty0 (1):\penalty0
  37--53, 2013.

\bibitem[Gatti and Restelli(2016)]{GattiR16}
Nicola Gatti and Marcello Restelli.
\newblock Sequence-form and evolutionary dynamics: Realization equivalence to
  agent form and logit dynamics.
\newblock In \emph{Proceedings of the Thirtieth {AAAI} Conference on Artificial
  Intelligence, February 12-17, 2016, Phoenix, Arizona, {USA}}, pages 509--515.
  {AAAI} Press, 2016.

\bibitem[Gordon et~al.(2008)Gordon, Greenwald, and Marks]{gordon2008no}
Geoffrey~J Gordon, Amy Greenwald, and Casey Marks.
\newblock No-regret learning in convex games.
\newblock In \emph{International Conference on Machine Learning (ICML)}, 2008.

\bibitem[Greenwald and Jafari(2003)]{greenwald2003general}
Amy Greenwald and Amir Jafari.
\newblock A general class of no-regret learning algorithms and game-theoretic
  equilibria.
\newblock In \emph{Conference on Learning Theory (COLT)}, 2003.

\bibitem[Guyer and Rapoport(1972)]{guyer19722}
Melvin~J Guyer and Anatol Rapoport.
\newblock 2$\times$ 2 games played once.
\newblock \emph{Journal of Conflict Resolution}, 16\penalty0 (3):\penalty0
  409--431, 1972.

\bibitem[Hannan(1957)]{hannan1957approximation}
James Hannan.
\newblock Approximation to bayes risk in repeated play.
\newblock \emph{Contributions to the Theory of Games}, 3:\penalty0 97--139,
  1957.

\bibitem[Harris(1998)]{harris1998rate}
Christopher Harris.
\newblock On the rate of convergence of continuous-time fictitious play.
\newblock \emph{Games and Economic Behavior}, 22\penalty0 (2):\penalty0
  238--259, 1998.

\bibitem[Hart and Mas-Colell(2000)]{hart2000simple}
Sergiu Hart and Andreu Mas-Colell.
\newblock A simple adaptive procedure leading to correlated equilibrium.
\newblock \emph{Econometrica}, 68\penalty0 (5):\penalty0 1127--1150, 2000.

\bibitem[Hazan and Seshadhri(2007)]{hazan2007adaptive}
Elad Hazan and Comandur Seshadhri.
\newblock Adaptive algorithms for online decision problems.
\newblock In \emph{Electronic colloquium on computational complexity (ECCC)},
  volume~14, 2007.

\bibitem[Hennes et~al.(2020)Hennes, Morrill, Omidshafiei, Munos, Perolat,
  Lanctot, Gruslys, Lespiau, Parmas, Du{\'e}{\~n}ez-Guzm{\'a}n,
  et~al.]{hennes2020neural}
Daniel Hennes, Dustin Morrill, Shayegan Omidshafiei, R{\'e}mi Munos, Julien
  Perolat, Marc Lanctot, Audrunas Gruslys, Jean-Baptiste Lespiau, Paavo Parmas,
  Edgar Du{\'e}{\~n}ez-Guzm{\'a}n, et~al.
\newblock Neural replicator dynamics: Multiagent learning via hedging policy
  gradients.
\newblock In \emph{Proceedings of the 19th International Conference on
  Autonomous Agents and MultiAgent Systems}, pages 492--501, 2020.

\bibitem[Hofbauer(1996)]{Hofbauer96}
Josef Hofbauer.
\newblock Evolutionary dynamics for bimatrix games: A hamiltonian system?
\newblock \emph{Journal of mathematical biology}, 34\penalty0 (5):\penalty0
  675--688, 1996.

\bibitem[Hofbauer and Sigmund(1998)]{Hofbauer98}
Josef Hofbauer and Karl Sigmund.
\newblock \emph{Evolutionary games and population dynamics}.
\newblock Cambridge university press, 1998.

\bibitem[Hong(2008)]{hongpractical}
Sue~Ann Hong.
\newblock A practical no-linear-regret algorithm for convex games.
\newblock 2008.

\bibitem[Kleinberg et~al.(2009)Kleinberg, Piliouras, and
  Tardos]{Kleinberg09multiplicativeupdates}
Robert Kleinberg, Georgios Piliouras, and {\'E}va Tardos.
\newblock Multiplicative updates outperform generic no-regret learning in
  congestion games.
\newblock In \emph{ACM Symposium on Theory of Computing (STOC)}, 2009.

\bibitem[Klos et~al.(2010)Klos, Van~Ahee, and Tuyls]{klos2010evolutionary}
Tomas Klos, Gerrit~Jan Van~Ahee, and Karl Tuyls.
\newblock Evolutionary dynamics of regret minimization.
\newblock In \emph{Joint European Conference on Machine Learning and Knowledge
  Discovery in Databases}, pages 82--96. Springer, 2010.

\bibitem[Kwon and Mertikopoulos(2017)]{kwon2017continuous}
Joon Kwon and Panayotis Mertikopoulos.
\newblock A continuous-time approach to online optimization.
\newblock \emph{Journal of Dynamics \& Games}, 4\penalty0 (2):\penalty0 125,
  2017.

\bibitem[Lanctot et~al.(2019)Lanctot, Lockhart, Lespiau, Zambaldi, Upadhyay,
  P{\'e}rolat, Srinivasan, Timbers, Tuyls, Omidshafiei,
  et~al.]{lanctot2019openspiel}
Marc Lanctot, Edward Lockhart, Jean-Baptiste Lespiau, Vinicius Zambaldi,
  Satyaki Upadhyay, Julien P{\'e}rolat, Sriram Srinivasan, Finbarr Timbers,
  Karl Tuyls, Shayegan Omidshafiei, et~al.
\newblock Openspiel: A framework for reinforcement learning in games.
\newblock \emph{arXiv preprint arXiv:1908.09453}, 2019.

\bibitem[Lehrer(2003)]{lehrer2003wide}
Ehud Lehrer.
\newblock A wide range no-regret theorem.
\newblock \emph{Games and Economic Behavior}, 42\penalty0 (1):\penalty0
  101--115, 2003.

\bibitem[Mertikopoulos et~al.(2018)Mertikopoulos, Papadimitriou, and
  Piliouras]{mertikopoulos2018cycles}
Panayotis Mertikopoulos, Christos Papadimitriou, and Georgios Piliouras.
\newblock Cycles in adversarial regularized learning.
\newblock In \emph{Proceedings of the Twenty-Ninth Annual ACM-SIAM Symposium on
  Discrete Algorithms}, pages 2703--2717. SIAM, 2018.

\bibitem[Mohri and Yang(2014)]{mohri2014conditional}
Mehryar Mohri and Scott Yang.
\newblock Conditional swap regret and conditional correlated equilibrium.
\newblock \emph{Advances in Neural Information Processing Systems},
  27:\penalty0 1314--1322, 2014.

\bibitem[Monnot and Piliouras(2017)]{monnot2017limits}
Barnab{\'e} Monnot and Georgios Piliouras.
\newblock Limits and limitations of no-regret learning in games.
\newblock \emph{Knowledge Eng. Review}, 32:\penalty0 e21, 2017.

\bibitem[Morav{\v{c}}{\'\i}k et~al.(2017)Morav{\v{c}}{\'\i}k, Schmid, Burch,
  Lis{\`y}, Morrill, Bard, Davis, Waugh, Johanson, and
  Bowling]{moravvcik2017deepstack}
Matej Morav{\v{c}}{\'\i}k, Martin Schmid, Neil Burch, Viliam Lis{\`y}, Dustin
  Morrill, Nolan Bard, Trevor Davis, Kevin Waugh, Michael Johanson, and Michael
  Bowling.
\newblock Deepstack: Expert-level artificial intelligence in heads-up no-limit
  poker.
\newblock \emph{Science}, 356\penalty0 (6337):\penalty0 508--513, 2017.

\bibitem[Morrill et~al.(2020)Morrill, D'Orazio, Sarfati, Lanctot, Wright,
  Greenwald, and Bowling]{morrill2020hindsight}
Dustin Morrill, Ryan D'Orazio, Reca Sarfati, Marc Lanctot, James Wright, Amy
  Greenwald, and Michael Bowling.
\newblock Hindsight and sequential rationality of correlated play.
\newblock \emph{arXiv preprint arXiv:2012.05874}, 2020.

\bibitem[Nagarajan et~al.(2018)Nagarajan, Mohamed, and
  Piliouras]{nagarajan2018three}
Sai~Ganesh Nagarajan, Sameh Mohamed, and Georgios Piliouras.
\newblock Three body problems in evolutionary game dynamics: Convergence,
  periodicity and limit cycles.
\newblock In \emph{Proceedings of the 17th International Conference on
  Autonomous Agents and MultiAgent Systems}, pages 685--693. International
  Foundation for Autonomous Agents and Multi-agent Systems, 2018.

\bibitem[Nagarajan et~al.(2020)Nagarajan, Balduzzi, and
  Piliouras]{nagarajan2020chaos}
Sai~Ganesh Nagarajan, David Balduzzi, and Georgios Piliouras.
\newblock From chaos to order: Symmetry and conservation laws in game dynamics.
\newblock In \emph{International Conference on Machine Learning}, pages
  7186--7196. PMLR, 2020.

\bibitem[Nisan et~al.(2007)Nisan, Roughgarden, Tardos, and
  Vazirani]{nisan_agt_book_2007}
Noam Nisan, Tim Roughgarden, Eva Tardos, and Vijay~V. Vazirani.
\newblock \emph{Algorithmic Game Theory}.
\newblock Cambridge University Press, 2007.

\bibitem[Omidshafiei et~al.(2019)Omidshafiei, Papadimitriou, Piliouras, Tuyls,
  Rowland, Lespiau, Czarnecki, Lanctot, Perolat, and
  Munos]{omidshafiei2019alpha}
Shayegan Omidshafiei, Christos Papadimitriou, Georgios Piliouras, Karl Tuyls,
  Mark Rowland, Jean-Baptiste Lespiau, Wojciech~M. Czarnecki, Marc Lanctot,
  Julien Perolat, and Remi Munos.
\newblock $\alpha$-rank: Multi-agent evaluation by evolution.
\newblock \emph{Scientific Reports}, 9\penalty0 (1):\penalty0 9937, Jul 2019.
\newblock ISSN 2045-2322.
\newblock \doi{10.1038/s41598-019-45619-9}.

\bibitem[Palaiopanos et~al.(2017)Palaiopanos, Panageas, and
  Piliouras]{palaiopanos2017multiplicative}
Gerasimos Palaiopanos, Ioannis Panageas, and Georgios Piliouras.
\newblock Multiplicative weights update with constant step-size in congestion
  games: Convergence, limit cycles and chaos.
\newblock In \emph{Advances in Neural Information Processing Systems}, pages
  5872--5882, 2017.

\bibitem[Papadimitriou and Piliouras(2016)]{papadimitriou2016nash}
Christos Papadimitriou and Georgios Piliouras.
\newblock From {N}ash equilibria to chain recurrent sets: Solution concepts and
  topology.
\newblock In \emph{Proceedings of the 2016 ACM Conference on Innovations in
  Theoretical Computer Science}, pages 227--235. ACM, 2016.

\bibitem[Papadimitriou and Piliouras(2018)]{papadimitriou2018nash}
Christos Papadimitriou and Georgios Piliouras.
\newblock From {N}ash equilibria to chain recurrent sets: An algorithmic
  solution concept for game theory.
\newblock \emph{Entropy}, 20\penalty0 (10):\penalty0 782, 2018.

\bibitem[Papadimitriou and Piliouras(2019)]{papadimitriou2019game}
Christos Papadimitriou and Georgios Piliouras.
\newblock Game dynamics as the meaning of a game.
\newblock \emph{ACM SIGecom Exchanges}, 16\penalty0 (2):\penalty0 53--63, 2019.

\bibitem[Piliouras and Shamma(2014)]{piliouras2014optimization}
Georgios Piliouras and Jeff~S Shamma.
\newblock Optimization despite chaos: Convex relaxations to complex limit sets
  via {P}oincar{\'e} recurrence.
\newblock In \emph{Proceedings of the twenty-fifth annual ACM-SIAM symposium on
  Discrete algorithms}, pages 861--873. SIAM, 2014.

\bibitem[Piliouras et~al.(2014)Piliouras, Nieto-Granda, Christensen, and
  Shamma]{PiliourasAAMAS2014}
Georgios Piliouras, Carlos Nieto-Granda, Henrik~I. Christensen, and Jeff~S.
  Shamma.
\newblock Persistent patterns: Multi-agent learning beyond equilibrium and
  utility.
\newblock In \emph{AAMAS}, pages 181--188, 2014.

\bibitem[Rapoport(1966)]{rapoport1966taxonomy}
Anatol Rapoport.
\newblock A taxonomy of 2$\times$ 2 games.
\newblock \emph{General systems}, 11:\penalty0 203--214, 1966.

\bibitem[Robinson and Goforth(2005)]{robinson2005topology}
David Robinson and David Goforth.
\newblock \emph{The topology of the 2x2 games: a new periodic table}, volume~3.
\newblock Psychology Press, 2005.

\bibitem[Roughgarden(2009)]{Roughgarden09}
Tim Roughgarden.
\newblock Intrinsic robustness of the price of anarchy.
\newblock In \emph{Proc. of STOC}, pages 513--522, 2009.

\bibitem[Roughgarden(2016)]{roughgarden2016twenty}
Tim Roughgarden.
\newblock \emph{Twenty lectures on algorithmic game theory}.
\newblock Cambridge University Press, 2016.

\bibitem[Sanders et~al.(2018)Sanders, Farmer, and
  Galla]{GallaFarmer_ScientificReport18}
James B.~T. Sanders, J.~Doyne Farmer, and Tobias Galla.
\newblock The prevalence of chaotic dynamics in games with many players.
\newblock \emph{Scientific Reports}, 8, 2018.

\bibitem[Sandholm(2010)]{sandholm2010state}
Tuomas Sandholm.
\newblock The state of solving large incomplete-information games, and
  application to poker.
\newblock \emph{Ai Magazine}, 31\penalty0 (4):\penalty0 13--32, 2010.

\bibitem[Sato et~al.(2002)Sato, Akiyama, and Farmer]{sato2002chaos}
Yuzuru Sato, Eizo Akiyama, and J~Doyne Farmer.
\newblock Chaos in learning a simple two-person game.
\newblock \emph{Proceedings of the National Academy of Sciences}, 99\penalty0
  (7):\penalty0 4748--4751, 2002.

\bibitem[Schuster and Sigmund(1983)]{schuster1983replicator}
Peter Schuster and Karl Sigmund.
\newblock Replicator dynamics.
\newblock \emph{Journal of theoretical biology}, 100\penalty0 (3):\penalty0
  533--538, 1983.

\bibitem[Shoham and Leyton-Brown(2008)]{shoham2008multiagent}
Yoav Shoham and Kevin Leyton-Brown.
\newblock \emph{Multiagent Systems: Algorithmic, Game-Theoretic, and Logical
  Foundations}.
\newblock Cambridge University Press, 2008.

\bibitem[Skoulakis et~al.(2021)Skoulakis, Fiez, Sim, Piliouras, and
  Ratliff]{skoulakis2021evolutionary}
Stratis Skoulakis, Tanner Fiez, Ryann Sim, Georgios Piliouras, and Lillian
  Ratliff.
\newblock Evolutionary game theory squared: Evolving agents in endogenously
  evolving zero-sum games.
\newblock In \emph{AAAI}, 2021.

\bibitem[Sorin(2009)]{sorin2009exponential}
Sylvain Sorin.
\newblock Exponential weight algorithm in continuous time.
\newblock \emph{Mathematical Programming}, 116\penalty0 (1-2):\penalty0
  513--528, 2009.

\bibitem[Sorin(2020)]{sorin2020replicator}
Sylvain Sorin.
\newblock Replicator dynamics: Old and new.
\newblock \emph{Journal of Dynamics \& Games}, 7\penalty0 (4):\penalty0 365,
  2020.

\bibitem[Srinivasan et~al.(2018)Srinivasan, Lanctot, Zambaldi, P{\'{e}}rolat,
  Tuyls, Munos, and Bowling]{SrinivasanLZPTM18}
Sriram Srinivasan, Marc Lanctot, Vin{\'{\i}}cius~Flores Zambaldi, Julien
  P{\'{e}}rolat, Karl Tuyls, R{\'{e}}mi Munos, and Michael Bowling.
\newblock Actor-critic policy optimization in partially observable multiagent
  environments.
\newblock In \emph{Advances in Neural Information Processing Systems 31: Annual
  Conference on Neural Information Processing Systems 2018, NeurIPS 2018,
  December 3-8, 2018, Montr{\'{e}}al, Canada}, pages 3426--3439, 2018.

\bibitem[Stoltz and Lugosi(2007)]{stoltz2007learning}
Gilles Stoltz and G{\'a}bor Lugosi.
\newblock Learning correlated equilibria in games with compact sets of
  strategies.
\newblock \emph{Games and Economic Behavior}, 59\penalty0 (1):\penalty0
  187--208, 2007.

\bibitem[Taylor and Jonker(1978)]{taylor1978evolutionary}
Peter~D Taylor and Leo~B Jonker.
\newblock Evolutionary stable strategies and game dynamics.
\newblock \emph{Mathematical biosciences}, 40\penalty0 (1-2):\penalty0
  145--156, 1978.

\bibitem[Tuyls and Parsons(2007)]{TuylsP07}
Karl Tuyls and Simon Parsons.
\newblock What evolutionary game theory tells us about multiagent learning.
\newblock \emph{Artif. Intell.}, 171\penalty0 (7):\penalty0 406--416, 2007.

\bibitem[Tuyls et~al.(2003)Tuyls, Verbeeck, and Lenaerts]{Tuyls03}
Karl Tuyls, Katja Verbeeck, and Tom Lenaerts.
\newblock A selection-mutation model for q-learning in multi-agent systems.
\newblock In \emph{The Second International Joint Conference on Autonomous
  Agents {\&} Multiagent Systems, {AAMAS} 2003, July 14-18, 2003, Melbourne,
  Victoria, Australia, Proceedings}, pages 693--700, 2003.

\bibitem[Tuyls et~al.(2006)Tuyls, Hoen, and Vanschoenwinkel]{Tuyls06}
Karl Tuyls, Pieter~Jan't Hoen, and Bram Vanschoenwinkel.
\newblock An evolutionary dynamical analysis of multi-agent learning in
  iterated games.
\newblock \emph{Autonomous Agents and Multi-Agent Systems}, 12\penalty0
  (1):\penalty0 115--153, 2006.

\bibitem[Vlatakis{-}Gkaragkounis et~al.(2020)Vlatakis{-}Gkaragkounis, Flokas,
  Lianeas, Mertikopoulos, and Piliouras]{Vlatakis-Gkaragkounis20}
Emmanouil{-}Vasileios Vlatakis{-}Gkaragkounis, Lampros Flokas, Thanasis
  Lianeas, Panayotis Mertikopoulos, and Georgios Piliouras.
\newblock No-regret learning and mixed {N}ash equilibria: They do not mix.
\newblock In \emph{Advances in Neural Information Processing Systems 33: Annual
  Conference on Neural Information Processing Systems 2020, NeurIPS 2020,
  December 6-12, 2020, virtual}, 2020.

\bibitem[von Neumann and Morgenstern(1944)]{neumann}
John von Neumann and Oskar Morgenstern.
\newblock \emph{Theory of Games and Economic Behavior}.
\newblock Princeton University Press, 1944.

\bibitem[Weibull(1997)]{Weibull}
J{\"o}rgen~W Weibull.
\newblock \emph{Evolutionary game theory}.
\newblock MIT press, 1997.

\bibitem[Wunder et~al.(2010)Wunder, Littman, and Babes]{WunderLB10}
Michael Wunder, Michael~L. Littman, and Monica Babes.
\newblock Classes of multiagent q-learning dynamics with epsilon-greedy
  exploration.
\newblock In \emph{Proceedings of the 27th International Conference on Machine
  Learning (ICML-10), June 21-24, 2010, Haifa, Israel}, pages 1167--1174.
  Omnipress, 2010.

\bibitem[Young(2004)]{young2004strategic}
Hobart~Peyton Young.
\newblock \emph{Strategic learning and its limits}.
\newblock OUP Oxford, 2004.

\bibitem[Zinkevich(2003)]{zinkevich2003online}
Martin Zinkevich.
\newblock Online convex programming and generalized infinitesimal gradient
  ascent.
\newblock In \emph{International Conference on Machine Learning}, 2003.

\bibitem[Zinkevich et~al.(2007)Zinkevich, Johanson, Bowling, and
  Piccione]{zinkevich2007regret}
Martin Zinkevich, Michael Johanson, Michael~H Bowling, and Carmelo Piccione.
\newblock Regret minimization in games with incomplete information.
\newblock In \emph{NIPS}, volume~7, page 1729, 2007.

\end{thebibliography}

\end{document}